\newtheorem{lemma}{Lemma}[section]
\newtheorem{corollary}{Corollary}[section]
\newcommand{\wcost}{\omega}
\newcommand{\id}[1]{\ifmmode\mathit{#1}\else\textit{#1}\fi}
\newcommand{\const}[1]{\ifmmode\mbox{\textc{#1}}\else\textsc{#1}\fi}
\begin{document}
\date{}

\title{The Parallel Persistent Memory Model \footnote{This paper is the full version of a paper at SPAA 2018 with the same name.}}

\author{
  Guy E. Blelloch\\Carnegie Mellon University\\guyb@cs.cmu.edu \and
  Phillip B. Gibbons\\Carnegie Mellon University\\gibbons@cs.cmu.edu \and
  Yan Gu\\Carnegie Mellon University\\yan.gu@cs.cmu.edu \and
  Charles McGuffey\\Carnegie Mellon University\\cmcguffe@cs.cmu.edu \and
  Julian Shun\\MIT CSAIL\\jshun@mit.edu}

\maketitle



\newcommand{\locmem}{ephemeral memory}
\newcommand{\permem}{persistent memory}
\newcommand{\ourmodel}{PM}
\newcommand{\ourparallelmodel}{Parallel-PM}
\newcommand{\faultprob}{f}
\newcommand{\idem}{idempotent}
\newcommand{\Wf}{W\!\!_f}
\newcommand{\war}{write-after-read}

\begin{abstract}

We consider a parallel computational model, the \emph{Parallel
  Persistent Memory model}, comprised of $P$ processors, each with a
fast local ephemeral memory of limited size, and sharing a large
persistent memory.  The model allows for each processor to fault at
any time (with bounded probability), and possibly restart.  When a
processor faults, all of its state and local ephemeral memory is lost,
but the persistent memory remains.  This model is motivated by
upcoming non-volatile memories that are nearly as fast as existing
random access memory, are accessible at the granularity of cache
lines, and have the capability of surviving power outages.  It is
further motivated by the observation that in large parallel systems,
failure of processors and their caches is not unusual.

We present several results for the model, using an approach that
breaks a computation into \emph{capsules}, each of which can be safely
run multiple times.  For the single-processor version we describe how
to simulate any program in the RAM, the external memory model, or the
ideal cache model with an expected constant factor overhead.  For the
multiprocessor version we describe how to efficiently implement a
work-stealing scheduler within the model such that it handles both
soft faults, with a processor restarting, and hard faults, with a
processor permanently failing.  For any multithreaded fork-join
computation that is race free, \war{} conflict free and has $W$ work,
$D$ depth, and $C$ maximum capsule work in the absence of faults, the
scheduler guarantees a time bound on the model of
$O\left(\frac{W}{P_A} + \frac{DP}{P_A}
\left\lceil\log_{1/(C\faultprob)} W\right\rceil\right)$ in
expectation, where $P$ is the maximum number of processors, $P_A$ is
the average number, and $\faultprob{} \leq 1/(2C)$ is the probability
a processor faults between successive persistent memory accesses.
Within the model, and using the proposed methods, we develop efficient
algorithms for parallel prefix sums, merging, sorting, and matrix
multiply.

\end{abstract}

\section{Introduction}

In this paper, we consider a parallel computational model, the
\emph{Parallel Persistent Memory (\ourparallelmodel) model}, 
that consists of $P$ processors, each with a fast local
ephemeral memory of limited size $M$, and sharing a large slower
persistent memory.  As in the external memory
model~\cite{ABP01,Arge08}, each processor runs a standard instruction
set from its \locmem{} and has instructions for transferring blocks of
size $B$ to and from the persistent memory.  The cost of an algorithm
is calculated based on the number of such transfers.  A key
difference, however, is that the model allows for individual processors to
fault at any time.  If a processor faults, all of its processor
state and local ephemeral memory is lost, but the persistent memory
remains.  We consider both the case
where the processor restarts (soft faults) and the case where it never restarts (hard faults).  

The model is motivated by two complimentary trends.  Firstly, it is
motivated by upcoming non-volatile memories that are nearly as fast as
existing random access memory (DRAM), are accessed via loads and stores at
the granularity of cache lines, have large capacity (more bits per
unit area than existing random access memory), and have the capability
of surviving power outages and other failures without losing data (the
memory is \emph{non-volatile} or \emph{persistent}).  For example,
Intel's 3D-Xpoint memory technology, currently available as an SSD, is
scheduled to be available as such a random access memory in 2019.
While such memories are expected to be the pervasive type of 
memory~\cite{Nawab2017,Meena14,Yole13}, 
each processor will still have a small amount of cache and other fast memory
implemented with traditional \emph{volatile} memory technologies (SRAM or DRAM).
Secondly, it is motivated by the fact that in current and upcoming
large parallel systems the probability that an individual processor
faults is not negligible, requiring some form of fault
tolerance~\cite{Cappello14}.  

In this paper, we first consider a single processor version of the
model, the \emph{\ourmodel{} model}, and give conditions under which programs are robust against
faults.  In particular, we identify that breaking a computation into
``capsules'' that have no \war{} conflicts (writing a location
that was read earlier within the same capsule) is sufficient, when combined with our
approach to restarting faulting capsules from their beginning,
due to its idempotent behavior.  We then show that RAM
algorithms, external memory algorithms, and cache-oblivious
algorithms~\cite{Frigo99} can all be implemented asymptotically efficiently on the model.  This involves a simulation that breaks
the computations into capsules and buffers writes, which are handled
in the next capsule.  However, the simulation is likely not practical.  We therefore
consider a programming methodology in which the algorithm designer can
identify capsule boundaries, and ensure that the capsules are free of
\war{} conflicts.

We then consider our multiprocessor counterpart, the
\ourparallelmodel{} described above, and consider conditions under which
programs are correct when the processors are interacting through the
shared memory.  We identify that if capsules are free of \war{}
conflicts and atomic, in a way that we define, then each capsule acts as if
it ran once despite many possible restarts.  Furthermore we identify
that a compare-and-swap (CAS) instruction is not safe in the
\ourmodel{} model, but that a compare-and-modify (CAM), which
does not see its result, is safe.

The most significant result in the paper is a work-stealing scheduler
that can be used on the \ourparallelmodel{}.  Our scheduler is based on
the scheduler of Arora, Blumofe, and Plaxton (ABP)~\cite{ABP01}.   The key
challenges in adopting it to handle faults are (i) modifying it so that it only
uses CAMs instead of CASs, (ii) ensuring that each stolen task gets executed 
despite faults, (iii) properly handling hard faults, and (iv) ensuring its
efficiency in the presence of soft or hard faults.   Without a CAS, and to avoid blocking,
handling faults requires that processors help the processor that is part way through a steal.
Handling hard faults further requires being able to steal a thread from a processor that
was part way through executing the thread. 

Based on the scheduler we show that any race-free, \war{} conflict
free multithreaded fork-join program with work $W$, depth $D$, and maximum capsule work $C$ will run in expected
time: 
\[O\left(\frac{W}{P_A} + D\left(\frac{P}{P_A}\right) \left\lceil\log_{1/(C\faultprob)} W\right\rceil\right).\]  Here $P$ is the
maximum number of processors, $P_A$ the average number, and
$\faultprob \leq 1/(2C)$ an upper bound on the probability 
a processor faults between successive persistent memory accesses.   This bound differs from the ABP result only in the $\log_{1/(C\faultprob)} W$ factor
on the depth term, due to faults along the critical path.

Finally, we present \ourparallelmodel{} algorithms for prefix-sums,
merging, sorting, and matrix multiply that satisfy the required
conditions.  The results for prefix-sums, merging, and sorting are
work-optimal, matching lower bounds for the external memory model.
Importantly, these algorithms are only slight modifications from
known parallel I/O efficient algorithms~\cite{blelloch2010low}.
The main change is ensuring that they write their partial results
to a separate location from where they read them so that they avoid
\war{} conflicts.

\myparagraph{Related Work} Because of its importance to future computing,
the computer systems community (including
companies such as Intel and HP) have been hard at work trying to solve
the issues arising when fast nonvolatile memories (such as caches) sit
between the processor and a large persistent
memory~\cite{Bhandari2016, deKruijf2012, Intel-isa-extensions2017,
  Izraelevitz2016, Hsu2017, Chakrabarti2014, Kolli2016, Liu2017,
  Intel-nvm-library, Colburn2011, Volos2011, Pelley2014, Chauhan2016,
  Kim2016, Memaripour2017, X10Failure17, guerraoui2004robust, lee2017wort, chen2015persistent, berryhill2016robust, Friedman18}.  Standard caches are
write-back, meaning that a write to a memory location will make it
only as far as the cache, until at some later point the updated cache
line gets flushed out to the persistent memory.  Thus, when a
processor crashes, some writes (those still in the cache) are lost
while other writes are not.  The above prior work includes schemes for
encapsulating updates to persistent memory in either
\emph{transactions} or \emph{lock-protected failure atomic sections}
and using various forms of (undo, redo, resume) logging to ensure
correct recovery.  
The intermittent computing community works on the related problem of small systems that will crash due to power loss~\cite{lucia2015simpler, colin2018termination, balsamo2015hibernus, colin2016chain, hester2017timely, van2016intermittent, buettner2011dewdrop, maeng2017alpaca}.
Lucia and Ransford~\cite{lucia2015simpler} describe how faults and restarting lead to errors that will not occur in a faultless setting. 
Several of these works~\cite{lucia2015simpler, colin2018termination, colin2016chain, van2016intermittent, maeng2017alpaca} break code into small chunks, referred to as \emph{tasks}, and work to ensure progress at that granularity. 
Avoiding \war{} conflicts is often the key step towards ensuring that tasks
are idempotent.
Because these works target intermittent computing systems, which are designed to be small and energy efficient, they do not consider multithreaded programs, concurrency, or synchronization. 
In contrast to this flurry of recent systems
research, there is
relatively little work from the theory/algorithms community
aimed at this setting~\cite{David2017, Izraelevitz2016disc,
  Izraelevitz2016spaa, Nawab2017}.  David et al.~\cite{David2017}
presents concurrent data structures (e.g., for skip-lists) that avoid
the overheads of logging.  Izraelevitz et
al.~\cite{Izraelevitz2016disc, Izraelevitz2016spaa} presents efficient
techniques for ensuring that the data in persistent memory captures a
consistent cut in the happens-before graph of the program's execution,
via the explicit use of instructions that flush cache lines to
persistent memory (such as Intel's {\tt CLFLUSH}
instruction~\cite{Intel-isa-extensions2017}).  Nawab et
al.~\cite{Nawab2017} defines \emph{periodically persistent} data
structures, which combine mechanisms for tracking proper write
ordering with a periodic flush of all cache lines to persistent
memory.
None of this work defines an algorithmic cost model, presents
a work-stealing scheduler, or provides the provable bounds in this paper.

There is a very large body of research on models and algorithms where
processors and/or memory can fault, but to our knowledge, none of it
(other than the works mentioned above) fits the setting we study with
its two classes of memory (local volatile and shared nonvolatile).
Papers focusing on memory faults (e.g., \cite{FinocchiI04, Afek92,
  Chlebus1994} among a long list of such papers) consider models in
which individual memory locations can fault.  Papers focusing on
processor faults (e.g., \cite{Aumann1991} among an even longer list of
such papers) either do not consider memory faults or assume that all
memory is volatile.

\myparagraph{Write-back Caches} Note that while the \ourmodel{} models are defined using
explicit external read and external write instructions, they are also 
appropriate for modeling the (write-back) cache setting described above,
as follows.  Explicit instructions, such as {\tt CLFLUSH}, are used to
ensure that an external write indeed writes to the persistent memory. 
Writes that are intended to be solely in local memory, on the other hand,
could end up being evicted from the cache and written back to persistent
memory.  However, for programs that are race-free and well-formed, as 
defined in Section~\ref{sec-single-proc-robust}, our approach
preserves its correctness properties.




\section{The Persistent Memory Model}

\myparagraph{Single Processor}
We assume a two-layer memory model with a small fast \emph{\locmem{}}
of size $M$ (in words) and a large slower \emph{\permem{}} of size
$M_p \gg M$.  The two memories are partitioned into blocks of $B$
words.  Instructions include standard RAM instructions that work on
single words within the processor registers (a processor has $O(1)$
registers) and \locmem{}, as well as two \emph{(external) memory transfer} instructions:
an \emph{external read} that transfers a block from \permem{} into
\locmem{}, and an \emph{external write} that transfers a block from
\locmem{} to \permem{}.  We assume that the words contain $\Theta(\log
M_p)$ bits.  These assumptions are effectively the same as in the
$(M,B)$ external memory model~\cite{AggarwalV88}.

We further assume that the processor can \emph{fault} between any two
instructions,\footnote{For simplicity, we assume that individual instructions 
are atomic.} and that after faulting, the processor \emph{restarts}.
On restart, the \locmem{} and processor registers can be in an
arbitrary state, but the \permem{} is in the same state as
immediately before the fault. 
To enable forward progress, we assume there is a fixed memory location
in the persistent memory referred to as the \emph{restart pointer
  location}, containing a \emph{restart pointer}.  On restart,
the processor loads the restart pointer from the persistent memory
into a register, which we refer to as the \emph{base} register, and then
loads the location pointed to by the restart pointer (the
\emph{restart instruction}) and jumps to that location, i.e., sets it
as the program counter.  The processor then proceeds as normal.
As it executes, the processor can update the restart pointer to 
be the current program counter, at the cost of an external write,
in order to limit how far the processor will fall back on a fault.
We refer to this model as the
(single processor) $(M,B)$ persistent memory (\ourmodel{}) model.



The basic model can be parameterized based on the cost of the various
instructions.  
Throughout this paper, and in the spirit of the
external memory model~\cite{AggarwalV88} and the ideal cache
model~\cite{Frigo99}, we assume that external reads and writes take
unit cost and all other instructions have no cost.\footnote{The results in this paper 
can be readily extended to a setting (an \emph{Asymmetric \ourmodel{} model}) where external writes are more costly than
external reads, as in prior work on algorithms for NVM~\cite{Carsonetal15,
BBFGGMS18implicit, blelloch2016efficient, BFGGS15, BBFGGMS16, jacob17};
for simplicity, we study here the simpler \ourmodel{}
model because such asymmetry is not the focus of this paper.}
We further assume that the program is constant size and that either the
program is loaded from \permem{} into \locmem{} at restart, or that
there is a small cache for the program itself, which is also lost in
the case of a fault.  Thus, faulting and restarting (loading the base
register and jumping to the restart instruction, and fetching the
code) takes a constant number of external memory transfers.

The processor's computation can be viewed as partitioned into
\emph{capsules}: each capsule corresponds to a maximally contiguous
sequence of instructions running on the processor while the restart
pointer location contains the same restart pointer.
The last instruction of every capsule is therefore a write of
a new restart pointer.  We refer to writing a new restart pointer as
\emph{installing} a capsule.  We assume that the next instructions
after this write, which are at the start of the next capsule, do
exactly the same as a restart does---i.e., load the restart pointer
into the base pointer, load the start instruction pointed to by base
pointer, and jump to it.  The capsule is \emph{active} while its
restart pointer is installed.  Whenever the processor faults, it will
restart using the restart pointer of the active capsule, i.e., the
capsule will be restarted as it was the first time.  We define the
\emph{capsule work} to be the number of external reads and writes in the
capsule, assuming no faults.  Note that, akin to checkpointing,
there is a tension between the desire for high work capsules that
amortize the capsule start/restart overheads and the desire for low
work capsules that lessen the repeated work on restart.

In our analysis, we consider two ways to count the total cost.  We say that
the \emph{faultless work} (or \emph{work}), $W$, is the number of
external memory transfers assuming no faults.  We say that the
\emph{total work} (or \emph{fault-tolerant work}), $\Wf$, is the
number of external transfers for an actual run including all transfers
due to having to restart.  $\Wf$ can only be defined with respect to
an assumed fault model.  In this paper, for analyzing costs, we assume
that the probability of faulting by a processor between any two
consecutive non-zero cost instructions (i.e., external reads or
writes) is bounded by $\faultprob \leq 1/2$, and that faults are
independent events.  We will specify $\faultprob$ to ensure that
a maximum work capsule fails with at most constant probability.

We assume throughout the paper that instructions are
deterministic, i.e., each instruction is a function from the values of
registers and memory locations that it reads to the registers and memory locations
that it writes.

\begin{figure}
  \centering
\includegraphics[width=0.7\columnwidth]{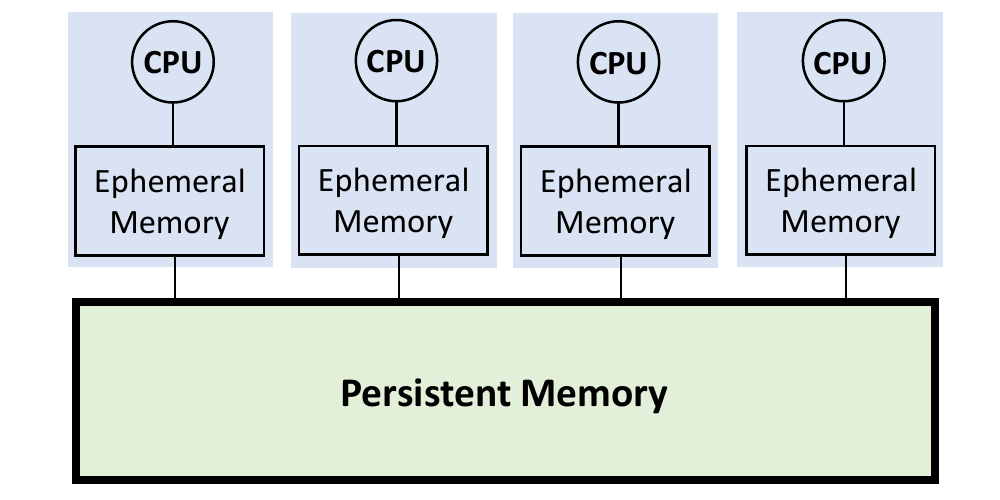}
\vspace{-2pt}
\caption{\label{fig:PPM}
  The Parallel Persistent Memory Model}
\end{figure}

\myparagraph{Multiple Processors} The \ourparallelmodel{} consists of
$P$ processors each with its own fast local \locmem{} of size $M$, but
sharing a single slower \permem{} of size $M_p$ (see Figure~\ref{fig:PPM}).  
Each processor works
as in the single processor \ourmodel{}, and the processors run
asynchronously.  Any processor can fault between two of its
instructions, and each has its own restart pointer location in the
persistent memory.  When a processor faults, the processor restarts
like it would in the single processor case.  We refer to this as a
\emph{soft fault}.  We also allow for a \emph{hard fault}, in which
the processor faults and then never restarts---we say that such a
processor is \emph{dead}.  We assume that other processors can detect
when a processor has hard faulted using a liveness oracle
\texttt{isLive(procId)}.  We allow for concurrent reads and writes to
the shared \permem, and assume that all instructions involving the
\permem{} are sequentially consistent.

The \ourparallelmodel{} includes a compare-and-swap (CAS) instruction.
The CAS takes a pointer to a location of a word in \permem{} and two
values in registers.  If the first value equals the value at the
location, it atomically swaps the value at the location and the value
in the second register, and the CAS is \emph{successful}.
Otherwise, no swap occurs and the CAS is \emph{unsuccessful}.
Even though the \permem{} is organized in
blocks, we assume that the CAS is on a single word within a block.


The (faultless) work
$W$ and the total work $\Wf$ are as defined in the sequential model
but summed across all processors.  The (faultless) \emph{time} $T$
(and the \emph{fault-tolerant} or \emph{total time} $T_f$) is the
maximum faultless work (total work, respectively) done by any
one processor.  Without faults, this is effectively the same as the
parallel external memory model~\cite{Arge08}.
In analyzing correctness, we allow for arbitrary delays between any two
successive instructions by a processor.
However, for our time bounds and our results on work stealing 
we make similar assumptions as made in~\cite{ABP01}.  These
are described in Section~\ref{sec:worksteal}.

\myparagraph{Multithreaded Computations}  Our aim is to support
multithreaded dynamic parallelism layered on top of
the \ourparallelmodel{}.  We consider the same form of multithreaded
computations as considered by Arora, Blumofe, and Plaxton
(ABP)~\cite{ABP01}.  In the model, a computation starts as a single
thread.  On each step, a thread can run an instruction, fork a new
thread, or join with another thread.  Such a computation can be viewed
as a DAG, with an edge between instructions, a pair of out-edges at a
fork, and a pair of in-edges at a join.  As with ABP, we assume that
each node in the DAG has out-degree at most two.  In the
\emph{multithreaded model}, the (faultless) work $W$ is the work
summed across all threads in the absence of faults, and the total work
$\Wf$ is the summed work including faults.  In addition, we define the (faultless)
\emph{depth} $D$ (and the \emph{fault-tolerant} or \emph{total depth}
$D_f$) to be the maximum work (total work, respectively) along any
path in the DAG.  The goal of our work-stealing scheduler
(Section~\ref{sec:worksteal}) is to efficiently map computations in
the multithreaded model into the \ourparallelmodel{}.

\section{Robustness on a Single Processor}\label{sec-single-proc-robust}

In this section, we discuss how to run programs on the single processor
\ourmodel{} model so that they complete the computation properly.

Our goal is to structure the computation and its partitioning into capsules 
in a way that is sufficient to ensure correctness regardless of faults.
Specifically, our goal is that each capsule is a sequence of instructions that
will look from an external view like it has been run exactly once
after its completion, regardless of the number of times it was
partially run due to faults and restarts.  We say that a capsule is
\emph{\idem} if, when it completes, regardless of how many times it faults and restarts, 
all modifications to the \permem{} are consistent with
running once from the initial state (i.e., the state of the \permem{},
the \locmem{}, and the registers at the start of the capsule).

There are various means to guarantee that a capsule is \idem{}, and
here we consider a natural one.  We say that a capsule has a
\emph{\war{} conflict} if the first transfer from a block in
\permem{} is a read (called an ``exposed'' read),
and later there is a write to the same block.
Avoiding such a conflict is important because if a location in the
\permem{} is read and later written, then on restart the capsule would
see the new value instead of the old one.  We say a capsule is
\emph{well-formed} if the first access to each word in the registers
or \locmem{} is a write.  Being well-formed means that a capsule will not
read the undefined values from registers and \locmem{} after a fault.
We say that a capsule is \emph{\war{} conflict free} if it is
well-formed and had no \war{} conflicts.

\begin{theorem}
  \label{theorem:rwcf}
With a single processor, all \war{} conflict free capsules are \idem{}.
\end{theorem}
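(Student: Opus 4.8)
The plan is to show that a \war{} conflict free capsule, no matter how many times it faults and restarts, leaves the \permem{} in exactly the state it would have had after a single clean execution. I would argue this by tracking what the processor can observe and what it can modify across a sequence of partial runs followed by one final complete run. The key structural facts to exploit are the two defining properties: \emph{well-formedness} (the first access to each register or \locmem{} word is a write) and the \emph{absence of \war{} conflicts} (for any block in \permem{}, the first transfer is not an exposed read that is later overwritten). Together these guarantee that each run of the capsule is insensitive to the garbage left behind in \locmem{} and registers by a previous aborted run, and that the persistent state seen by the final run agrees with the persistent state seen by a hypothetical single run from the initial state.

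First I would set up the framework: a run of the capsule is the instruction sequence executed from the restart instruction until either a fault occurs or the capsule installs its successor (the final run). Consider any actual execution consisting of partial runs $R_1, R_2, \dots, R_{j-1}$ (each terminated by a fault) followed by the completing run $R_j$. Let $\sigma_0$ denote the initial state of the \permem{} at the moment the capsule became active. The central claim is that $R_j$ produces the same sequence of reads and writes as a single faultless run starting from $\sigma_0$, and therefore ends with \permem{} in the single-run final state. I would prove this by an induction on instructions within $R_j$, maintaining the invariant that at each step the contents of every register, every \locmem{} word, and every \permem{} block that the computation depends on are identical to their values at the corresponding step of the clean single run.

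The two properties feed the inductive step directly. For registers and \locmem{}: well-formedness ensures that before any instruction reads such a location, that location has already been written within $R_j$, so whatever arbitrary garbage the faults of $R_1,\dots,R_{j-1}$ left there is overwritten before it can influence the computation; hence the local state seen by $R_j$ matches the clean run. For \permem{}: since there are no \war{} conflicts, the first transfer touching any block is either a write, or a read of a block that is never subsequently written within the capsule. In the first case the block's value is established by $R_j$ itself exactly as in the clean run. In the second case the block is read-only for the entire capsule, so the partial runs could not have altered it---$R_j$ reads precisely the value present in $\sigma_0$, which is what the clean run reads too. Because instructions are deterministic (assumed in the excerpt), matching inputs force matching outputs, closing the induction and showing $R_j$'s writes to \permem{} coincide with the single clean run's writes.

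The main obstacle to handle carefully is the subtlety of \emph{partial} writes: a write to some block may have occurred during an aborted run $R_i$ and thus already be reflected in $\sigma$ when $R_j$ begins, meaning the persistent state at the start of $R_j$ is \emph{not} literally $\sigma_0$. The \war{}-freedom condition is exactly what neutralizes this danger, and I would make that argument explicit rather than gloss over it: any block written by some $R_i$ must, within the capsule's access pattern, have its \emph{first} transfer be a write (otherwise it would be an exposed read later overwritten, i.e., a \war{} conflict). Consequently $R_j$ also writes that block before it ever reads it, so the stale value deposited by $R_i$ is overwritten before it can be observed---the computation never depends on the difference between $\sigma$ and $\sigma_0$ on such blocks. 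Verifying that this covers every block that a partial run could have dirtied, and that no read in $R_j$ can sneak in ahead of the corresponding overwrite, is the delicate bookkeeping at the heart of the proof; everything else follows from determinism and the definitions.
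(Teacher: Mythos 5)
Your proposal is correct and follows essentially the same route as the paper's own proof: well-formedness neutralizes the garbage in registers and ephemeral memory after a restart, and \war{}-freedom guarantees that every persistent block dirtied by an aborted run has a write as its first access in the capsule (so the stale value is never observed), while exposed-read blocks are never written and hence retain their initial values; determinism then forces every run to replay identically. Your version just makes explicit the induction over instructions and the case analysis on a block's first access that the paper compresses into the sentence ``the exposed read locations are disjoint from the write locations.''
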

\begin{proof}
On restarting, the capsule cannot read any \permem{} written by
previous faults on the capsule, because we restart from the beginning
of the capsule and the exposed read locations are disjoint
from the write locations. Moreover, the capsule cannot read the state of the \locmem{}
because a write is required before a read (well-formedness).
Therefore, the first time a
capsule runs and every time a capsule restarts it has the same visible
state, and because the processor instructions are deterministic, will
repeat exactly the same instructions with the same results.
\end{proof}

An immediate question is whether a standard processing model such as
the RAM can be simulated efficiently on the \ourmodel{} model.  The
following theorem shows that the \ourmodel{} can simulate
the RAM model with only constant overheads.

\begin{theorem}\label{theorem:RAM}
Any RAM computation taking $t$ time can be simulated
on the $(O(1),B)$ \ourmodel{} model with $\faultprob \leq 1/c$ for
some constant $c \geq 2$, using $O(t)$ expected total work,
for any $B$ ($B=1$ is sufficient).
\end{theorem}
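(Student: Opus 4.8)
The plan is to simulate the RAM computation by breaking it into a sequence of capsules, each performing a constant number of RAM instructions, and to ensure each capsule is \war{} conflict free so that Theorem~\ref{theorem:rwcf} guarantees idempotence. The central difficulty is that a single RAM step may write to a memory location that it (or a later step in the same capsule) reads, creating a \war{} conflict; moreover, the natural way to advance a program counter — reading the old PC and writing the new one — is itself a read-then-write on persistent state. The key idea I would use is \emph{write buffering}: within a capsule we never overwrite a persistent location that has been exposed-read; instead we record the intended writes in a separate buffer (in \permem{}, at fresh locations disjoint from the read set) and install them only at the transition to the next capsule. Since each simulated RAM step touches $O(1)$ words, the buffer has constant size and the whole capsule uses $O(1)$ external transfers.

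First I would fix the capsule structure. Each capsule simulates one RAM instruction. It begins by loading the program counter and the relevant operands from \permem{} into freshly-written \locmem{} locations (satisfying well-formedness, since the first access to each register/\locmem{} word is a write). It computes the result, writes the new value and the new PC to buffer slots that are at addresses distinct from every exposed read, and then its final act is to install the next capsule by writing the new restart pointer. The next capsule's first job is to commit the buffered value to its true home and then proceed. Because the commit writes a value already fully determined by the previous (now-completed, hence idempotent) capsule, recomputing the commit on a restart reproduces the identical write, so the commit capsule is itself idempotent. I would verify that in every capsule the set of exposed-read blocks is disjoint from the set of written blocks, which is exactly the \war{}-conflict-free condition, and appeal to Theorem~\ref{theorem:rwcf} to conclude each capsule is \idem{}. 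Chaining idempotent capsules then faithfully reproduces the RAM execution.

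For the cost analysis, I would argue as follows. In a faultless run each of the $t$ RAM steps is simulated by $O(1)$ capsules, each of $O(1)$ capsule work, giving faultless work $O(t)$. For the total work $\Wf$, the model assumes faults occur independently with probability at most $\faultprob \le 1/c$ between successive nonzero-cost (external) instructions. A capsule with $O(1)$ external transfers therefore completes without a fault with at least constant probability, so the expected number of attempts to push any single capsule past its installation point is $O(1)$ (a geometric random variable with constant success probability). Since a failed attempt costs only $O(1)$ re-executed transfers plus the $O(1)$ restart overhead (reloading the base register, fetching the restart instruction, and the code), the expected total work per capsule is $O(1)$, and summing over the $O(t)$ capsules yields $O(t)$ expected total work by linearity of expectation. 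The choice $B=1$ suffices because every capsule manipulates only a constant number of individual words, and a constant $M = O(1)$ suffices because the buffer and working set are of constant size.

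The main obstacle I expect is handling the program-counter / restart-pointer update cleanly: advancing control flow inherently reads the current position and then writes a successor, which looks like a \war{} conflict if done in place. The resolution is precisely the deferred-commit discipline above — keep the authoritative persistent state immutable within a capsule and realize all updates (including the PC) through buffer slots installed atomically at the capsule boundary via the restart-pointer write — so that no block is both exposed-read and written in the same capsule. Once this invariant is established, correctness follows from Theorem~\ref{theorem:rwcf} and the cost bound follows from the geometric-restart argument, with no heavy calculation required.
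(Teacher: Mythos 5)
Your proposal is correct and matches the paper's proof in all essentials: one constant-work capsule per simulated RAM instruction, \war{}-conflict freedom obtained by directing every persistent write to locations disjoint from the capsule's exposed reads, and a constant-expected-retries (geometric) argument giving $O(t)$ expected total work. The only cosmetic difference is that the paper realizes the disjointness by ping-ponging between two persistent copies of the register set (writing simulated memory directly, since each instruction performs a single read or write), whereas you defer all updates through a buffer committed at the start of the next capsule---the same device the paper itself uses for its external-memory simulation.
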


\begin{proof} 
The simulation keeps all simulated memory in the \permem{} one word
per block.  It also keeps two copies of the registers in \permem{},
and the simulation swaps between the two.  At the end of a capsule on
one copy, it sets the restart pointer to a location just before the
other copy.  The code at that location, run at the start of the next
capsule, copies the other copy of the registers into the current copy,
and then simulates one instruction given by the program counter, by
reading from the other copy of the registers, and writing to the
current copy of registers (typically just a single register).  The
instruction might involve a read or write to the simulated memory, and
an update of the program counter either to the next simulated
instruction, or if a jump, to some other instruction.  Once the
instruction is done, the other copy of the registers is installed.
This repeats.  The capsules are \war{} conflict free because they only
read from one set of registers and write to the other, and the
simulated memory instructions do a single read or write.  Every
simulated step takes a constant number of reads and writes to the
\permem{}.  Since the capsule work is constant, it can be bounded
by some $k$.  If $k \faultprob \leq 1/2$ then the probability
of a capsule faulting is bounded by $1/2$ and therefore the expected
total work on any capsule is upper bounded by $2k$.  Setting $c = 2k$
gives the stated bounds.
\end{proof}

Although the RAM simulation is linear in the number of instructions,
our goal is to create algorithms that require asymptotically fewer
reads and writes to \permem{}.  We therefore consider efficiently
simulating external memory algorithms in the model.

\begin{theorem}\label{theorem:EM}
  Any $(M,B)$ external memory computation with $t$ external accesses
  can be simulated on the $(O(M),B)$ \ourmodel{} model with
  $\faultprob \leq B/(cM)$ for some constant $c \geq 2$, using $O(t)$
  expected total work.
\end{theorem}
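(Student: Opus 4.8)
The plan is to follow the template of the RAM simulation (Theorem~\ref{theorem:RAM}), but to amortize the per-capsule checkpointing overhead by letting each capsule simulate a whole \emph{batch} of $\Theta(M/B)$ external accesses of the simulated computation rather than a single step. In \permem{} I would lay out (i) the simulated slow memory, one simulated block per block; (ii) two copies of the simulated fast-memory image (of $M$ words each); and (iii) two ``write buffers''. The two fast-memory images, and the two write buffers, are used alternately by successive capsules, exactly as the two register copies alternate in the RAM simulation, so that each capsule reads from one copy and writes to the other.

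A capsule for batch $i$ would proceed as follows. It first \emph{flushes} the write buffer produced by batch $i-1$, i.e., it blindly writes each buffered (block address, value) pair into the simulated slow memory; it then loads the ``in'' copy of the fast-memory image into \locmem{}; it simulates the $\Theta(M/B)$ external accesses of the batch, serving external reads from the simulated slow memory (or from the in-\locmem{} image and the current buffer when the block has already been touched in the batch, so that a write-then-read inside a batch returns the written value) and recording each external write only in a \locmem{}-resident write buffer; and finally it writes the updated fast-memory image into the ``out'' copy and the new buffer into the ``out'' buffer before installing the next capsule. Since all RAM-level computation on the image is free, this faithfully reproduces the batch. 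The key point is that the capsule never reads a block of \permem{} that it later writes: the only \permem{} writes are the flush of the previous buffer (done before any external read, so those blocks are written before they are read), the fresh ``out'' image, and the fresh ``out'' buffer, all disjoint from everything read. Together with writing each \locmem{} word before reading it, each capsule is thus \war{} conflict free and well-formed, so by Theorem~\ref{theorem:rwcf} it is \idem{}, and the simulation is correct regardless of faults.

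For the cost, each capsule performs $\Theta(M/B)$ useful external accesses plus $O(M/B)$ overhead accesses (loading and storing the image and the buffer, and flushing), so its capsule work is $C = \Theta(M/B)$; summed over the $O(tB/M)$ capsules the faultless total work is $O(t)$. Choosing the constant $c$ so that $C\faultprob \le 1/2$, which is possible because $\faultprob \le B/(cM)$ and $C = O(M/B)$, bounds the probability that any capsule faults by $1/2$, so the expected number of executions of each capsule is at most $2$ and the expected total work is $O(t)$.

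I expect the main obstacle to be the \war{}-freedom constraint: unlike the RAM simulation, an external-memory algorithm may read a block and later write the same block (or write then read it) within what I want to be a single capsule, which would introduce a \war{} conflict if translated directly. The resolution above is to never write the simulated slow memory during the capsule that generates the write, but instead to defer all such writes to the \emph{start} of the following capsule, where they become blind, and hence idempotent, writes; double-buffering both the fast-memory image and the write buffer then removes the remaining conflicts, at only a constant-factor cost because each batch is large enough ($\Theta(M/B)$ accesses) to amortize the $O(M/B)$-word checkpoint. A minor point to verify is that \locmem{} of size $O(M)$ suffices to hold the fast-memory image and the $O(M)$-word write buffer simultaneously.
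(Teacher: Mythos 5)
Your proposal is correct and follows essentially the same approach as the paper's proof: batch $\Theta(M/B)$ simulated external accesses per capsule, double-buffer the simulated fast-memory image, and defer all external writes through a buffer so that they are applied only after the capsule that generated them, which is exactly how the paper removes the \war{} conflicts. The only cosmetic difference is that the paper applies the buffered writes in a separate \emph{commit} capsule rather than as a blind-write prefix of the next simulation capsule; both variants are \war{} conflict free and yield the same $O(t)$ expected total work.
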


\begin{proof}
 The simulation consists of rounds each of which has a
 \emph{simulation} capsule and a \emph{commit} capsule.  It maps the
 \locmem{} of the source program to part of the \locmem{}, and the
 external memory to the \permem.  It keeps the registers in the
 \locmem{}, and keeps space for two copies of the simulated \locmem{}
 and the registers in the \permem, which it swaps back and forth
 between.

 The simulation capsule simulates some number of steps of
 the source program.  It starts by reading in one of the two copies of
 the \locmem{} and registers.  Then during the simulation all
 instructions are applied within their corresponding memories, except
 for writes from the \locmem{} to the \permem.  These
 writes, instead of being written immediately, are buffered in the
 \locmem{}.  This means that all reads from the external memory have to
 first check the buffer.  The simulation also maintains a count of the
 number of reads and writes to the external memory within a capsule.
 When this count reaches $M/B$, the simulation ``closes'' the capsule.  The
 closing is done by writing out the simulated \locmem{}, the
 registers, and the write buffer to \permem{}.  For \locmem{} and
 registers, this is the other copy from the one that is read.  The
 capsule finishes by installing a commit capsule.

 The commit capsule
 reads in the write buffer from the closed capsule to \locmem{}, and
 applies all the writes to their appropriate locations of the
 simulated external memory in the \permem{}.  When the commit capsule
 is done, it installs the next simulation capsule.

This simulation is \war{} conflict free because the only writes
during a simulation capsule are to the copy of \locmem{},
registers, and write buffer.   The write buffer has no conflicts since
it is not read, and the \locmem{} and registers have no conflicts
since they swap back and forth.
There are no conflicts in the
commit capsules because they read from write buffer
and write to the simulated external memory.  The simulation is
therefore \war{} conflict free.

To see the claimed time and space bounds, we note that the \locmem{}
need only be a constant factor bigger than the simulated \locmem{}
because the write buffer can only contain $M$ entries. Each round
requires only $O(M/B)$ reads and writes to the \permem{} because the
simulating capsules only need the stored copy of the \locmem{}, do at
most $M/B$ reads, and then do at most $O(M/B)$ writes to the other stored
copy.  The commit capsule does at most $M/B$ simulated writes, each
requiring a read from and write to the \permem{}.  Because each round
simulates $M/B$ reads and writes to external memory at the cost of
$O(M/B)$ reads and writes to \permem{}, the faultless work across all
capsules is bounded by $O(t)$.  Because the probability that a capsule faults
is bounded by the maximum capsule work, $O(M/B)$, when $\faultprob \leq B/(cM)$, there is a constant $c$ such that the probability of a capsule faulting
is less than 1.   Since the faults are independent, the expected total work is
a constant factor greater than the faultless work, giving
the stated bounds.
\end{proof}

It is also possible to simulate the ideal cache model~\cite{Frigo99}
in the \ourmodel{} model.  The ideal cache model is similar to the
external memory model, but assumes that the fast memory is managed as a
fully associative cache.  It assumes a cache of size $M$ is organized in blocks of size
$B$ and has an optimal replacement policy.  The ideal cache model
makes it possible to design cache-oblivious algorithms~\cite{Frigo99}.
Due to the following result, these algorithms are also efficient in the \ourmodel{} model.

\begin{theorem}\label{theorem:ideal-cache}
Any  $(M,B)$ ideal cache computation with $t$ cache misses
can be simulated on the $(O(M),B)$ \ourmodel{} model
with $\faultprob \leq B/(cM)$ for a constant $c \geq 2$,
using $O(t)$ expected total work.
\end{theorem}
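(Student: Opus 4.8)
The plan is to reduce Theorem~\ref{theorem:ideal-cache} to Theorem~\ref{theorem:EM} by exploiting the well-known relationship between the ideal cache model and the external memory model. The key observation is that the only difference between the two models is the cache management policy: the external memory model gives the algorithm explicit control over which blocks reside in fast memory (via explicit transfer instructions), while the ideal cache model manages the $M/B$ blocks as a fully associative cache with an optimal (offline) replacement policy. The strategy is therefore to simulate the ideal cache's replacement decisions explicitly within an external memory computation, and then invoke the already-established simulation of external memory on the \ourmodel{} model.

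First I would make precise the correspondence between a cache miss in the ideal cache model and an explicit external transfer in the external memory model. An ideal cache computation with $t$ cache misses can be converted into an external memory computation that performs $O(t)$ external accesses: each cache miss triggers at most one block eviction (a write, if the block is dirty) and one block fetch (a read), so the number of explicit transfers is within a constant factor of the number of cache misses. The optimal replacement policy of the ideal cache is an offline decision, but since we are only asserting the existence of a simulation with the stated asymptotic cost, we may assume the replacement decisions are given and hard-code them as explicit evict/fetch instructions in the external memory program. This yields an $(M,B)$ external memory computation (using fast memory of size $O(M)$ to hold the $M/B$ cached blocks plus bookkeeping) with $t' = O(t)$ external accesses.

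Next I would apply Theorem~\ref{theorem:EM} directly to this external memory computation. That theorem guarantees that any $(M,B)$ external memory computation with $t'$ external accesses can be simulated on the $(O(M),B)$ \ourmodel{} model with $\faultprob \leq B/(cM)$ using $O(t')$ expected total work. Composing the two reductions, the ideal cache computation is simulated on the $(O(M),B)$ \ourmodel{} model using $O(t') = O(t)$ expected total work, under the same fault-probability bound $\faultprob \leq B/(cM)$, which matches the claimed statement exactly. The \war{}-conflict-freedom and well-formedness required for idempotence are inherited from the external memory simulation of Theorem~\ref{theorem:EM}, so no additional capsule-structuring argument is needed.

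The main obstacle, and the point that deserves the most care, is justifying that the optimal offline replacement policy of the ideal cache can be faithfully encoded as an external memory computation without inflating the number of external accesses beyond a constant factor. One must argue that the sequence of fetches and evictions dictated by the optimal policy is itself a valid sequence of explicit external memory instructions, and that reconstructing this sequence does not require additional external transfers that would break the $O(t)$ bound. Since the cache holds exactly $M/B$ blocks and each miss induces at most one eviction and one fetch, the transfer count is controlled; the only subtlety is bookkeeping the cache contents and dirty bits within the $O(M)$ fast memory, which costs no external accesses. Once this correspondence is in hand, the result follows immediately by chaining it with Theorem~\ref{theorem:EM}.
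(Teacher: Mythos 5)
Your proposal is correct in substance but takes a genuinely different route from the paper. You reduce the ideal cache model to the external memory model by making the cache's replacement decisions explicit and then invoke Theorem~\ref{theorem:EM} as a black box. The paper instead gives a direct simulation, a variant of its external-memory simulation: each simulation capsule starts with an \emph{empty} simulated cache of $2M/B$ blocks, never evicts, and closes the capsule once $2M/B$ distinct blocks have been touched, flushing all dirty lines via a commit capsule. The accounting is then a counting argument against the ideal cache rather than a policy equivalence: since a round touches $2M/B$ distinct blocks and at most $M/B$ of them could have been resident in the ideal cache when the round began, the ideal cache must itself incur at least $M/B$ misses over the same instructions, so the $O(M/B)$ cost per round charges to $O(1)$ per ideal-cache miss. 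The paper's construction thus never needs to know or reproduce the optimal replacement policy. Your reduction buys modularity (reuse of Theorem~\ref{theorem:EM}), but the step you flag as the main obstacle is indeed the loose point: ``hard-coding'' the offline optimal policy yields a non-uniform program whose construction requires knowing the future of the execution. The standard patch is to replace the optimal policy by LRU with a cache of size $2M$, which is implementable online with $O(M)$ bookkeeping in fast memory and, by the Sleator--Tarjan resource-augmentation bound (as used by Frigo et al.), incurs $O(t)$ misses; since the target model permits $(O(M),B)$, this constant-factor augmentation is absorbed. With that substitution your argument goes through and matches the claimed bounds.
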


\begin{proof} 
  The simulation is similar to our external memory simulation,
  using rounds consisting of a \emph{simulation} capsule and a
  \emph{commit} capsule.  During each simulation capsule a simulated
  cache of size $2M/B$ blocks is maintained in the \locmem{}.  The
  capsule starts by loading the registers, and with an empty cache.
  During simulation, entries are never evicted, but instead the
  simulation stops when the cache runs out of space, i.e., after $2M/B$
  distinct blocks are accessed.  The capsule then writes out all dirty
  cache lines (together with the corresponding \permem{} address for
  each cache line) to a buffer in \permem{}, saves the registers and
  installs the commit capsule.  The commit capsule reads in the
  buffer, writes out all the dirty cache lines to their correct
  locations, and installs the next simulation capsule.  The simulation
  is \war{} conflict free.

  We now consider the costs of the simulation.  $O(M)$ \locmem{} is
  sufficient to simulate the cache of size $2M$.  The total faultless
  work of a simulation capsule (run once) is bounded by $O(M/B)$
  because we only have $2M/B$ misses before ending, and then have to
  write out at most $2M/B$ dirty cache blocks.  Accounting for faults,
  the total cost is still $O(M/B)$---given constant probability of
  faults.  The size of the cache and cost are similarly bounded for
  the commit capsule.  We now note that over the same simulated
  instructions, the ideal cache will suffer at least $M/B$ cache
  misses.  This is because the simulation round accesses $2M/B$
  distinct locations, and only $M/B$ of them could have been in the
  ideal-cache at the start of the round, in the best case.  The other
  $M/B$ must therefore suffer a miss.  Therefore each simulation round
  simulates what were at least $M/B$ misses in the ideal cache model
  with at most $O(M/B)$ expected cost in the \ourmodel{} model.  As in
  the previous proofs, and since the capsule work is bounded by
  $O(M/B)$, the probability of a capsule faulting can be bounded by
  $1/2$ when $\faultprob \leq B/(cM)$, for some $c$.  Hence the
  expected total work can be bounded by twice the faultless work.
\end{proof}

\section{Programming for Robustness}

This simulation of the external memory is not completely satisfactory
because its overhead, although constant, could be significant.  It can
be more convenient and certainly more efficient to program directly
for the model.  Here we describe one protocol for this purpose.  It
can greatly reduce the overhead of using the \ourmodel{} model.  It is
also useful in the context of the parallel model.

Our protocol is designed so capsules begin and end at the boundaries
of certain function calls, which we refer to as \emph{persistent
  function calls}.  Non-persistent calls are ephemeral.  We assume
function calls can be marked as persistent or ephemeral, by the user
or possibly compiler.  Once a persistent call is made, the callee will
never revert back further than the call itself, and after a return the
caller will never revert back further than the return.  All persistent
calls require a constant number of external reads and writes on the
call and on the return.  In an ephemeral function call a fault in a
callee can roll back to before the call, and similarly a fault after a
return can roll back to before the return.  All ephemeral calls are
handled completely in the \locmem{} and therefore by themselves do not
require any external reads or writes.   In addition to the persistent
function call we assume a \texttt{commit} command that forces a capsule
boundary at that point.    As with a persistent call, the commit requires
a constant number of external reads and writes.

We assume that all user code between persistent boundaries is
\war{} conflict free, or otherwise idempotent.  This requires a
style of programming in which results are copied instead of
overwritten.  For sequential programs, this increases the space 
requirements of an algorithm by at most a factor of two. Persistent 
counters can be implemented by placing a
commit between reading the old value and writing the new.   In the
algorithms that we describe in Section~\ref{sec:algs}, this style is very natural.

\subsection{Implementing Persistent Calls}
\label{sec:closures}

The standard stack protocol for function calling is not \war{}
conflict free and therefore cannot be used directly for persistent
function calls.  We therefore describe a simple mechanism based on
closures and continuation passing in functional programming~\cite{Appel89}.  
The convention also
serves to clearly delineate the capsule boundaries, and will be useful
in the discussion of the parallel model.  We then discuss how this can
be implemented on a stack and can be used for loops.

We use a contiguous sequence of memory words, called a \emph{closure},
to represent a capsule.  The restart pointer for the capsule is the
address of the first word of the closure.  A closure consists of an
instruction pointer in the first position (the start instruction),
local state, arguments, and a pointer to another closure, which we
refer to as the \emph{continuation}.  Typically a closure is constant
size and points indirectly (via a pointer) to non-constant sized data, although this
is not required.  Once a closure is filled, it can be installed and
started.  Any faults while it is active will restart it.  Since the
base of the closure is loaded into the base pointer when starting, the
instructions have access to the local state and arguments.  A closure
can be thought of as a stack frame, but need not be allocated in a
stack discipline.  Indeed, as discussed later, allocating in a stack
discipline requires some extra care.  The continuation can be thought
of as a return pointer, except it does not point directly to an
instruction, but rather another closure (perhaps the parent stack
frame), with the instruction to run stored in the first location.

A persistent function call consists of creating two closures, a
continuation closure and a callee closure, and then installing and
starting the callee closure.  The continuation closure corresponds to
what needs to be run when returning from the callee.  It consists of a
pointer to the first instruction to run on return, any local
variables that are needed on return, an empty slot for the return
result of the callee, and the continuation of the current closure.
The callee closure consists of a pointer to the first instruction to
run in the called function, any arguments it needs, and a pointer to
the continuation closure in its continuation.  The return from a
persistent call consists of writing any results into the closure
pointed to by the continuation (the continuation closure), and then
installing and starting the continuation closure.  Note that if the
processor faults after installing the continuation closure, then a
computation will only back up as far as the start of the continuation.
Therefore persistence occurs at the boundaries (in and out) of
persistent function calls.  Because a capsule corresponds to running a
single installed closure, all capsules correspond to the code run
between two persistent function boundaries.  We note that if a
function call is in tail position (i.e., it cannot call another 
function), then the continuation closure is
not necessary, and the continuation pointer of the current active
capsule can be copied directly to the new callee closure before
installing it (this is the standard tail call optimization).

Our calling mechanism is \war{} conflict free.  This is because we
only ever write to a closure when it is being created, and read when
it is being used in a future capsule.  The one exception is writing
results into a closure, but in this case the callee does the writes,
and the caller does the reads after the return and in a different
capsule.
A loop can be made persistent by using tail recursive function calls.
To avoid allocating a new closure for each, the implementation could
use just two closures and swap back and forth between the two.

Closures can be implemented in a stack discipline by allocating both
the callee and continuation closures on the top of the stack, and
popping the callee closure when returning from the called function,
and the continuation closure when returning from the current function.
Standard stack-based conventions, however, will not be \war{} conflict free
because they are based on side-affecting the current stack, e.g. by
changing the value of local variables.  Also the return address is
typically stored in the child (callee) frame.  Here it is important it
is kept in the continuation closure so that the move to a new closure
can be done atomically by swinging the restart-pointer (changing the
start instruction address and base pointer on the same instruction).
A \emph{commit} command can be implemented in the compiler, or by
hand, by putting the code after the commit into a separate function
body, and making a tail call to it.

To implement closures we need memory allocation.  This can be
implemented in various ways in a \war{} conflict free manner.  One
way is for the memory for a capsule to be allocated starting at a base
pointer that is stored in the closure.  Memory is allocated one after
the other, using a pointer kept in local memory (avoiding the need for
a \war{} conflict to persistent memory in order to update it).  In
this way, the allocations are the same addresses in memory each time
the capsule restarts.  At the end of the capsule, the final value of
the pointer is stored in the closure for the next capsule.  For the
\ourparallelmodel, each processor allocates from its own pool of
persistent memory, using this approach.  In the case 
where a processor takes over for a hard-faulting processor, any
allocations while the taking-over processor is executing on behalf of
the faulted processor will be from the pool of the faulted processor.

 
 
 
 





\section{Robustness on Multiple Processors}
\label{sec:multiproc}

With multiple processors our previous definition of \idem{} is
inadequate since the other processors can read or write \permem{}
locations while a capsule is running.  For example, even though the
final values written by a capsule $c$ might be \idem{}, other
processors can observe intermediate values while $c$ is running and
therefore act differently than if $c$ was run just once.  We therefore
consider a stronger variant of idempotency that in addition to
requiring that its final effects on memory are if it ran once,
requires that it acts as if it ran atomically.  The requirement of
atomicity is not necessary for correctness, but sufficient for what we
need and allows a simple definition.    We give an example of how it can
be relaxed at the end of the section.

More formally we consider the history of a computation, which is an
interleaving of the \permem{} instructions from each of the
processors, and which abides by the sequential semantics of the
memory.  The history includes the additional instructions due to
faults (i.e., it is a complete trace of instructions that actually
happened).  A capsule within a history is \emph{invoked} at the
instruction it is installed and \emph{responds} at the instruction
that installs the next capsule on the processor.  All instructions of
a capsule, and possibly other instructions from other processors, fall
between the invocation and response.

We say that a capsule in a history is \emph{atomically \idem{}} if
\begin{enumerate}
\item (atomic) all its
  instructions can be moved in the history to be adjacent somewhere between
  its invocation and response without violating the memory semantics, and
\item (\idem{}) the instructions are \idem{} at the spot
  they are moved to---i.e., their effect on
  memory is as if the capsule ran just once without fault.
\end{enumerate}

As with a single processor, we now consider conditions under which
capsules are ensured to be \idem{}, in this case atomically. 
Akin to standard definitions of conflict, race, and race free, we
say that two \permem{} instructions on separate processors
\emph{conflict} if they are on the same block and one is a write.  For
a capsule within a history we say that one of its instructions has a
\emph{race} if it conflicts with another instruction that is between the
invocation and response of that capsule.  A capsule in a history is
\emph{race free} if none of its instructions have a race.

\begin{theorem}\label{theorem:rwcf+rf}
  Any capsule that is \war{} conflict free and race free in a history is
  atomically \idem{}.
\end{theorem}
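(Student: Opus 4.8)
The plan is to separate the two requirements of atomic idempotency, using race freedom to obtain the ``atomic'' part and \war{} conflict freedom, via Theorem~\ref{theorem:rwcf}, to obtain the ``\idem{}'' part. The workhorse is a commuting observation: two \permem{} instructions on separate processors that do \emph{not} conflict can be swapped in the history without changing the value returned by any read or the resulting memory state. This is immediate from the definition of conflict --- if the two instructions touch different blocks, or if both are reads, their relative order is irrelevant to the sequential memory semantics, and all instructions before and after the swapped pair see an unchanged state.

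For condition~(1), I would use race freedom to gather \emph{all} of the capsule's instructions --- including those of partially executed, faulted runs --- into a single contiguous block. List the capsule's instructions in history order as $c_1,\dots,c_m$; every instruction lying strictly between two of the $c_i$, or between them and the response, belongs to another processor, and by race freedom it conflicts with none of the $c_i$. Processing $i=m-1$ down to $1$ and repeatedly applying the commuting observation, I would slide each $c_i$ rightward past the intervening other-processor instructions until it becomes adjacent to $c_{i+1}$, never reordering two capsule instructions relative to each other. Each individual swap preserves the memory semantics, so the final history is semantically equivalent to the original and places the entire capsule adjacently at a point between its invocation and response.

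For condition~(2), I would argue idempotency at the gathered spot. Once the capsule's instructions are contiguous, no other-processor instruction is interleaved among them, so the block is exactly a single-processor execution of the capsule together with all of its faults and restarts. Moreover, race freedom guarantees that during the capsule's lifetime no other processor writes any block the capsule reads, so the values the capsule reads are unchanged by the reordering and are identical on each restart. The block therefore satisfies precisely the single-processor hypotheses, and since the capsule is \war{} conflict free, Theorem~\ref{theorem:rwcf} applies verbatim: its net effect on \permem{} is as if it ran exactly once. Together with condition~(1), this makes the capsule atomically \idem{}.

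I expect the main obstacle to be the atomicity argument rather than the idempotency reduction. The care required there is to invoke race freedom uniformly for \emph{all} capsule instructions (those of the faulted runs as well as the final complete run, since these are what another processor must never be able to observe), to check that every commutation is genuinely between a capsule instruction and a non-conflicting instruction on a distinct processor, and to verify that the gathering procedure preserves the internal order of $c_1,\dots,c_m$. Once atomicity isolates the capsule from all interleaving, the remaining claim collapses cleanly onto the already-established single-processor Theorem~\ref{theorem:rwcf}.
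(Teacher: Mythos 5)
Your proposal is correct and follows essentially the same route as the paper: use race freedom to commute all of the capsule's instructions (including those from faulted runs) into a contiguous block between invocation and response, then invoke Theorem~\ref{theorem:rwcf} on that block via \war{} conflict freedom. You simply spell out the pairwise-commutation argument that the paper's one-line proof leaves implicit.
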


\begin{proof}
  Because the capsule is race free we can move its instructions to be
  adjacent at any point between the invocation and response without
  affecting the memory semantics.  Once moved to that point, the
  idempotence follows from Theorem~\ref{theorem:rwcf} because the
  capsule is \war{} conflict free.
\end{proof}

This property is useful for user code if one can ensure that the
capsules are race free via synchronization.  We use this extensively
in our algorithms.  However the requirement of being race free is
insufficient in general because synchronizations themselves require
races.  In fact the only way to ensure race freedom throughout a
computation would be to have no processor ever write a location that
another processor ever reads or writes.  We therefore consider some
other conditions that are sufficient for atomic idempotence.

\myparagraph{Racy Read Capsule}
We first consider a \emph{racy read capsule}, which 
reads one location from \permem{} and writes its value to another
location in \permem{}.
The capsule can have other instructions, but none of them can
depend on the value that is read.
A racy read capsule is atomically \idem{} if all
its instructions except for the read are race free.  This is true
because we can move all instructions of the capsule, with possible
repeats due to faults, to the position of the last read.  The capsule
will then properly act like the read and write happened just once.
Because races are allowed on the read location, there can be multiple
writes by other processors of different values to the read location,
and different such values can be read anytime the racy read capsule 
is restarted.  However, because the write location is race free,
no other processor can ``witness'' these possible writes
of different values to the
write location.  Thus, the copy capsule is atomically \idem{}.
A copy capsule is a useful primitive for copying from
a volatile location that could be written at any point into a
processor private location that will be stable once copied.  Then when
the processor private location is used in a future capsule, it will
stay the same however many times the capsule faults and restarts.  We
make significant use of this in the work-stealing scheduler.

\myparagraph{Racy Write Capsule}
We also consider a \emph{racy write capsule}, for which the only
instruction with a race is a write instruction to \permem{}, and the instruction
races only with either read instructions or other write instructions,
but not both kinds.  Such a capsule can be shown to be atomically
\idem{}.  In the former case (races only with reads), then in any history,
the value in the write location during the capsule transitions from
an old value to a new value exactly once no matter how many times
the capsule is restarted. Thus, for the purposes of showing
atomicity, we can move all the instructions of the capsule
to immediately before the first read that sees the new value, or to the
end of the capsule if there is no such read.  Although the first time the
new value is written (and read by other processors) may be part of a capsule 
execution that subsequently faulted,
the effect on memory is as if the capsule ran just once without fault
(idempotency).
In the latter case (races only with other writes), 
then if in the history the racy write capsule is the last writer
before the end of the capsule, we can move all the instructions of the 
capsule to the end of the capsule, otherwise we can move all the instructions
to the beginning of the capsule, satisfying atomicity and idempotency.

\myparagraph{Compare-and-Modify (CAM) Instruction} We now consider
idempotency of the CAS instruction.  Recall that we assume that a CAS
is part of the machine model.  We cannot assume the CAS is race free
because the whole purpose of the operation is to act atomically in the
presence of a race.  Unfortunately it seems hard to efficiently
simulate a CAS at the user level when there are faults.  The problem
is that a CAS writes two locations, the two that it swaps.  In the
standard non-faulty model one is local (a register) and therefore the
CAS involves a single shared memory modification and a local register
update.  Unfortunately in the \ourparallelmodel{} model, the processor
could fault immediately before or after the CAS instruction.  On
restart the local register is lost and therefore the information about
whether it succeeded is lost.  Looking at the shared location does not
help since identical CAS instructions from other processors might have
been applied to the location, and the capsule cannot distinguish its
success from their success.


Instead of using a CAS, here we show how to use a weaker
instruction, a \emph{compare-and modify (CAM)}.  A CAM is simply a CAS
for which no subsequent instruction in the capsule reads the local result
(i.e., the swapped value).\footnote{Some CAS instructions in practice return a boolean
to indicate success; in such cases, the boolean cannot be read either.}
Furthermore, we restrict the
usage of a CAM.  For a capsule within a history we say a write $w$
(including a CAS or CAM) to \permem{} is \emph{non-reverting} if no other
conflicting write between $w$ and the capsule's response changes the
value back to its value before $w$. We define a \emph{CAM capsule} as a 
capsule that contains one non-reverting CAM and may contain other \war{} 
conflict free and race free instructions. 

\begin{figure}

\includegraphics[width=\columnwidth]{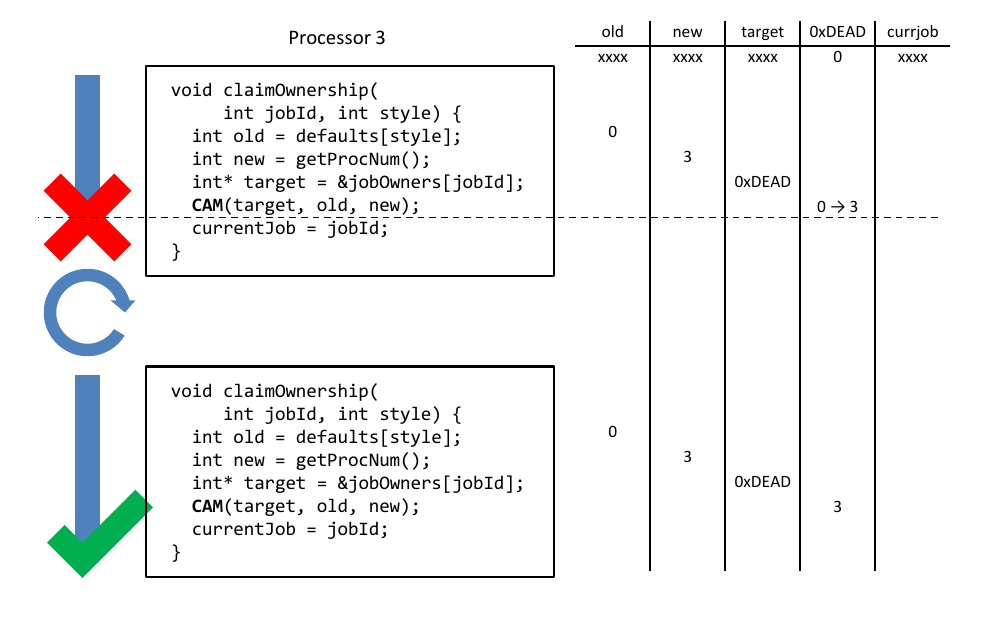}
\vspace{-2em}
\caption{CAM Capsule Example. In CAM capsules, earlier faulting runs of the capsule may perform work that is visible to the rest of the system.}\label{fig-cam-capsule}

\end{figure}

\begin{theorem}
  A CAM capsule is atomically \idem{}.
\end{theorem}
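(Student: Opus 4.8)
The plan is to isolate the single CAM from the remaining instructions of the capsule and treat the two parts independently, exploiting the defining property of a CAM that no later instruction in the capsule reads its (swapped) result. Because of this, the behavior of the other instructions---which are \war{} conflict free and race free---never depends on whether the CAM succeeded or failed, so by the argument of Theorem~\ref{theorem:rwcf+rf} they can be collapsed into a single atomic execution at \emph{any} point between the capsule's invocation and its response. The whole task therefore reduces to exhibiting one point $\tau$ in that interval at which a single atomic execution of the CAM reproduces the net effect on \permem{} of all of its (possibly repeated) executions in the history, and then placing the entire capsule at $\tau$.

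First I would case on whether any execution of the CAM succeeds in the history. Suppose one does, and let $w$ be the first successful CAM, so that the target location holds the compared old value immediately before $w$ and the new value immediately after. Here the non-reverting hypothesis does the real work: since no conflicting write between $w$ and the response returns the location to its pre-$w$ value, every later re-execution of the CAM (on a restart) sees a value different from the compared one and fails, while every earlier execution failed by minimality of $w$. Hence $w$ is the \emph{unique} successful CAM, and the capsule's entire CAM-effect on memory is the single transition performed at $w$. I would set $\tau$ to the position of $w$: the lone atomic CAM placed there succeeds exactly once, matching the history.

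In the complementary case no execution of the CAM ever succeeds; in particular the CAM in the completing run (the run that installs the next capsule) fails, so the target holds a non-matching value at that instant. I would set $\tau$ to that instant, where the single atomic CAM again fails, leaving \permem{} untouched by the CAM. In both cases, combining the CAM placed at $\tau$ with the race-free instructions collapsed at the same $\tau$ yields a single adjacent block whose effect on memory is exactly that of one faultless run (a single successful swap, or a single failed one), giving idempotency; and since the surrounding instructions are race free and the CAM's unique successful write is retained in place, no other processor's observations change, giving atomicity.

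The step I expect to be the main obstacle is the soundness of ``erasing'' the faulting runs while preserving the memory semantics seen by other processors, precisely because a CAM capsule permits the CAM to race with reads \emph{and} writes at once (unlike the racy write capsule). The subtlety, illustrated in Figure~\ref{fig-cam-capsule}, is that a successful write may occur in an early faulting run and already be witnessed by other processors before the capsule finishes; the argument must show that this visible write is exactly the one kept at $\tau$, whereas the only other artifacts of a faulting run---repeated no-op CAM attempts and re-executions of race-free instructions---are invisible to everyone else. Nailing down that the non-reverting condition guarantees a unique successful write, and that it is the sole visible trace of the capsule's retries, is where the care is needed.
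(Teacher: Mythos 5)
Your proposal is correct and follows essentially the same route as the paper's proof: use the non-reverting property to show the CAM succeeds at most once, move all instructions (including those from faulting runs) to the point of the unique successful CAM in that case, and fall back on Theorem~\ref{theorem:rwcf+rf} when the CAM never succeeds. Your treatment is somewhat more explicit about why the successful CAM is unique and about which artifacts of faulting runs are visible, but the underlying argument is the same.
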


\begin{proof}
  Assume that the CAM is non-reverting and all other
  instructions in the capsule are \war{} conflict free and race free.
  Due to faults the CAM can repeat multiple times, but it can only
  succeed in changing the target value at most once.  This is because
  the CAM is non-reverting so once the target value is changed, it
  could not be changed back.  Therefore if the CAM ever succeeds, for the
  purpose of showing atomicity, in the history we move all the
  instructions of the capsule (including the instructions from faulty
  runs) to the point of the successful CAM.  This
  does not affect the memory semantics because none of the other
  instructions have races, and any of the other CAMs were unsuccessful and
  therefore also have no affect on memory.  At the point of the
  successful CAM the capsule acts like it ran once because it is
  \war{} conflict free---other than the CAM, which succeeded just
  once.  If the CAM never succeeds, the capsule is conflict free and
  race free because the CAM did not do any writes, 
  so Theorem~\ref{theorem:rwcf+rf} applies.
\end{proof}

The example CAM capsule in Figure \ref{fig-cam-capsule} shows one of the 
interesting properties of idempotence: unlike transactions or checkpointing, 
earlier runs that faulted can make changes to the memory that are seen or used by 
other processes. Similarly, these earlier runs can affect the results of the 
successful run, as long as the result is equivalent to a non-faulty run. 

A CAM can be used to implement a form of test-and-set in a constant
number of instructions.  In particular, we will assume a location can
either be \emph{unset}, or the value of a process identifier or other unique
identifier.  A process can then use a CAM to conditionally swap such a
location from \emph{unset} to its unique identifier.  The process can then
check if it ``won'' by seeing if its identifier is in the location.
We make heavy use of this in the work-stealing scheduler to atomically
``steal'' a job from another queue.  It can also be used at the join
point of two threads in fork-join parallelism to determine who got 
there last (the one whose CAM from \emph{unset} was unsuccessful) and hence 
needs to run the code after the join.

\myparagraph{Racy Multiread Capsule} It is also possible to design
capsules that are \idem{} without the requirement of atomicity. By way
of example, we discuss the \emph{racy multiread capsule}. This capsule
consists of multiple racy read capsules that have been combined
together into a single capsule.  Concurrent processes may write to
locations that the capsule is reading between reads, which violates
atomicity. Despite this, a racy multiread capsule is \idem{} since the
results of the final successful run of the capsule will overwrite any
results of partial runs.  We make use of the snapshot capsule in the
work-stealing scheduler to reduce the number of capsules required.  It
is not needed for correctness.

\section{Work Stealing}\label{sec:worksteal}

We show how to implement an efficient version of work stealing
(WS) in the \ourparallelmodel{} model. Our results are based on the
work-stealing scheduler of Arora, Blumofe, and Plaxton
(ABP)~\cite{ABP01} and therefore work in a multiprogrammed environment
where the number of active processors can change.  As in their
work, we require some assumptions about the machine, which we
summarize here.

The schedule is a two-level scheduler in which the work-stealing
scheduler, under our control, maps threads to processes, and an
adversarial operating system scheduler maps processes to processors.
The OS scheduler can change the number of allocated processors and
which processes are scheduled on those processors during the
computation, perhaps giving processors to other users. The number of
processes and the maximum number of processors used is given by $P$.
The average number that are allocated to the user is $P_{A}$. 

The quanta for scheduling is at least the time for two scheduling
steps where each step takes a small constant number of
instructions. In our case we cannot guarantee that the quanta is big
enough to capture two steps since the processor could fault. However
it is sufficient to show that with constant probability two scheduling
steps complete within the quanta, which we can show.

The available instruction set contains a yield-to-all instruction.
This instruction tells the OS that it must schedule all
other processes that have not hard faulted before (or at the same time) as the process that
executes the instruction. It is used to ensure that processors that
are doing useful work have preference over ones who run out of work
and need to steal.

Our schedule differs from the ABP scheduler in some crucial ways since our model allowing processors to fault.  First, our scheduler cannot use a CAS, for reasons described in Section~\ref{sec:multiproc}, and instead must use a CAM.  ABP uses a CAS and we see no direct translation to using a CAM.  Second, our scheduler has to handle soft faults anywhere in either the scheduler or the user program.  This requires some care to maintain idempotence.  Third, our scheduler has to handle hard faults.  In particular it has to be able to steal from a processor that hard faults while it is running a thread.  It cannot restart the thread from scratch, but needs to start from the previous capsule boundary (a thread can consist of multiple capsules).

Our scheduler is also similar to the ABP scheduler in some crucial ways.  In particular it uses a work-stealing double ended work queue and takes a constant number of instructions for the popTop, popBottom, and pushBottom functions.  This is important in proving the performance bounds and allows us to leverage much of their analysis.  An important difference in the performance analysis is that faults can increase both the total work and the total depth. Because faults can happen anywhere this holds for the user work and for the scheduler.  The expected work is only increased by a constant factor, which is not a serious issue.  However, for total depth, expectations cannot be carried through the maximum implied by parallel execution.  We therefore need to consider high probability bounds.

\subsection{The Scheduler Interface}

For handling faults, and in particular hard faults,
the interaction of the scheduler and threads is slightly different
from that of ABP.  We assume that when a thread finishes it jumps to
the \texttt{scheduler}.\footnote{Note that jumping to a thread is the
  same as installing a capsule.}  When a thread forks another thread,
it calls a \texttt{fork} function, which pushes the new thread on the
bottom of the work queue and returns to the calling thread.  When the
scheduler starts a thread it jumps to it (actually a capsule
representing the code to run for the thread).  Recall that when the
thread is done it jumps back to the scheduler.  These are the only
interactions of threads and the scheduler---i.e. jumping to a thread
from the scheduler, forking a new thread within a thread, and jumping
back to the scheduler from a thread on completion.  All of these occur
at capsule boundaries, but a thread itself can consist of many
capsules.  We assume that at a join (synchronization) point of
threads whichever one arrives last continues the code after the join
and therefore that thread need not interact with the scheduler.  The
other threads that arrive at the join earlier finish and jump to the
scheduler.    In our setup, therefore, a thread is never blocked, assuming the \texttt{fork} function is non-blocking.

\subsection{WS-Deque}

A work-stealing deque (WS-deque) is a concurrent deque supporting a
limited interface. Here we used a similar interface to ABP. In
particular the interface supports popTop, pushBottom, and
popBottom. Any number of concurrent processors can execute popTop, but
only one process can execute either pushBottom or popBottom. The idea
is only the process owning the deque will work on the bottom. The
deque is linearizable except that popTop can return empty even if the
deque is not-empty. However this can only happen if another concurrent
popTop succeeds with a linearization point when the popTop is live,
i.e., from invocation to response.

We provide an implementation of a idempotent WS-deque in
Figure~\ref{WS-deque-fig}.  Our implementation maintains an array of
tagged entries that refer to threads that the processor has either
enabled or stolen while working on the computation.  The tag is simply
a counter that is used to avoid the ABA
problem~\cite{HerlihyShavit2012}.  An \emph{entry} consists of one of
the following states:
\begin{itemize}[noitemsep,topsep=0pt]
\item \emph{empty}: An empty entry is one that has not been associated with a thread yet. Newly created elements in the array are initialized to empty. 
\item \emph{local}: A local entry refers to a thread that is currently being run by the processor that owns this WS-Deque. We need to track local entries to deal with processors that have a hard fault (i.e., never restart).
\item \emph{job}: A job entry is equivalent to the values found in the original implementation of the WS-Deque. It contains a thread (i.e., a capsule to jump to start the thread). 
\item \emph{taken}: A taken entry refers to a thread that has already been or is in the process of being stolen. It contains a pointer to the entry that the thief is using to hold the stolen thread, and the tag of that entry at the time of the steal. 
\end{itemize}
\lstset{basicstyle=\footnotesize\ttfamily}
\lstset{escapeinside={@}{@}, literate={<}{{$\langle$}}1  {>}{{$\rangle$}}1}
\lstset{language=C, morekeywords={CAM,commit,empty,local,job,taken,entry,GOTO}}
\lstset{xleftmargin=5.0ex, numbers=left, numberblanklines=false, frame=single}

\begin{figure}[!h]
\begin{lstlisting}[firstnumber=1]
P = number of procs
S  = stack size

struct procState {
  union entry = empty
              | local
              | job of continuation
              | taken of <entry*,int>

  <int,entry> stack[S];
  int top;
  int bot;
  int ownerID;

  inline int getStep(i) { return stack[i].first; }
  
  inline void clearBottom() {
    stack[bot] = <getStep(bot)+1, empty>; }

  void helpPopTop() {     
    int t = top;
    switch(stack[t]) {
      case <_, taken(ps,i)>:
        // Set thief state.
        CAM(ps, <i,empty>, <i+1,local>); @\label{line-help-set-local}@
        CAM(&top, t, t+1);    // Increment top.
   } }

  // Steal from current process, if possible.
  // If a steal happens, location e is set to "local"
  // & a job is returned. Otherwise NULL is returned.
  continuation popTop(entry* e, int c) {
    helpPopTop();
    int i = top;
    <int, entry> old = stack[i];
    commit;
    switch(old) {
      // No jobs to steal and no ongoing local work.
      case <j, empty>: return NULL;
      // Someone else stole in meantime.  Help it.
      case <j, taken(_)>: 
         helpPopTop(); return NULL;
      // Job available, try to steal it with a CAM.
      case <j, job(f)>: @\label{line-steal-job-case}@
        <int, entry> new = <j+1, taken(e,c)>;
        CAM(&stack[i], old, new); @\label{line-steal-job}@
        helpPopTop();
        if (stack[i] != new) return NULL;
        return f; @\label{line-poptop-return-job}@
 




\end{lstlisting}
\end{figure}

\clearpage

\begin{figure}[!h]
\begin{lstlisting}[firstnumber=50]
     // No jobs to steal, but there is local work.
      case <j, local>: @\label{line-steal-local-case}@
        // Try to steal local work if process is dead.
        if (!isLive(ownerID) && stack[i] == old) {
          commit;
          <int, entry> new = <j+1,taken(e,c)>;
          stack[i+1] = <getStep(i+1)+1, empty>; @\label{line-avoid-extra-steal}@
          CAM(&stack[i], old, new); @\label{line-steal-local}@
          helpPopTop();
          if (stack[i] != new) return NULL;
          return getActiveCapsule(ownerID); @\label{line-poptop-return-local}@
        }
        // Otherwise, return NULL.
        return NULL;
  }  }
  
  void pushBottom(continuation f) {
    int b = bot;
    int t1 = getStep(b+1);
    int t2 = getStep(b);
    commit;
    if (stack[b] == <t2, local>) { @\label{line-pushBottom-if}@
      stack[b+1] = <t1+1, local>; @\label{line-pushBottom-new-local}@
      bot = b + 1; @\label{line-pushBottom-new-bot}@
      CAM(&stack[b], <t2, local>, <t2+1, job(f)> @\label{line-pushBottom-CAM}@
    } else if (stack[b+1].second == empty) {
      states[getProcNum()].pushBottom(f);
    }
    return;
  }
  
  continuation popBottom() {
    int b = bot;
    <int, entry> old = stack[b-1];
    commit;
    if (old == <j, job(f)>) {
      CAM(&stack[b-1], old, <j+1,local>); @\label{line-popBottom-CAM}@
      if (stack[b-1] == <j+1, local>) { @\label{line-popBottom-check-CAM}@
        bot = b-1;
        return f;
      }   }
    // If we fail to grab a job, return NULL.
    return NULL;
  }

  ^ findWork() {
    // Try to take from local stack first.
    continuation f = popBottom();
    if (f) GOTO(f); @\label{line-bottom-pop-GOTO}@
    // If nothing locally, randomly steal.
    while (true) {
      yield();
      int victim = rand(P);
      int i = getStep(bot);
      continuation g 
          = states[victim].popTop(&stack[bot],i); @\label{line-call-popTop}@
      if (g) GOTO(g); @\label{line-top-pop-GOTO}@
} } }
\end{lstlisting}
\end{figure}

\clearpage

\begin{figure}[!h]
\begin{lstlisting}[firstnumber=108]
procState states[P]; // Stack for each process.

// User call to fork.
void fork(continuation f) {@\label{line-fork}@
  // Pushes job onto the correct stack.
  states[getProcNum()].pushBottom(f);
}

// Return to scheduler when any job finishes.
^ scheduler() {@\label{line-scheduler}@
  // Mark the completion of local thread.
  states[getProcNum()].clearBottom();
  // Find work on the correct stack.
  GOTO(states[getProcNum()].findWork());
}
\end{lstlisting}

\caption{Fault-tolerant WS-Deque Implementation.
Jumps are marked as \texttt{GOTO} and
functions that are jumped to and do not return (technically
continuations) are marked with a \texttt{\^}.
All CAM instructions occur in separate capsules, similar to function calls. 
}
\label{WS-deque-fig}
\end{figure}

The transition table for the entry states is shown in Figure~\ref{ws-deque-tran-table}.
\begin{figure}[!ht]
\centering
\begin{tabular}{ |c|c|c|c|c|c|  }
\hline
& & \multicolumn{4}{c|}{New State} \\
\hline
& & Empty & Local & Job & Taken\\
\hline
\multirow{4}{4em}{Old State} & Empty & - & \checkmark & &\\
& Local & \checkmark & - & \checkmark & \checkmark\\
& Job & & \checkmark & - & \checkmark\\
& Taken & & & & -\\
 \hline
\end{tabular}

\caption{Entry state transition diagram}\label{ws-deque-tran-table}

\end{figure}

In addition to this array of entries, we maintain pointers to the top and the bottom of the deque, which is
a contiguous region of the array.
As new threads are forked by the owner process, new entries will be added to the bottom of the deque using the pushBottom function. 
The bottom pointer will be updated to these new entries. 
The top pointer will move down on the deque as threads are stolen. 
This implementation does not delete elements at the top of the deque, even after steals. 
This means that we do not need to worry about entries being deleted in the process of a steal attempt, but does mean that maintaining $P$ WS-Deques for a computation with span $T_{\infty}$ requires $O(PT_{\infty})$ storage space.  

Our implementation of the WS-Deque maintains a consistent structure that is useful for proving its correctness and efficiency. The elements of our WS-Deque are always ordered from the beginning to the end of the array as follows: 
\begin{enumerate}[leftmargin=*]
\item A non-negative number of taken entries. These entries refer to threads that have been stolen, or possibly in the case of the last taken entry, to a thread that is in the process of being stolen. 
\item A non-negative number of job entries. These entries refer to threads that the process has enabled that have not been stolen or started since their enabling.
\item Zero, one, or two local entries. If a process has one local entry, it is the entry that the process is currently working on. Processes can momentarily have two local entries during the pushBottom function, before the earlier one is changed to a job. If a process has zero local entries, that means the process has completed the execution of its local work and is in the process of acquiring more work through popBottom or stealing, or it is dead.
\item A non-negative number of empty entries. These entries are available to store new threads as they are forked during the computation. 
\end{enumerate}
We can also relate the top and bottom pointers of the WS-Deque (i.e. the range of the deque) to this array structure. The top pointer will point to the last taken entry in the array if a steal is in process. Otherwise, it will point to the first entry after the taken entries. At the end of a capsule, the bottom pointer will point to the local entry if it exists, or the first empty entry after the jobs otherwise. The bottom pointer can also point to the last job in the array or the earlier local entry during a call to pushBottom.

\subsection{Algorithm Overview and Rationale}

We now give an overview and rationale of correctness of our
work-stealing scheduler under the \ourparallelmodel{}.  

\hide{
The high-level overview is that when trying to find a job the
scheduler will first try to take a job from the bottom of its own
deque using popBottom, marking it as \texttt{local}.  If that fails it will
pick a random victim and try to steal from the top of victim's deque
using popTop.  It performs a steal by installing a \texttt{taken} entry in the
victim's deque and a \texttt{local} entry at the bottom of its own deque.  To
get both done without blocking this uses a helper function that other
processes can execute.  If there are no \texttt{jobs} at the victim, but there
is a \texttt{local} on the victim's deque, and the victim is dead, the scheduler
will try stealing the local work (i.e. bring it back to life).  If
there are no jobs and no dead local work the steal attempt fails and
the scheduler tries again by picking another random victim.
}

Each process is initialized with an empty WS-Deque containing enough \texttt{empty} entries to complete the computation. The top and bottom pointers of each WS-Deque are set to the first entry. One process is assigned the root thread. This process installs the first capsule of this thread, and sets its first entry to \texttt{local}. All other processes install the \texttt{findWork} capsule. 

Once computation begins, the adversary chooses processes to schedule according to the rules of the yield instruction described in ABP, with the additional restriction that dead processes cannot be scheduled. When a process is scheduled, it continues running its code. This code may be scheduler code or user code. 

If the process is running user code, this continues until the code calls \texttt{fork} or terminates. Calls to \texttt{fork} result in the newly enabled thread being pushed onto the bottom of the process' WS-Deque. When the user code terminates, the process returns to the \texttt{scheduler} function. 

The scheduler code works to find new threads for the process to work on. It begins by calling the \texttt{popBottom} function to try and find a thread on the owner's WS-Deque. If \texttt{popBottom} finds a thread, the process works on that thread as described above. Otherwise, the process begins to make steal attempts using the \texttt{popTop} function on random victim stacks. 
\hide{
Processes that do not have local work jump to the scheduler. The scheduler begins by clearing the bottom of the running process' WS-Deque to empty. This step ensures that a dead processes will not have a local task to steal.  The scheduler then jumps to the findWork function on the running process' WS-Deque. These steps in combination ensure that each process will only invoke findWork on its own WS-Deque. 

The scheduling loop consists of two parts: a section that handles locally available threads using popBottom and a section that performs steals using popTop. 

PopBottom begins with a capsule that reads the bottom pointer of the WS-Deque and the tag-entry pair above the bottom pointer. This dependency does not lead to concurrency issues since no process other than the owner of the WS-Deque manipulates the bottom pointer. The read of the pair may be out of date, but this is handled later in the function. It is also worth noting that the code assumes the existence of an entry above the bottom. Since the bottom pointer will never decrease below the point it begins the computation at, this assumption is true as long as the bottom pointers are initiated below a zeroth entry. 

The second section of popBottom attempts to claim the pair read in the previous section if it is a job. Otherwise, it returns NULL. In the event of a job, the section performs a CAM to increase the tag and set the entry to local. If the result is as expected, then the process updates the bottom pointer and then returns the claimed continuation. Since WS-Deque entries are only set to local by pushBottom, popBottom, or as a result of a successful steal, we know that if the deque entry is local that it was set during the current capsule or a previous failed attempt. Similarly, we use the fact that the bottom pointer is only modified by pushBottom and popBottom to ensure that there are no races with changing the bottom pointer. If a thief steals the thread after the first section reads it but before the CAM occurs, then the CAM operation will fail and no additional operations will be performed. 

The steal section of findWork loops to make steal attempts until one is successful. Each iteration begins with a yield to prevent thief processes from blocking processes that have useful computation. Once the yield call has been completed, the process chooses a random victim and checks the tag at its bottom pointer. It then enters another capsule to perform the actual steal attempt. If the process fails prior to the steal function, a new victim process will be chosen, but this does not affect the correctness if we assume that the choices are independent. 

The popTop function works to ensure that regardless of faults or concurrency, a positive number of steal attempts that occur simultaneously results in exactly one successful pop. This is primarily done through the use of the helpPopTop function. This function uses the CAM operator to finish a steal that has been made visible. Since all processes running helpPopTop on the victim WS-Deque try to write the same values to the same locations, it does not matter how many failures occur or which process succeeds first. Because the changes are being made with the CAM operator and the tag is non-decreasing, processes that are delayed never overwrite valid data. Similarly, since the top pointer for a WS-Deque never decreases, the CAM on the top pointer of the victim will succeed exactly once without side effects. 

After completing the helpPopTop function, the popTop function has a capsule that checks the top of the victim WS-Deque. The pop is then resolved in the final capsule. If the top of the WS-Deque is empty, then there is nothing to pop and NULL is returned. If the top is taken, that means another thief stole the thread first. The current process will use the helpPopTop function to complete the other thief's pop, then return NULL. If the top is a job, then the active process will attempt to steal that thread using a CAM with a pointer to its own deque. In this case, there are two possibilities. The first is that the CAM succeeds. This means that the steal was successful. After this point, any calls to helpPopTop on this WS-Deque by any process will result in the completion of the steal. Therefore failures do not affect the correctness of this situation. The other possibility is that the CAM fails. This means that a concurrent thief has stolen the job, or that the owner of the WS-Deque started working on the thread. The active process then uses the helpPopTop function to complete a steal if it did occur. Note that if the entry was changed by an invocation of popBottom, the helpPopTop function will have no effect. The tag ensures that once this CAM has failed, it will never succeed regardless of the number of capsule restarts that occur. The final possible state of the top entry is local.  If the owner of the deque is live, the pop is unsuccessful.   Otherwise, the pop follows the same procedure for stealing as in the case of a job, with one exception. Rather than returning the continuation found in the job, the thief will have to extract the continuation from the victim process, including directly reading its restart pointer in order to extract the capsule the victim was working on at its demise.  This prevents the thief from starting capsules that have already been run to completion.

When a thread forks new threads, it calls pushBottom on the running process' WS-Deque. The pushBottom function consists of two separate capsules. The first capsule reads the bottom pointer of the WS-Deque and the tags at and below the bottom. The second capsule sets the new bottom entry to local, sets the bottom of the WS-Deque to the continuation of the job passed in as an argument, and increments the bottom pointer. The writes to the bottom pointer and the entries are independent of reads done during the capsule, so repeating these operations does not change the output. This means that the pushBottom function is repeatable. 

We can also prove that the pushBottom function does not suffer from concurrency issues. We do this by observing that only the processor that owns a WS-Deque will call the pushBottom, popBottom, or clearBottom functions. Since these are the only functions that will edit a local entry or the bottom pointer, we know that no concurrent functions will have conflicts with the changes in Lines 71 or 73 of the code. We rely on the tag to ensure that the CAM to the old bottom pointer happens exactly once. 
}
In a faultless setting, our work-stealing scheduler fuctions like that of ABP. We use the additional information stored in the WS-Deques and the configuration of capsule boundaries to provide fault tolerance. 

We provide correctness in a setting with soft faults using idempotent capsules. Each capsule in the scheduler is an instance of one of the capsules discussed in Section \ref{sec:multiproc}. This means that processes can fault and restart without affecting the correctness of the scheduler. 

Providing correctness in a setting with hard faults is more
challenging. This requires the scheduler to ensure that work being
done by processes that hard fault is picked up in the same capsule
that the fault ocurred during by exactly one other process. We handle
this by allowing thieves to steal \texttt{local} entries from dead
processes. A process can check whether another process is dead using a
liveness oracle \texttt{isLive(procId)}.  

The liveness oracle might be constructed by implementing a counter and
a flag for each process. Each process updates its counter after a
constant number of steps (this does not have to be synchronized). If
the time since a counter has last updated passes some threshold, the
process is considered dead and its flag is set. If the process
restarts, it can notice that it was marked as dead, clear its flag,
and enter the system with a new empty WS-Deque. Constructing such an
oracle does not require a global clock or tight synchronization.

By handling these high level challenges, along with some of the more subtle challenges that occur when trying to provide exactly-once semantics in the face of both soft and hard faults, we reach the following result. 

\begin{theorem}
The implementation of work stealing provided in Figure \ref{WS-deque-fig} correctly schedules work according to the specification in Section \ref{sec:worksteal}.
\end{theorem}

The proof, appearing in Appendix~\ref{sec:WS-proof}, deals with the many possible code interleavings that arise when considering combinations of faulting and concurrency. We discuss our methods for ensuring that work is neither duplicated during capsule retries after soft faults or dropped due to hard faults. In particular, we spend considerable time ensuring that recovery from hard faults during interaction with the bottom of the WS-Deque happens correctly. 

\subsection{Time Bounds}
\label{sec:timebounds}

We now analyze bounds on runtime based on the work-stealing scheduler under
the assumptions mentioned at the start of the section (scheduled in fixed quanta,
and supporting a yield-to-all instruction).

As with ABP, we consider the total amount of work done by a computation, and the
depth of the computation, also called the critical path length.   In our case
we have $W$, the work assuming no faults, and $\Wf$, the work including
faults.     In algorithm analysis the user analyzes the
first, but in determining the runtime we care about the second.   Similarly we
have both $D$, a depth assuming no faults, and $D_f$, a depth with
faults.

For the time bounds we can leverage the proof of ABP.  In particular
as in their algorithm our popTop, popBottom, and pushBottom functions all take $O(1)$ work without faults.  With our deque, operations take expected $O(1)$ work.  Also as with their version, our popTop is unsuccessful
(returns Null when there is work) only if another popTop is successful
during the attempt.  The one place where their proof breaks down in
our setup is the assumption that a constant sized quanta can always
capture two steal attempts.  Because our processors can fault multiple
times, we cannot guarantee this.  However in their proof this is
needed to show that for every $P$ steal attempts, with probability at
least $1/4$, at least $1/4$ of the non-empty deques are successfully
stolen from (\cite{ABP01}, Lemma 8).  In our case a constant fraction
$(1 - O(1) \cdot \faultprob)^2$ of adjacent pairs of steal attempts will not fault at all
and therefore count as a steal attempt.  For analysis we can assume
that if either steals in a pair faults, then the steal is unsuccessful.  This gives a
similar result, only with a different constant, i.e.,  with probability
at least $1/4$, at least $(1 - O(1) \cdot \faultprob)^2/4$ of the non-empty deques are
successfully stolen from.
We note that hard faults affect the average number of active
processors $P_A$.   However they otherwise have no asymptotic affect in our
bounds because a hard fault in our scheduler is effectively the same as forking
a thread onto the bottom of a work-queue and then finishing.

ABP show that their work-stealing scheduler runs in expected time
$O(W/P_A + DP/P_A)$.
To apply their
results we need to plug in $\Wf$ for $W$ because that is the actual work
done, and $D_f$ for $D$ because that is actual depth.
While bounding $\Wf$ to be within a constant factor of $W$ is straightforward,
bounding
 $D_f$ is trickier because we cannot sum expectations to
get the depth bound (the depth is a maximum over paths lengths).
Instead we show that with some high probability no capsule faults
more than some number of times $l$.  We then simply multiply the depth
by $l$.  By making the probability sufficiently high, we can
pessimistically assume that in the unlikely even that any capsule
faults more than $l$ times then, the depth is as large as the work.
This idea leads to the following theorem.

\begin{theorem}   \label{theorem:time}
  Consider any multithreaded computation with $W$ work, $D$ depth,
  and $C$ maximum capsule work (all assuming no faults) for which all
  capsules are atomically idempotent.
  On the \ourparallelmodel{} with $P$ processors, $P_A$ average number
  of active processors, and fault probability bounded by
  $\faultprob \leq 1/(2C)$, the expected total time $T_f$ for the
  computation is
  \[O\left(\frac{W}{P_A} + D\left(\frac{P}{P_A}\right) \left\lceil\log_{1/(C \faultprob)} W\right\rceil\right).\]
\end{theorem}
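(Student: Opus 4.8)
The plan is to reduce everything to the ABP time bound $O(\Wf/P_A + D_f P/P_A)$, which (as sketched in the discussion preceding the theorem) continues to hold once the steal-attempt analysis is adjusted so that a constant fraction of adjacent pairs of steal attempts are fault-free, and then to control the two random quantities $\Wf$ and $D_f$ separately. The work term is the easy one: since every capsule has faultless work at most $C$ and faults independently with probability at most $C\faultprob \le 1/2$ on each run, the number of times any fixed capsule executes is dominated by a geometric random variable of mean at most $1/(1-C\faultprob) \le 2$. By linearity over the at most $W$ distinct capsules (each ends by installing the next capsule, hence performs at least one external access, so there are at most $W$ of them), $\E[\Wf] = O(W)$. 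I would also record the stronger facts that the per-capsule run counts $R_c$ satisfy $\E[R_c^2] = O(1)$ and hence $\E[\Wf^2] = O(W^2)$, which I will need for the rare-event accounting.

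First I would bound the depth. For a fixed capsule, $\Pr[R_c > l] \le (C\faultprob)^l$, since all of the first $l$ runs must fault. Taking a union bound over the $\le W$ capsules and choosing $l = \Theta(\lceil \log_{1/(C\faultprob)} W\rceil)$ with a large enough constant (legitimate because $\faultprob \le 1/(2C)$ forces $1/(C\faultprob) \ge 2$, so the logarithm is well defined and positive), I can make the \emph{bad} event $\bar G$, that some capsule runs more than $l+1$ times, have probability at most $W^{-k}$ for any desired constant $k$. On the complementary \emph{good} event $G$, every capsule on every root-to-leaf path executes at most $l+1$ times, so the realized work along any path is at most $(l+1)$ times its faultless work; hence $D_f \le (l+1)\,D = O\!\left(D\lceil \log_{1/(C\faultprob)} W\rceil\right)$ on $G$.

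It remains to assemble an expectation bound, and this is the step I expect to be the main obstacle, precisely because $D_f$ is a maximum over paths and expectation cannot be pushed through the maximum. I would split $\E[T_f] = \E[T_f\,\mathbf{1}_G] + \E[T_f\,\mathbf{1}_{\bar G}]$. On $G$, plugging $\E[\Wf]=O(W)$ and $D_f \le (l+1)D$ into the ABP bound gives $\E[T_f\,\mathbf{1}_G] = O(W/P_A + D(P/P_A)\lceil \log_{1/(C\faultprob)}W\rceil)$, the claimed expression. For the bad event I would be deliberately pessimistic and assume the depth is as large as the work, so the ABP bound degrades to $T_f = O(\Wf\, P/P_A)$; then $\E[T_f\,\mathbf{1}_{\bar G}] = O(P/P_A)\,\E[\Wf\,\mathbf{1}_{\bar G}]$, and a Cauchy--Schwarz step $\E[\Wf\,\mathbf{1}_{\bar G}] \le \sqrt{\E[\Wf^2]\,\Pr[\bar G]} \le \sqrt{O(W^2)\,W^{-k}}$ makes this term negligible once $k$ is chosen at least $4$ (alternatively one can bound $\E[\Wf\,\mathbf{1}_{\bar G}]$ directly using independence of the $R_c$). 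Adding the two contributions yields the stated bound. The one subtlety demanding care beyond this outline is justifying that, conditioned on a fixed fault pattern, the ABP bound still applies to the fault-expanded execution and that faults are independent of the scheduler's random steal choices, so that I may condition on the faults, apply ABP over the steal randomness, and only then take the outer expectation over $G$ and $\bar G$.
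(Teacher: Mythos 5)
Your proposal is correct and follows essentially the same route as the paper's proof: reduce to the ABP bound, control $\E[\Wf]$ by a geometric argument, bound $D_f$ via a per-capsule tail bound plus a union bound over at most $O(W)$ capsules, and pessimistically set depth equal to work on the rare event that some capsule repeats too often. The only difference is that you drive the bad-event probability down to $W^{-k}$ and decouple $\Wf$ from the bad event via Cauchy--Schwarz, whereas the paper simply takes $\epsilon = 2/W$ and charges the bad event $W$ work directly; your version is slightly more careful about the correlation between $\Wf$ and the bad event, but the structure is identical.
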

\begin{proof}
We must account for faults in both the computation and the
work-stealing scheduler. The work-stealing scheduler has $O(1)$
maximum capsule work, which we assume is at most $C$.  Because we
assume all faults are independent, the probability that a capsule will
run $l$ or more times is upper bounded by $(C\faultprob)^l$.
Therefore if there are $\kappa$ capsules in the computation including
the capsules executed as part of the scheduler, the probability that
any one runs more than $l$ times is upper bounded by
$\kappa(C\faultprob)^l$ (by the union bound).  If we want to bound
this probability by some $\epsilon$, we have $\kappa(C\faultprob)^l
\leq \epsilon$.  Solving for $l$ and using $\kappa \leq 2W$ gives $l
\leq \lceil \log_{1/(C\faultprob)} (2W/\epsilon)\rceil$.  This means
that with probability at most $\epsilon$, $D_f \leq D
\log_{1/(C\faultprob)} (2W/\epsilon)$.  If we set $\epsilon = 2/W$
then $D_f \leq 2 D \log_{1/(C\faultprob)} W$.  Now we assume that if
any capsule faults $l$ times or more that the depth of the computation
equals the work.  This gives $(P/P_A) (2/W) W + (1 - 2/W) 2 D
\lceil\log_{1/(C\faultprob)} W\rceil)$ as the expected value of the
second term of the ABP bound, which is bounded by $O((P/P_A) D
\lceil\log_{1/(C\faultprob)} W\rceil)$.  Because the expected total
work for the first term is $\Wf \leq (1/(1 - C\faultprob)) W$, and
given $C\faultprob \leq 1/2$, the theorem follows.
\end{proof}

This time bound differs from the ABP bound only in the extra
$\log_{1/(C\faultprob)} W$ factor.  If we assume $P_A$ is a constant
fraction of $P$ then the expected time simplifies to $O(W/P + D
\lceil\log_{1/(C\faultprob)} W\rceil)$.

\section{Fault-Tolerant Algorithms}
\label{sec:algs}

In this section, we outline how to implement several algorithms for
the \ourparallelmodel{} model.  The algorithms are all based on binary
fork-join parallelism (i.e., nested parallelism), and hence fit within
the multithreaded model.  We state all results in terms of faultless
work and depth.  The results can be used with
Theorem~\ref{theorem:time} to derive bounds on time for the
\ourparallelmodel{}.  Recall that in the \ourparallelmodel{} model,
external reads and writes are unit cost, and all other instructions
have no cost (accounting for other instructions would not be hard).
The algorithms that we use are already race-free.  Making them
\war{} conflict free simply involves ensuring that reads and
writes are to different locations.  All capsules of the algorithms are
therefore atomically idempodent.
The base case for each of our variants of the
algorithms is done sequentially within the \locmem{}.

\myparagraph{Prefix Sum}\label{sec:algo-prefixsum}
Given $n$ elements $\{a_1,\cdots,a_n\}$ and an associative operator ``$+$'', the prefix sum algorithm computes a list of prefix sums $\{p_1,\cdots,p_n\}$ such that $p_i=\sum_{j=1}^{i}a_j$.
Prefix sum is one of the most commonly-used building blocks in parallel algorithm design~\cite{JaJa92}.

We note that the standard prefix sum algorithm~\cite{JaJa92} works well in our setting.
The algorithm consists of two phases---the up-sweep phase and the down-sweep phase, both based on divide-and-conquer.
The up-sweep phase bisects the list, computes the sum of each sublist recursively, adds the two partial sums as the sum of the overall list, and stores the sum in the persistent memory.
After the up-sweep phase finishes, we run the down-sweep phase with the same bisection of the list and recursion.
Each recursive call in this phase has a temporary parameter $t$, which is initiated as $0$ for the initial call.
Then within each function, we pass  $t$ to the left recursive call and $t+\mb{LeftSum}$ for the right recursive call, where $\mb{LeftSum}$ is the sum of the left sublist computed from the up-sweep phase.
In both sweeps the recursion stops when the sublist has no more than $B$ elements, and we sequentially process it using $O(1)$ memory transfers.
For the base case in the down-sweep phase, we set the first element $p_i$ to be $t+a_i$, and then sequentially compute the rest of the prefix sums for this block.
The correctness of $p_i$ follows from how $t$ is computed along the path to $a_i$.

This algorithm fits the \ourparallelmodel{} model in a straightforward manner.
We can place the body of each function call (without the recursive calls) in an individual capsule. 
In the up-sweep phase, a capsule reads from two memory locations and stores the sum back to another location.
In the down-sweep phase, it reads from at most one memory location, updates $t$, and passes $t$ to the recursive calls.
Defining capsules in this way provides \war{} conflict-freedom and limits the maximum capsule work to a constant.

\begin{theorem}
The prefix sum of an array of size $n$ can be computed in $O(n/B)$
work, $O(\log n)$ depth, and $O(1)$ maximum capsule work, using
only atomically-idempotent capsules.
\end{theorem}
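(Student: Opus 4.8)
The plan is to verify each of the four claimed quantities in turn for the up-sweep/down-sweep algorithm together with the capsule decomposition already described, in which each capsule is the body of a single recursive call (excluding its recursive subcalls) and the recursion bottoms out on sublists of at most $B$ elements. The first three quantities are counting arguments over the recursion tree, and the last reduces to the idempotence machinery of Section~\ref{sec:multiproc}.

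First, for the work bound I would analyze the recursion tree of each sweep. Because the recursion bisects the list and halts once a sublist has at most $B$ elements, each sweep's tree is a binary tree with $\Theta(n/B)$ leaves and hence $O(n/B)$ internal nodes. Each internal capsule performs a constant number of external transfers (in the up-sweep it reads two partial sums and writes their sum to \permem{}; in the down-sweep it reads one partial sum and forwards an offset $t$ to its children), and each leaf capsule reads and writes $O(1)$ blocks while performing the remaining $O(B)$ additions internally in \locmem{} at no cost. Summing the $O(1)$ per-node transfers over both sweeps gives $O(n/B)$ faultless work.

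Next, for the depth I would bound the longest path in the fork-join DAG. Since the two recursive calls at each internal node run in parallel, the critical path descends one recursion level per step and incurs only the $O(1)$ combine/forward transfers of that level; with recursion depth $\log(n/B) \le \log n$ and an $O(1)$-transfer base case, each sweep contributes $O(\log n)$ depth, and the down-sweep runs after the up-sweep, so the depths add to $O(\log n)$ overall. For the maximum capsule work I would simply inspect the two capsule types and the base case and observe that each performs a constant number of external reads and writes, including closure creation and the restart-pointer install, so $C = O(1)$.

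The last and most delicate step is to certify atomic idempotence, which I would obtain from Theorem~\ref{theorem:rwcf+rf} by showing each capsule is both \war{} conflict free and race free. Race-freedom follows because distinct recursive calls operate on disjoint index ranges and write to disjoint \permem{} locations, so no two concurrently live capsules conflict. The \war{}-freedom is exactly where the ``write results to a separate location'' discipline matters: I would check that each up-sweep combine capsule writes the parent sum to a fresh location rather than overwriting a child sum, and that the down-sweep reads the previously computed left-subtree sum and the inputs $a_i$ but writes the prefix sums $p_i$ into a separate output array, so that within any single capsule no block that is first read is later written. I expect this \war{}/race-freedom bookkeeping at the interface between the two sweeps to be the one subtle point to get right; once it is established, every capsule is atomically \idem{} and the three cost bounds combine to give the theorem.
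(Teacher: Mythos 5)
Your proposal is correct and follows essentially the same route as the paper, which proves the theorem implicitly through its algorithm description: one capsule per recursive-call body with $\le B$-element base cases, $O(1)$ transfers per capsule giving $O(n/B)$ work over the $O(n/B)$-node recursion trees and $O(\log n)$ depth over the two sequential sweeps, and atomic idempotence via the write-to-separate-locations discipline combined with race-freedom (Theorem~\ref{theorem:rwcf+rf}). You simply make explicit the counting and the \war{}/race-freedom bookkeeping that the paper asserts more briefly.
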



\myparagraph{Merging}\label{sec:algo-merge}
A merging algorithm takes the input of two sorted arrays $A$ and $B$ of size $l_A$ and $l_B$ ($l_A+l_B=n$), and returns a sorted array containing the elements in both input lists.
We use an algorithm on the \ourparallelmodel{} model based on the classic divide-and-conquer algorithm~\cite{blelloch2010low}.

The first step of the algorithm is to allocate the output array of size $n$.
Then the algorithm conducts dual binary searches of the arrays in parallel to find the elements ranked $\{n^{2/3}, 2n^{2/3}, 3n^{2/3}, \ldots , (n^{1/3} - 1) n^{2/3}\}$
among the set of keys from both arrays, and recurses on each pair of subarrays until the base case when there are no more than $B$ elements left (and we switch to a sequential version).
We put each of the binary searches into a capsule, as well as each base case.
These capsules are \war{} conflict free because the output of each capsule is written to a different subarray.   Based on the analysis in~\cite{blelloch2010low} we have the following theorem.

\begin{theorem}
Merging two sorted arrays of overall size $n$ can be done in $O(n/B)$ work,
$O(\log n)$ depth, and $O(\log n)$ maximum capsule work, using only
atomically-idempotent capsules.
\end{theorem}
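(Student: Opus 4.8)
The plan is to analyze the three resource measures—work, depth, and maximum capsule work—for the divide-and-conquer merging algorithm, and then verify that the capsule structure (binary searches and base cases) yields only atomically-idempotent capsules so that Theorem~\ref{theorem:time} applies. Since the excerpt tells us the algorithm is based on the classic divide-and-conquer merge of~\cite{blelloch2010low}, the core of the proof is to adapt that reference's analysis to our cost model, where only external reads and writes are counted.

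First I would establish the \emph{work} bound. The algorithm partitions the problem by finding, via dual binary searches, the elements of ranks that are multiples of $n^{2/3}$, splitting into roughly $n^{1/3}$ subproblems of size about $n^{2/3}$ each, and recurses until the subproblem size drops below $B$, at which point it switches to a sequential merge. I would set up the recurrence for external-memory work: at each level the dual binary searches over arrays of total size $n$ cost $O(n^{1/3}\log n)$ transfers in the worst case, while the base cases each cost $O(1)$ transfers (since a block of $B$ elements fits in $O(1)$ blocks and is merged in \locmem{}). Summing the base-case contributions gives $O(n/B)$, and I would argue the binary-search overhead across all levels is dominated by this $O(n/B)$ term—this is exactly the accounting done in~\cite{blelloch2010low}, so I can largely cite it. Next, for \emph{depth}, the recursion has $O(\log n)$ levels (each level shrinks the subproblem size by a polynomial factor, and the binary searches within a level add only $O(\log n)$ depth), so the critical path is $O(\log n)$; here I would confirm that the depth in our model matches the depth in the source analysis, since depth counts the longest chain of external transfers along any DAG path.

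The key novelty relative to~\cite{blelloch2010low} is the \emph{maximum capsule work} and the \emph{idempotence} claim, so I would treat these carefully. Each binary search is placed in its own capsule; a dual binary search over arrays of total size $n$ touches $O(\log n)$ memory locations, giving $O(\log n)$ maximum capsule work, which matches the stated bound. Each base case is also a capsule with $O(1)$ work. To invoke Theorem~\ref{theorem:time}, I must show every capsule is atomically idempotent, and for this I would appeal to Theorem~\ref{theorem:rwcf+rf}: the capsules are race free because distinct recursive subproblems write to disjoint subarrays of the pre-allocated output (so no two concurrent capsules conflict), and they are \war{} conflict free because each capsule reads from the input arrays and writes only to its designated output region—reads and writes are to different locations, exactly the discipline described in Section~\ref{sec:algs}. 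Combining race freedom with \war{} conflict freedom, Theorem~\ref{theorem:rwcf+rf} gives atomic idempotence.

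I expect the main obstacle to be the \emph{work accounting for the binary searches}, specifically verifying that their total cost does not exceed the $O(n/B)$ bound. A naive bound charges $O(\log n)$ transfers per binary search and $O(n^{1/3})$ searches per level over $O(\log n)$ levels, which could na\"ively overshoot; the careful argument is that the number of subproblems grows geometrically while their sizes shrink, so the binary-search work forms a decreasing geometric-type series dominated by the base-case work of $O(n/B)$. The secondary subtlety is confirming that pre-allocating the output array of size $n$ as a single initial step (which writes $O(n/B)$ blocks) does not break \war{} conflict freedom within any capsule and is correctly amortized into the work bound. Since~\cite{blelloch2010low} already carries out the analogous analysis in the parallel I/O model, I would lean on that reference for the tight work bound and devote the original argument to the capsule-level idempotence claims.
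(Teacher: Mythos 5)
Your proposal takes essentially the same approach as the paper's proof, which likewise describes the dual-binary-search divide-and-conquer algorithm of~\cite{blelloch2010low}, places each binary search and each base case in its own capsule, observes that the capsules are \war{} conflict free because each writes to a distinct output subarray, and defers to the analysis in~\cite{blelloch2010low} for the $O(n/B)$ work and $O(\log n)$ depth. One small caution: the recursion actually has $O(\log \log n)$ levels (subproblem sizes shrink as $n^{(2/3)^k}$), and the $O(\log n)$ depth follows from the geometrically decreasing per-level binary-search depths rather than from ``$O(\log n)$ levels each adding $O(\log n)$ depth,'' which as stated would give $O(\log^2 n)$.
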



\myparagraph{Sorting}
Using the merging algorithm in Section~\ref{sec:algo-merge}, we can
implement a fault-tolerant mergesort with $O((n/B)\log (n/M))$ work
and maximum capsule work $O(\log n)$.  However, this is not optimal.  We now
outline a samplesort algorithm with improved work
$O(n/B\cdot\log_M n)$, based on the algorithm
in~\cite{blelloch2010low}.


The sorting algorithm first splits the set of elements into $\sqrt{n}$
subarrays of size $\sqrt{n}$ and recursively sorts each of the
subarrays.  The recursion terminates when the subarray size is less than $M$,
and the algorithm then sequentially sorts within a single capsule.
Then the algorithm samples every $\log n$'th element from each subarray.
These samples are sorted using mergesort, and $\sqrt{n}$ pivots are picked from the result using a fixed stride.
The next step is to merge each $\sqrt{n}$-size subarray with the
sorted pivots to determine bucket boundaries within each subarray.
Once the subarrays have been split, prefix sums and matrix transposes are used to determine the location in the
buckets where each segment of the subarray is to be sent.  After that,
the keys need to be moved to the buckets, using a bucket
transpose algorithm.  We can use our prefix sum algorithm and the
divide-and-conquer bucket transpose algorithm from~\cite{blelloch2010low}, where the base case is a matrix of size
less than $M$, and in the base case the transpose is done sequentially
within a single capsule (note
that this assumes $M > B^2$ to be efficient).
The last step is to recursively sort the elements within each bucket.
All steps can be made \war{} conflict free by writing to locations separate than those being read.
By applying the analysis in~\cite{blelloch2010low} with the
change that the base cases (for the recursive sort and the
transpose) are when the size fits in the \locmem{}, and that the base
case is done sequentially, we obtain the following theorem.

\begin{theorem}
Sorting $n$ elements can be done in $O(n/B\cdot\log_M{n})$ work,
$O((M/B+\log n)\log_M n)$ depth, and $O(M/B)$ maximum capsule work,
using only atomically-idempotent capsules.
\end{theorem}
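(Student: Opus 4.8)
The plan is to treat the theorem as a recurrence analysis of the samplesort described above, reusing the bounds of \cite{blelloch2010low} but with the base case changed to a subproblem that fits in \locmem{} and is sorted sequentially within a single capsule. Let $W(n)$ and $D(n)$ denote the faultless work and depth of sorting $n$ keys. The two recursive stages---sorting the $\sqrt n$ subarrays of size $\sqrt n$, and later sorting the $\sqrt n$ buckets of (expected) size $\sqrt n$---together contribute $2\sqrt n\,W(\sqrt n)$ to the work, while everything in between (sampling, sorting the $n/\log n$ samples, merging each subarray against the pivots, the prefix sums, and the matrix/bucket transposes) moves each key a constant number of times and so costs $O(n/B)$. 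This gives $W(n) = 2\sqrt n\,W(\sqrt n) + O(n/B)$ with base case $W(m) = O(m/B)$ for $m \le M$ (read the block in, sort in \locmem{} at no external cost, write it out). First I would normalize by setting $f(n) = B\,W(n)/n$, turning the recurrence into $f(n) = 2 f(\sqrt n) + O(1)$; unrolling over the $j = \log\log_M n$ levels until $n^{1/2^j} = M$ and using $2^j = \log_M n$ yields $f(n) = O(\log_M n)$, i.e. $W(n) = O((n/B)\log_M n)$. Equivalently, the total number of keys at recursion level $i$ is $2^i n$, so the distribution work summed over levels telescopes to $(2^j-1)\,n/B = O((n/B)\log_M n)$, and the $\log_M n\cdot(n/M)$ leaves each cost $O(M/B)$ for a matching base-case total.

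For the depth, I would use that the two recursive sorts lie in series on the critical path, giving $D(n) = 2 D(\sqrt n) + g(n)$, where $g(n)$ is the depth of the in-between distribution stage. Each component of that stage has low depth: the merge of a subarray against the pivots and the prefix sums are $O(\log n)$ depth (by the Merging and Prefix Sum theorems above), and the divide-and-conquer transposes are $O(\log n)$ depth once their size-$M$ base cases---done sequentially---are charged $O(M/B)$ depth each. Hence $g(n) = O(M/B + \log n)$. Unrolling $D(n) = 2 D(\sqrt n) + O(M/B + \log n)$ over the $j = \log\log_M n$ levels, the $M/B$ term accumulates as $(M/B)\sum_{i<j} 2^i = O((M/B)\log_M n)$ (and the $\log_M n$ sequential leaves add the same), while the logarithmic term contributes $\sum_{i<j} 2^i \log(n^{1/2^i}) = \sum_{i<j}\log n = O(\log n\,\log\log_M n)$. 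Both are dominated by $O((M/B + \log n)\log_M n)$, giving the stated depth.

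The maximum capsule work is $O(M/B)$, dominated by the sequential base cases of the recursive sort and of the bucket transpose, each of which processes a size-$M$ region using $O(M/B)$ transfers within one capsule; the remaining capsules---a single binary search or merge step, a prefix-sum node, or a transpose combine---use only $O(\log n)$ transfers and are not the bottleneck. Finally, to show every capsule is atomically idempotent I would invoke Theorem~\ref{theorem:rwcf+rf}: the algorithm is nested-parallel and each subcomputation writes its output to a freshly allocated array disjoint from its inputs (we allocate output space up front and always copy rather than overwrite), so each capsule is \war{} conflict free; and the nested-parallel structure makes sibling subproblems operate on disjoint regions, so no capsule has a race.

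The main obstacle I anticipate is the depth accounting, and specifically pinning down $g(n) = O(M/B + \log n)$: the sample-sorting step sorts $n/\log n$ elements by mergesort, which naively has $\Theta(\log^2 n)$ depth, and the bucket transpose carries its own recursion, so I must verify---leaning on the analysis of \cite{blelloch2010low}---that these fold into $O(\log n)$ per level (or are otherwise absorbed) rather than injecting a $\log^2 n$ term that the claimed bound would not dominate for large $M$. The clean doubling structure of the work recurrence is, by contrast, routine.
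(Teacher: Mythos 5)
Your proposal takes the same route as the paper: the paper's entire proof is to invoke the analysis of \cite{blelloch2010low} with the base cases (for the recursive sort and for the bucket transpose) moved to subproblems that fit in \locmem{} and are run sequentially within one capsule, and your recurrences $W(n)=2\sqrt n\,W(\sqrt n)+O(n/B)$ and $D(n)=2D(\sqrt n)+O(M/B+\log n)$, together with their solutions and the \war{}-conflict-freedom argument, are exactly the instantiation of that analysis, correctly worked out. The one issue you flag---whether mergesorting the $n/\log n$ samples injects a $\Theta(\log^2 n)$ depth term that the claimed $O((M/B+\log n)\log_M n)$ bound need not dominate for large $M$---is a genuine subtlety, but the paper's one-line proof does not address it either, so you are not missing anything the paper supplies.
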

It is possible that the $\log n$ term in the depth could be reduced using
a sort by Cole and Ramachandran~\cite{Cole17}.


\myparagraph{Matrix Multiplication}
Consider multiplying two square matrices $A$ and $B$ of size $n\times n$ (assuming $n^2>M$) with
 the standard recursive matrix multiplication~\cite{CLRS} based on the 8-way divide-and conquer approach.
\begin{align*}
   &\left(
    \begin{array}{cc}
      A_{11} & A_{12} \\
      A_{21} & A_{22} \\
    \end{array}
  \right)
  \times
   \left(
    \begin{array}{cc}
      B_{11} & B_{12} \\
      B_{21} & B_{22} \\
    \end{array}
  \right)\\
 =\,&
   \left(
    \begin{array}{cc}
      A_{11}B_{11}+A_{12}B_{21} & A_{11}B_{12}+A_{12}B_{22} \\
      A_{21}B_{11}+A_{22}B_{21} & A_{21}B_{12}+A_{22}B_{22} \\
    \end{array}
  \right)
\end{align*}
Note that every pair of submatrix multiplications shares the same output location.
This leads to \war{} conflicts since a straightforward implementation will read the value from the output cell, add the computed value, and finally write the sum back.
Therefore, the algorithm allocates two copies of temporary space for the output in each recursive subtask, which allows applying computation for the matrix multiplication in two subtasks on different output spaces (with no conflicts), and eventually adding computed values from the temporary space back to the original output space.

If we stack-allocate the memory for each processor, a straightforward upper bound for the total extra storage is $O(pn^2)$ on $p$ processors using the standard space bound under work-stealing.
A more careful analysis can tighten the bound to $O(p^{1/3}n^2)$.
This should be significantly better than the worst-case bound of $\Theta(n^3/(B\sqrt{M})$ when plugging in real-world parameters.
This extra storage can be further limited to $O(n^2)$ by slightly modifying the orders of the recursive calls, assuming the main memory size is larger than the overall size of all private caches. 

When this algorithm is scheduled by a randomized work-stealing scheduler, the whole computation is race-free.
All multiplications that run at the same time have different output locations.
The additions are independent of each other, and
applied after the associated multiplications. Therefore all operations are race-free.

However, if we put each arithmetic operation in a separate capsule, the whole algorithm incurs $O(n^3)$ memory accesses, which is inefficient.
Hence, we mark a capsule anytime the recursion reaches a subtask that can entirely fit into the \locmem{}.
This happens when the matrix size is smaller than $c'\sqrt{M}$ for a constant $c'<1$.
We then continue to run the algorithm sequentially within these capsules.
For the matrix additions, we similarly mark a capsule boundaries such that each capsule can fit into the \locmem{}. This does not affect the overall work.
We obtain the following theorem.

\hide{
Regarding the asymmetric cost between reads and writes, we can use the algorithm in~\cite{BG2018} that improves the I/O cost by a factor of $O(\wcost^{2/3})$.
To achieve the same bound, we mark the base cases as capsules when the output matrix fit into the \locmem{}, and sequentially run the algorithm within each capsule.
The restart cost is $O(\wcost^{4/3}M^{1.5})$ operations, $O(\wcost{}M)$ read transfers and $O(M)$ write transfers.
}

\begin{theorem}
Multiplying two square matrices of size $n$ can be done in $O(n^3/B\sqrt{M})$ work, $O(M^{3/2}+\log^2 n)$ depth, and $O(M^{3/2})$ maximum capsule work, using only atomically-idempotent capsules.
\end{theorem}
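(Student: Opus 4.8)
The plan is to separate two kinds of claims: the \emph{structural} claim that every capsule is atomically idempotent, and the \emph{quantitative} claims about work, depth, and maximum capsule work. For the structural part I would argue exactly as the surrounding text sets up. When the 8-way recursion is scheduled by the randomized work-stealing scheduler, the eight submatrix products that are live simultaneously each write to a distinct output region (the two products sharing a destination, e.g.\ $A_{11}B_{11}$ and $A_{12}B_{21}$, are directed to two separate temporary copies), so no two concurrent instructions conflict and every capsule is race free; the combining additions are applied only after their operands have been produced, so they are race free as well. Within a capsule, reads and writes always target different locations (input/temporary versus output), so each capsule is also \war{} conflict free. Hence Theorem~\ref{theorem:rwcf+rf} applies and every capsule is atomically \idem{}, which is exactly what is needed to later invoke Theorem~\ref{theorem:time}.

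For the work bound I would reuse the standard external-memory analysis of recursive matrix multiplication (as in~\cite{blelloch2010low,CLRS}), modified only in where the recursion bottoms out. The recursion stops at subproblems of side $c'\sqrt{M}$, which fit entirely in \locmem{}; there are $O((n/\sqrt{M})^3)=O(n^3/M^{3/2})$ such base cases, each loading its $O(M)$ input words and writing its $O(M)$ output words with $O(M/B)$ external transfers, for $O(n^3/(B\sqrt{M}))$ transfers in total. The combining additions must be checked to be no larger: a subproblem of side $s$ performs $O(1)$ additions of side $s/2$ costing $O(s^2/B)$ transfers, and summing $8^i\cdot O((n/2^i)^2/B)=O(2^i n^2/B)$ over the $O(\log(n/\sqrt{M}))$ levels telescopes to $O((n/\sqrt{M})\,n^2/B)=O(n^3/(B\sqrt{M}))$, the same order. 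The single largest capsule is a multiplication base case, which runs the sequential cubic kernel on $\sqrt{M}\times\sqrt{M}$ operands and hence performs $O(M^{3/2})$ operations; the addition base cases are only $O(M)$, so the maximum capsule work is $O(M^{3/2})$ as claimed.

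The depth is the part that needs the most care. With the two-copy trick all eight recursive multiplications of a subproblem run in parallel, so the depth satisfies $D(s)=D(s/2)+A(s)+O(1)$, where $A(s)$ is the depth of combining via parallel divide-and-conquer matrix addition; a base-case multiplication contributes $D(c'\sqrt{M})=O(M^{3/2})$ because it is executed sequentially inside one capsule. Taking $A(s)=O(\log s + M)$ (parallel addition down to \locmem{}-sized base cases summed sequentially) and unrolling gives $D(n)=O(M^{3/2})+\sum_{i=0}^{O(\log(n/\sqrt{M}))}O(\log(n/2^i)+M)$; the logarithmic terms sum to $O(\log^2 n)$ and the per-level $O(M)$ terms sum to $O(M\log n)$, which is absorbed since $M\log n \le M^{3/2}+\log^2 n$ in either case of whether $\log n \le \sqrt{M}$. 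This yields the stated $O(M^{3/2}+\log^2 n)$. The main obstacle I anticipate is precisely this depth bookkeeping---correctly charging the sequential base case as an additive $O(M^{3/2})$ term while confirming the addition depths telescope to $O(\log^2 n)$ and that the leftover $O(M\log n)$ is dominated---together with verifying that the temporary-copy layout really does leave every capsule \war{} conflict free and race free, which is the hypothesis Theorem~\ref{theorem:rwcf+rf} requires.
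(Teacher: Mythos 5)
Your overall route is the same as the paper's: the paper gives no formal proof of this theorem, only the algorithm description (8-way recursion, two temporary output copies per subtask to kill the \war{} conflict on the shared destination, race freedom under work stealing, capsules at subproblems of side $c'\sqrt{M}$ run sequentially in \locmem{}, and similarly bounded capsules for the additions), and your structural argument, work accounting, and maximum-capsule-work accounting all match that description and are correct. In fact you supply more of the quantitative bookkeeping than the paper does.

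The one genuine gap is in your final absorption step for the depth. The inequality ``$M\log n \le M^{3/2}+\log^2 n$ in either case of whether $\log n \le \sqrt{M}$'' is false: it holds when $\log n \le \sqrt{M}$ (then $M\log n \le M^{3/2}$), but in the regime $\sqrt{M} < \log n < M$ one has $M\log n > M^{3/2}$ and $M\log n > \log^2 n$ simultaneously (e.g., $M = \log^{1.5} n$ gives $M\log n = \log^{2.5} n$ versus $M^{3/2}+\log^2 n = \Theta(\log^{2.25} n)$), so the $O(M\log n)$ term from the per-level addition base cases is \emph{not} dominated and your derivation yields only $O(M^{3/2} + \log^2 n + M\log n)$. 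To actually obtain the stated bound you should not make the addition base cases as large as \locmem{}: since the additions are embarrassingly parallel, their divide-and-conquer can bottom out at constant (or single-block) size, so each of the $O(\log(n/\sqrt{M}))$ levels contributes only $O(\log s)$ addition depth rather than $O(\log s + M)$, the per-level $O(M)$ terms disappear, the total addition work is unchanged, and the maximum capsule work is still set by the sequential multiplication base case. (Alternatively, the bound as you derived it holds under an extra assumption such as $M \ge \log^2 n$, but the theorem does not state one.) The rest of the argument---race freedom and \war{} conflict freedom via the two-copy trick, hence atomic idempotence by Theorem~\ref{theorem:rwcf+rf}; $O(n^3/M^{3/2})$ base cases at $O(M/B)$ transfers each; the geometric sum for the addition transfers; and the $O(M^{3/2})$ operations in a single sequential base-case capsule---is sound and consistent with the paper.
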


We note that we can extend this result to non-square matrices using a similar approach to~\cite{Frigo99}.

\hide{
Note that matrix multiplication is the building block of many other algorithms and problems.
Since this new matrix multiplication approach is resilient to faults, we then have fault-tolerant algorithms with small recovery cost on problems in linear algebra (e.g.\ Gaussian elimination, LU decomposition, triangular solver), all-pair shortest-paths, dynamic programming (e.g.\ the LWS/GAP/RNA/Parenthesis problems), machine learning algorithms based on matrix multiplication, etc.
}





\section{Conclusion}

In this paper, we describe the Parallel Persistent Memory model, which
characterizes faults as loss of data in individual processors and
their associated volatile memory. For this paper, we consider an
external memory model view of algorithm cost, but the model could
easily be adapted to support other traditional cost models. We also
provide a general strategy for designing programs based on capsules
that perform properly when faults occur.  We specify a condition of
being atomically idempotent that is sufficient for correctness, and
provide examples of atomic idempotent capsules that can be used to
generate more complex programs.  We use these capsules to build a
work-stealing scheduler that can run programs in a parallel system
while tolerating both hard and soft faults with only a modest increase
in the total cost of the computation.   We also provide several
algorithms designed to support fault tolerance using our capsule
methodology.
We believe that
the techniques in this paper can provide a practical way to provide
the desirable quality of fault tolerance without requiring significant
changes to hardware or software.

\section*{Acknowledgements}
This work was supported in part by NSF grants CCF-1408940, CCF-1533858, and CCF-1629444.

\bibliographystyle{abbrv}


\begin{appendix}
  \section{Proof of the Correctness of Work-Stealing}\label{sec:WS-proof}
Throughout our proof of correctness, we will refer to the code of the work-stealing scheduler shown in Figure~\ref{WS-deque-fig}.
We begin by stating some definitions and assumptions. 
We assume that at least one process will not hard fault during the computation. If this is not true, the computation will have no processes performing work and will never finish. 
The local continuation of a process can be queried using the function getActiveCapsule. This function may be persistent or ephemeral.
Any process can query whether another process has hard faulted through the ephemeral function isDead. 
We define the owner of a WS-Deque to be the process that has the same process number as the ownerID field of that WS-Deque. 
We consider a popBottom to be successful if the CAM at Line~\ref{line-popBottom-CAM} is successful. 
We consider a popTop to be successful if either of the CAM operations at Lines~\ref{line-steal-job} or~\ref{line-steal-local} are successful. 

The first property we prove about our implementation is that the bottom of a WS-Deque can only be operated on by one process at any time. 

\begin{lemma}\label{lemma-deque-bottom-fn}
For a given WS-Deque, only one process can call pushBottom, popBottom, or clearBottom at any time. This process is the owner of the WS-Deque unless the owner hard faults in the middle of a pushBottom or popBottom invocation.
\end{lemma}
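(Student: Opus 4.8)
The plan is to prove the statement as a mutual-exclusion property whose linchpin is the liveness guard \texttt{!isLive(ownerID)} that protects every takeover of a deque by a non-owner. First I would enumerate the call sites of the three bottom operations in the scheduler code (Figure~\ref{WS-deque-fig}): \texttt{pushBottom} is reached only through \texttt{fork}, while \texttt{popBottom} and \texttt{clearBottom} are reached only through \texttt{scheduler} (the latter directly, the former via \texttt{findWork}). In every case the call is issued on \texttt{states[getProcNum()]}, so a running process only ever invokes a bottom operation on the deque whose \texttt{ownerID} equals its own process number. Hence, absent any takeover, the sole process that can touch the bottom of a given deque is its owner; and because a single process executes one capsule at a time and these calls are sequential within \texttt{scheduler} and \texttt{fork}, the owner cannot overlap itself. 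This dispatches the default case.

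Next I would isolate the only mechanism by which a non-owner can run code that operates on another deque's bottom, namely resuming a stolen capsule returned by \texttt{getActiveCapsule(ownerID)}. This call occurs exclusively on the local-steal branch of \texttt{popTop} (the \texttt{local} case), and it is guarded by \texttt{!isLive(ownerID)}. By the liveness-oracle assumption, \texttt{isLive} reports a process as not live only after it has hard faulted; therefore whenever a thief obtains the owner's active capsule, the owner is provably dead and will never run again. Consequently the owner and any thief that touches the bottom are temporally disjoint: while the owner is live no thief passes the guard, and once the guard is passed the owner is already permanently stopped. This already rules out simultaneous access by the owner and a thief.

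It remains to show that at most one thief takes over, and that a thief touches the bottom only when the owner died inside \texttt{pushBottom} or \texttt{popBottom}. For uniqueness I would invoke the exactly-once steal behavior of the deque: the local entry is claimed by a \texttt{CAM} whose target carries a monotonically increasing tag, so among all concurrent \texttt{popTop} attempts exactly one \texttt{CAM} succeeds in converting the \texttt{local} entry to \texttt{taken}, and \texttt{helpPopTop} only finishes that single successful steal without side effects; thus exactly one thief reaches \texttt{getActiveCapsule} and adopts the owner's interrupted capsule. For the characterization of the exception, I would observe that the capsule a revived thief resumes is precisely the one the owner was executing at its demise, and that among the bottom operations only \texttt{pushBottom} and \texttt{popBottom} contain a capsule that reads or writes \texttt{bot} and the adjacent entries; \texttt{clearBottom} sets the local entry to \texttt{empty}, so a fault during or after it leaves no \texttt{local} entry for the local-steal branch to claim, and hence no thief can inherit a mid-\texttt{clearBottom} state. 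Combining the three parts, the processes able to operate on a given bottom are the owner together with at most one thief, the latter active only after the owner has hard faulted inside \texttt{pushBottom} or \texttt{popBottom}.

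The main obstacle is the uniqueness-of-thief step together with the clean handoff of the bottom: I must rule out any window in which two processes both believe they are responsible for the interrupted bottom operation. This reduces to the atomicity and exactly-once behavior of the local-steal \texttt{CAM}, relying on tag monotonicity and on the idempotence of \texttt{helpPopTop} established earlier for CAM capsules, and to checking that the thief's reconstruction of the owner's capsule from its restart pointer is consistent with the stated deque invariants (the ordering of \texttt{taken}, \texttt{job}, \texttt{local}, and \texttt{empty} entries and the placement of the top and bottom pointers). The remaining cases -- owner alive, or owner dead but stopped outside a bottom operation -- follow immediately from the liveness guard and the call-site enumeration.
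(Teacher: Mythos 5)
Your proposal follows essentially the same route as the paper's proof: enumerate the call sites of the three bottom operations, observe that each is invoked on \texttt{states[getProcNum()]} so only the owner ever initiates one, and note that because these are scheduler functions they are never pushed as \texttt{job} entries and so can pass to another process only via a local-entry steal after a hard fault. You actually go further than the paper in two places---the temporal disjointness of owner and thief via the \texttt{isLive} guard, and the exactly-once claim of the \texttt{local} entry---which the paper defers to Lemma~\ref{lemma-one-local-steal} and the later lemmas; that extra detail strengthens rather than changes the argument.

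One step does not hold as stated: your reason for excluding \texttt{clearBottom} from the exception clause. You claim that a fault ``during or after'' \texttt{clearBottom} leaves no \texttt{local} entry for the local-steal branch to claim, but a hard fault after entering \texttt{clearBottom} and before its single write lands does leave the owner's \texttt{local} entry in place, and a thief can then steal it and adopt the owner's active capsule. The actual reason \texttt{clearBottom} needs no exception is that, unlike \texttt{pushBottom} and \texttt{popBottom}, it contains no internal capsule boundary (no \texttt{commit} and no CAM): the capsule a thief adopts is the whole \texttt{scheduler} capsule, which re-dispatches \texttt{clearBottom} through \texttt{states[getProcNum()]} and therefore operates on the \emph{thief's own} deque, where the thief is the owner. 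By contrast, the exception for \texttt{pushBottom} and \texttt{popBottom} arises precisely because their post-\texttt{commit} capsules are already bound (through their closures) to the victim's \texttt{procState}, so an adopting thief really does operate on the victim's bottom. This is a local repair rather than a structural problem with your argument, but as written the justification of that clause is incorrect.
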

\begin{proof}
We first consider the pushBottom function. All calls to pushBottom are made from the fork function. These calls are always made to the WS-Deque chosen by the getProcNum function. Since this function returns the ID of the process that is running it and there is no capsule boundary between the call to getProcNum and the call to pushBottom, the process running the fork will always invoke pushBottom on its own WS-Deque. Since the pushBottom function is part of the scheduler rather than the algorithm code, it is never pushed onto the WS-Deque as a job. This means that it can only be stolen from the local state in the event of a process hard fault. Similarly, all calls to popBottom and clearBottom are made using the getProcNum function inside the findWork and fork functions respectively. Therefore, the same argument holds. Since only the owner can invoke these functions and it will run them to completion before calling any other functions, we know that at most one of these function invocations can exist at any time. 
\end{proof}

From this property, we find the related result. 

\begin{corollary}\label{corollary-deque-bottom-pointer}
For a given WS-Deque, only one process can update the bottom pointer at any time. This process is the owner of the WS-Deque unless the owner hard faults in the middle of a pushBottom or popBottom invocation.
\end{corollary}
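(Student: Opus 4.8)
The plan is to reduce the statement to Lemma~\ref{lemma-deque-bottom-fn} by a syntactic inspection of the code in Figure~\ref{WS-deque-fig}. First I would enumerate every point at which the field \texttt{bot} is assigned. Scanning the implementation, the only writes to \texttt{bot} occur at Line~\ref{line-pushBottom-new-bot} inside \texttt{pushBottom} and at the assignment \texttt{bot = b-1} inside \texttt{popBottom}. Crucially, no other routine touches the bottom pointer: \texttt{clearBottom} writes only the entry \texttt{stack[bot]} but leaves \texttt{bot} itself unchanged, while \texttt{popTop} and \texttt{helpPopTop}---the only routines a non-owner ever runs against a given WS-Deque---modify only \texttt{top} and the array entries, never \texttt{bot}. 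Thus every update to the bottom pointer is performed from within a \texttt{pushBottom} or \texttt{popBottom} invocation.

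Given this confinement, the corollary follows immediately. By Lemma~\ref{lemma-deque-bottom-fn}, at any instant at most one process is executing \texttt{pushBottom}, \texttt{popBottom}, or \texttt{clearBottom} on a fixed WS-Deque, and that process is the owner except when the owner has hard faulted in the middle of a \texttt{pushBottom} or \texttt{popBottom} and a thief has taken over its active capsule. Since all bottom-pointer writes lie inside \texttt{pushBottom} and \texttt{popBottom}, the process performing them must be exactly this unique process, yielding both the uniqueness and the owner-identification claimed.

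The only step requiring care---and the one I would treat as the main obstacle---is ensuring that the enumeration of \texttt{bot}-writes is genuinely exhaustive across all control-flow paths, including those introduced by the hard-fault takeover machinery. In particular I would verify that when a thief steals a \texttt{local} entry from a dead owner (Lines~\ref{line-steal-local-case}--\ref{line-poptop-return-local}), it extracts the victim's active capsule via \texttt{getActiveCapsule} without itself writing the victim's \texttt{bot}; any subsequent modification to that bottom pointer happens only once the thief resumes the stolen \texttt{pushBottom} or \texttt{popBottom} capsule, at which point it has become the single process identified by Lemma~\ref{lemma-deque-bottom-fn}. Confirming this rules out any spurious concurrent writer to the bottom pointer and closes the argument.
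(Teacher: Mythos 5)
Your proposal is correct and follows essentially the same route as the paper: enumerate the writes to \texttt{bot}, observe they are confined to \texttt{pushBottom} and \texttt{popBottom}, and invoke Lemma~\ref{lemma-deque-bottom-fn}. Your version is in fact slightly more precise than the paper's one-line proof, which lists \texttt{clearBottom} among the functions that ``update the bottom pointer'' even though it only reads \texttt{bot}; this difference is immaterial since \texttt{clearBottom} is covered by the lemma either way.
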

\begin{proof}
The only functions that update the bottom pointer are pushBottom, popBottom, or clearBottom. Applying Lemma \ref{lemma-deque-bottom-fn} gives the desired result. 
\end{proof}

We then use this property about the bottom of WS-Deques to show that user level threads that are being worked on by processes are tracked with local entries. 

\begin{lemma}\label{lemma-user-work-implies-bottom-local}
Every process that is working on user level threads will have a local entry that is pointed to by the bottom pointer of their WS-Deque. 
\end{lemma}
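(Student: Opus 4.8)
The plan is to prove the statement as an invariant that is established the moment a process begins executing a user thread and is preserved for as long as it continues to do so. First I would make precise what ``working on user-level threads'' means: a process is in this state exactly during the interval between a \texttt{GOTO} into a thread continuation (at root initialization, or at line~\ref{line-bottom-pop-GOTO} or line~\ref{line-top-pop-GOTO} of \texttt{findWork}) and the instant that thread jumps back into \texttt{scheduler}. The claim then splits into two parts: that the bottom entry is \texttt{local} at every such entry point, and that this property is preserved by every action the process takes before returning to the scheduler.

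For the entry points, I would enumerate the three ways a process can start a user thread. (i) Root initialization sets the first entry to \texttt{local} and points the bottom there by construction. (ii) A successful \texttt{popBottom} sets \texttt{stack[b-1]} to \texttt{local} via the CAM at line~\ref{line-popBottom-CAM} and then assigns \texttt{bot = b-1} before returning the continuation, so the bottom addresses the newly created \texttt{local} entry when \texttt{findWork} jumps to it. (iii) A successful \texttt{popTop} (a steal) is the delicate case: the thief passes \texttt{\&stack[bot]} and the tag \texttt{getStep(bot)} to \texttt{popTop} (line~\ref{line-call-popTop}), and \texttt{helpPopTop} converts that entry from \texttt{empty} to \texttt{local} with the CAM at line~\ref{line-help-set-local}. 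I would argue the supplied tag matches because \texttt{clearBottom} (invoked in \texttt{scheduler} just before \texttt{findWork}) had set the bottom entry to \texttt{empty} with an incremented tag, and that the thief's bottom pointer is unchanged throughout the steal, so the entry set to \texttt{local} is exactly the one the bottom pointer names when the returned continuation is run.

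For preservation, I would observe that the only scheduler code a process runs while executing a user thread is \texttt{pushBottom}, reached through \texttt{fork}; \texttt{popBottom} and \texttt{clearBottom} occur only in \texttt{findWork}/\texttt{scheduler} after the thread has finished. A \texttt{pushBottom} on a \texttt{local} bottom writes a new \texttt{local} entry at \texttt{stack[b+1]} and advances \texttt{bot = b+1} (lines~\ref{line-pushBottom-new-local}--\ref{line-pushBottom-new-bot}) before demoting the old bottom to a \texttt{job} with the CAM at line~\ref{line-pushBottom-CAM}, so the bottom still addresses a \texttt{local} entry both during and after the call (this is the transient two-\texttt{local} configuration noted in the deque structure). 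I would also rule out interference: the only way another process mutates a \texttt{local} bottom entry is the dead-owner branch of \texttt{popTop} (line~\ref{line-steal-local}), which is guarded by \texttt{!isLive(ownerID)}; since the owner is actively running user code it is live, so by Corollary~\ref{corollary-deque-bottom-pointer} the bottom pointer and its \texttt{local} entry are untouched by anyone else.

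Finally, because every capsule in the scheduler is atomically idempotent (Section~\ref{sec:multiproc}), soft faults and restarts replay each of these capsules without altering their net effect on the \permem{}, so the invariant also survives faults; an induction on the sequence of capsules a process executes while running a user thread then yields the lemma. The main obstacle I expect is case~(iii): carefully reconciling the tag bookkeeping with the concurrent, possibly repeated, executions of \texttt{helpPopTop} so that the \texttt{local} entry created by the steal is provably the one the thief's bottom pointer names at the moment it jumps to the stolen continuation.
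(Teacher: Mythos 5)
Your proposal is correct and follows essentially the same route as the paper's proof: a case analysis on the entry points where a thread is started (a successful \texttt{popBottom} setting \texttt{stack[b-1]} to \texttt{local} and lowering \texttt{bot}, versus a successful \texttt{popTop} where \texttt{helpPopTop} converts the thief's bottom entry to \texttt{local} via the pointer stored in the \texttt{taken} entry), followed by showing the \texttt{local} entry persists because \texttt{pushBottom} installs a new \texttt{local} bottom before demoting the old one and the only foreign writer of a \texttt{local} entry is the dead-owner branch of \texttt{popTop}. Your explicit invariant framing and the flagged tag-bookkeeping concern in the steal case are minor presentational refinements over the paper's argument, not a different approach.
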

\begin{proof}
All user threads are initiated by the findWork function at Line~\ref{line-bottom-pop-GOTO} or Line~\ref{line-top-pop-GOTO}. 

If the thread is started at Line~\ref{line-bottom-pop-GOTO}, it means that popBottom returned that continuation. The if statement at Line~\ref{line-popBottom-check-CAM} requires a local entry to exist at stack[b-1] in order for a non-NULL return value. The bottom pointer is then set to this location before the return. Corollary \ref{corollary-deque-bottom-pointer} tells us that bottom pointer will not be modified by any other process. The entry pointed to by the bottom pointer can only be modified from local by calls to pushBottom or popTop. We know from Lemma \ref{lemma-deque-bottom-fn} that pushBottom cannot be running concurrently. We show that popTop cannot concurrently modify the entry by observing that popTop will only modify a local entry for a process that hard faulted, and a process cannot return a value after it hard faults. Therefore, the values that exist at Line~\ref{line-popBottom-check-CAM} must still exist upon jumping to the continuation. 

If the thread is started at Line~\ref{line-top-pop-GOTO}, it means that popTop returned that continuation. The popTop function can return a non-NULL value at Line~\ref{line-poptop-return-job} or Line~\ref{line-poptop-return-local}. In either case, the return is preceded by a call to the helpPopTop function. This function ensures that the entry pointed to by the newly taken entry is set to local. This newly taken entry was set by the CAM at Line~\ref{line-steal-job} if the return happened at Line~\ref{line-poptop-return-job} or the CAM at Line~\ref{line-steal-local} if the return happened at Line~\ref{line-poptop-return-local}. Both of these CAMs set the entry pointer in the taken to the argument passed to popTop. Looking at Line~\ref{line-call-popTop}, we see that this is the pointer to the bottom of the thief's WS-Deque. Therefore, that is the entry that will be set to local. We know that the bottom entry and pointer will not be modified between the call to helpPopTop and the jump to the continuation because the owner process is the one running the calls to popTop and findWork and the jump to the thread, and can therefore not hard fault or make other calls to pushBottom, clearBottom, or popTop. 

In both cases where user threads are started, a local entry exists on the bottom of the WS-Deque owned by the process starting that thread. It then remains to show that this local entry is not deleted before the process ceases working on that thread. Local entries are only modified by the clearBottom, pushBottom, and popTop functions. We know from Lemma \ref{lemma-deque-bottom-fn} that unless the process hard faults, only the owner can run pushBottom or clearBottom. If the owner calls clearBottom, it must have done so from the scheduler function. This function is only called when the user level thread completes, meaning the process is no longer working on it. Calls to pushBottom may modify the local entry that existed prior to the call if the CAM Line~\ref{line-pushBottom-CAM} succeeds, but Lines~\ref{line-pushBottom-new-local} and~\ref{line-pushBottom-new-bot} will create a local entry at the new bottom before this can happen. Calls to popTop will never modify a local entry unless the owner has hard faulted. In this case, the local entry will be set to taken by the CAM at Line~\ref{line-steal-local}. Once this CAM is successful, the taken entry will point to the bottom of the thief's WS-Deque, which will be an empty entry. The first helpPopTop call on the victim's WS-Deque that resolves Line~\ref{line-help-set-local} will change the empty entry to local. Since the thief must complete a call to helpPopTop between Line~\ref{line-pushBottom-CAM} and the return from popTop, the local entry will be created before the thief begins working on the thread. 
\end{proof}

Since a process can never work on multiple user level threads, we provide a lemma showing that there are never multiple local entries visible to steal if the process crashes. 

\begin{lemma}\label{lemma-one-local-steal}
At most one local entry can be successfully targeted by a popTop on a WS-Deque. Calls to the popTop function of that WS-Deque after the successful steal completes will target an empty entry. 
\end{lemma}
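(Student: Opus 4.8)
The plan is to combine the WS-Deque structural invariant with the fixed write order inside the local-steal branch of \texttt{popTop} (Lines~\ref{line-steal-local-case}--\ref{line-poptop-return-local}). First I would recall two facts already in hand. By the deque structure there are at any instant at most two \texttt{local} entries, and when there are two they occupy adjacent positions $b$ and $b+1$; a second \texttt{local} arises only transiently inside \texttt{pushBottom}, between the write at Line~\ref{line-pushBottom-new-local} and the \texttt{CAM} at Line~\ref{line-pushBottom-CAM} that turns the lower entry into a \texttt{job}. Second, the steal \texttt{CAM} at Line~\ref{line-steal-local} is guarded by \texttt{!isLive(ownerID)}, so it can fire only after the owner has hard faulted; by Lemma~\ref{lemma-deque-bottom-fn} and Corollary~\ref{corollary-deque-bottom-pointer} the owner is the only process that runs \texttt{pushBottom}, \texttt{popBottom}, or \texttt{clearBottom}, so once it is dead the deque is frozen except for the writes thieves make through \texttt{popTop} and \texttt{helpPopTop}. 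In particular the two-\texttt{local} configuration, if present at the owner's death, stays fixed thereafter.

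Next I would pin down which entry a \texttt{popTop} can target. A \texttt{popTop} calls \texttt{helpPopTop} before setting $i=\texttt{top}$ and inspecting the entry at $i$, and \texttt{helpPopTop} advances \texttt{top} past any in-progress \texttt{taken} entry; hence after it, \texttt{top} points to the first non-\texttt{taken} entry. For the local-steal branch to be reached the entry at $i=\texttt{top}$ must itself be \texttt{local} (not a \texttt{job}), and, since the two \texttt{local}s are adjacent with \texttt{top} at the first non-\texttt{taken} entry, $i$ is necessarily the \emph{lower} of them, with the potential second \texttt{local} at $i+1$. The heart of the argument is then the write order within that branch: the thief first overwrites the entry at $i+1$ with \texttt{empty} (Line~\ref{line-avoid-extra-steal}), only afterwards \texttt{CAM}s the entry at $i$ from \texttt{local} to \texttt{taken} (Line~\ref{line-steal-local}), and \texttt{top} is advanced from $i$ to $i+1$ only by a later \texttt{helpPopTop}, which requires the entry at $i$ to already read \texttt{taken}. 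Thus the potential second \texttt{local} at $i+1$ is destroyed \emph{before} \texttt{top} can ever reach $i+1$, so no \texttt{popTop} ever inspects position $i+1$ while it still holds \texttt{local}. The only \texttt{local} any \texttt{popTop} can target on this deque is therefore the one at $i$, and for concurrent thieves contending for that single entry the tag carried in the \texttt{CAM} old value makes at most one succeed: the winning \texttt{CAM} bumps the tag, so every competitor (carrying the stale tag) fails. This gives the first claim.

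For the second claim I would observe that once the winning steal has completed, \texttt{helpPopTop} has advanced \texttt{top} to $i+1$, whose entry was set to \texttt{empty} at Line~\ref{line-avoid-extra-steal}. It remains to verify this \texttt{empty} is never revived on the victim's deque: the only writes of \texttt{empty}$\to$\texttt{local} are \texttt{pushBottom} (owner-only, and the owner is dead) and the \texttt{CAM} at Line~\ref{line-help-set-local} inside \texttt{helpPopTop}, but the latter writes to the \emph{thief's} entry named by the pointer stored in the \texttt{taken} record, not to the victim's position $i+1$. Hence the entry at $i+1$ stays \texttt{empty}, and every subsequent \texttt{popTop}, reading \texttt{empty} at \texttt{top}, takes the \texttt{empty} case and targets an \texttt{empty} entry, as required.

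I expect the main obstacle to be rigorously discharging the two-\texttt{local} interleavings: I must check every point at which the owner could hard fault inside \texttt{pushBottom}---in particular whether \texttt{bot} has already been advanced at Line~\ref{line-pushBottom-new-bot}---and confirm that in each case the only \texttt{local} reachable by \texttt{top} is the lower one, and that no concurrent thief can witness position $i+1$ as \texttt{local} under any interleaving of the avoid-extra-steal write, the steal \texttt{CAM}, and the \texttt{top} increment. The supporting pieces I would verify with care are the tag bookkeeping that forces exactly-once success of the steal \texttt{CAM}, and the fact that \texttt{helpPopTop} touches only \texttt{top} (never an entry) on the victim's deque.
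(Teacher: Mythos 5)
Your overall route is the same as the paper's: the load-bearing step in both arguments is that Line~\ref{line-avoid-extra-steal} overwrites \texttt{stack[i+1]} with \texttt{empty} \emph{before} the steal \texttt{CAM} at Line~\ref{line-steal-local}, so that once \texttt{top} is advanced past the stolen \texttt{local} it lands on an \texttt{empty} entry, and an \texttt{empty} entry at \texttt{top} blocks every subsequent \texttt{popTop}. Your added detail on the two-\texttt{local} configuration and on the tag forcing exactly-once success of the steal \texttt{CAM} is consistent with (and slightly more explicit than) what the paper writes.

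There is, however, one step where your justification does not go through as stated. To show the \texttt{empty} at position $i+1$ is never revived, you assert that ``once [the owner] is dead the deque is frozen except for the writes thieves make through \texttt{popTop} and \texttt{helpPopTop}.'' That is exactly the case excluded by Lemma~\ref{lemma-deque-bottom-fn}, which you cite: when the owner hard faults \emph{in the middle of a} \texttt{pushBottom}, the thief that steals the \texttt{local} entry resumes the owner's active capsule, and that capsule is an invocation of \texttt{pushBottom} on the \emph{victim's} deque. Its Line~\ref{line-pushBottom-new-local} writes \texttt{local} into \texttt{stack[b+1]} of the victim---precisely the entry you need to stay \texttt{empty}---so ``owner-only, and the owner is dead'' does not dispose of it. The paper closes this case differently: the write at Line~\ref{line-pushBottom-new-local} is guarded by the test at Line~\ref{line-pushBottom-if} that \texttt{stack[b]} still equals the old \texttt{local} value, and the steal \texttt{CAM} at Line~\ref{line-steal-local} has already turned that entry into \texttt{taken} (with a bumped tag), so the guard fails and the resumed \texttt{pushBottom} falls into the else-branch, which pushes onto the \emph{thief's own} deque and leaves the victim's position $i+1$ untouched. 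You flag the \texttt{pushBottom} interleavings as the main obstacle in your closing paragraph, which is the right instinct, but the argument as written asserts the deque is frozen rather than proving the guard fails, and that assertion is false in general; you need the Line~\ref{line-pushBottom-if} guard analysis to complete the proof.
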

\begin{proof}
In order for a local entry to be stolen the top pointer of the WS-Deque must point to that entry. Since this entry is a local entry, any thief will execute the case beginning at Line~\ref{line-steal-local-case}. In this case Line~\ref{line-avoid-extra-steal} will executed prior to any CAM operation. This will set the entry below the top pointer to empty. Once the local entry has been stolen, the top pointer will be changed to point to the empty entry by the helpPopTop function. No popTop targeting an empty entry will succeed, or perform any modifications to WS-Deque at all. As long as the entry remains empty, no popTop on that WS-Deque can succeed. Empty entries can only be modified by the pushBottom function. The code that performs this modification is enclosed in the if statement at Line~\ref{line-pushBottom-if}. The condition in this if statement will always fail since the CAM at Line~\ref{line-steal-local} removes the remaining local entry from the WS-Deque. Since the empty entry pointed to by the top pointer will never be modified, no further popTop calls can be successful. 
\end{proof}

Having completed these useful structural lemmas, we can begin to prove the correctness of our functions. We focus first on proving correctness in the face of soft faults and leave hard faults for later in the proof. 

\begin{lemma}\label{lemma-popBottom-success}
Any popBottom function targeting a job entry will be successful unless a concurrent popTop function targeting the same entry is successful or the process hard faults. 
\end{lemma}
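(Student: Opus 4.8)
The plan is to establish the contrapositive form of the statement: assuming the owner does not hard fault and that no concurrent \texttt{popTop} targeting the same entry succeeds, I will show the \texttt{popBottom} must succeed. First I would unpack the code. Because of the \texttt{commit} in \texttt{popBottom}, the reads \texttt{b = bot} and \texttt{old = stack[b-1]} lie in one capsule while the CAM at Line~\ref{line-popBottom-CAM} lies in its own (CAM) capsule. Saying that the \texttt{popBottom} targets a job entry means the value read was $\mathit{old} = \langle j, \textit{job}(f)\rangle$, so the CAM attempts to replace $\langle j, \textit{job}(f)\rangle$ by $\langle j+1, \textit{local}\rangle$ at location \texttt{stack[b-1]}. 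By definition the \texttt{popBottom} is successful exactly when this CAM succeeds, which happens precisely when \texttt{stack[b-1]} still holds $\langle j, \textit{job}(f)\rangle$ at the instant the CAM executes; thus the whole argument reduces to showing this value is preserved up to that instant.

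Next I would enumerate every way \texttt{stack[b-1]} can be written and rule out all but one. The entry can be modified only by \texttt{pushBottom}, \texttt{popBottom}, \texttt{clearBottom}, or by a thief's \texttt{popTop}. By Lemma~\ref{lemma-deque-bottom-fn} the first three are invoked only by the owner, and since we assume the owner does not hard fault, it is the process executing this very \texttt{popBottom} and runs it to completion before any other bottom operation; hence no concurrent \texttt{pushBottom}/\texttt{popBottom}/\texttt{clearBottom} touches the entry, and within this \texttt{popBottom} there is no write to \texttt{stack[b-1]} before the CAM itself. The only remaining modifier is a \texttt{popTop} steal: since the entry is a job, a thief takes the job-case at Line~\ref{line-steal-job}, whose CAM is the unique write to this entry (the \texttt{helpPopTop} calls touch only the thief's own entry and the victim's \texttt{top} pointer, never \texttt{stack[b-1]} itself). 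Such a steal changes the entry to $\langle j+1, \textit{taken}(e,c)\rangle$, i.e.\ away from $\langle j, \textit{job}(f)\rangle$, and is exactly a successful concurrent \texttt{popTop}, which we have excluded.

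I would then invoke the non-reverting/tag-monotonicity property to close the loop. Both the owner's CAM and any thief's steal use the same expected value $\langle j, \textit{job}(f)\rangle$ and write a strictly larger tag $j+1$; since tags are non-decreasing, once any such CAM succeeds the entry can never again equal $\langle j, \textit{job}(f)\rangle$, so at most one of these competing CAMs ever succeeds. Given that no concurrent \texttt{popTop} succeeds, no write moves \texttt{stack[b-1]} off $\langle j, \textit{job}(f)\rangle$ before the owner's CAM executes, which therefore succeeds, making the \texttt{popBottom} successful. Finally I would account for soft faults: the CAM capsule is a non-reverting CAM capsule and hence atomically idempotent, so although a restart may re-execute the CAM, it succeeds in at most one run, and a single success already witnesses a successful \texttt{popBottom}.

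The main obstacle I expect is the clean handling of the interaction between soft-fault restarts and concurrent steals---specifically, making the ``at most one CAM ever succeeds'' linearization rigorous across re-executions of both the \texttt{popBottom} and the competing \texttt{popTop} capsules, and ensuring the enumeration of writers to \texttt{stack[b-1]} is genuinely exhaustive (including the indirect writes buried in \texttt{helpPopTop}). Establishing non-reversion via tag monotonicity is what makes this tractable.
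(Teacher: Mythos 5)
Your proposal is correct and follows essentially the same route as the paper's proof: the owner's CAM succeeds unless the job entry is changed beforehand, and the only operations that can change a job entry are a successful concurrent popTop (excluded by hypothesis) or another bottom operation (excluded by Lemma~\ref{lemma-deque-bottom-fn} and the no-hard-fault assumption). Your added detail on tag monotonicity and idempotence under soft-fault re-execution is a sound elaboration of what the paper leaves implicit.
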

\begin{proof}
If at any point during the findWork function the process hard faults, then the lemma is vacuously true. This means that we can ignore hard faults for the sake of the proof. 

The entry targeted by a popBottom invocation is the entry immediately above the bottom pointer. If this entry is a job, the CAM at Line~\ref{line-popBottom-CAM} will succeed unless the entry is changed before the CAM happens. Job entries are only modified by successful invocations of popBottom or popTop, so if neither of these functions concurrently succeed on the target entry, the CAM will succeed, and therefore the popBottom will succeed. 
\end{proof}

\begin{lemma}\label{lemma-popTop-success}
Any popTop function targeting a job entry or a local entry on a process that hard faulted will be successful unless a concurrent popBottom or popTop function targeting the same entry is successful or the process hard faults. 
\end{lemma}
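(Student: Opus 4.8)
The plan is to follow the same template as the proof of Lemma~\ref{lemma-popBottom-success}: discharge the two ``unless'' escape clauses first, then show that in the one remaining situation the relevant CAM must fire. If the process running the popTop hard faults during the call, the conclusion holds vacuously, so I would assume it does not, and I would dispose of soft faults at once by appealing to Section~\ref{sec:multiproc}: every capsule of popTop is atomically idempotent (each CAM lives in its own CAM capsule), so a soft fault merely replays a capsule and the logical outcome is fixed by the last, fault-free run of each capsule. The targeted entry is the pair read as \texttt{old} from \texttt{stack[i]} with \texttt{i = top}, and ``successful'' means the CAM at Line~\ref{line-steal-job} or Line~\ref{line-steal-local} succeeds. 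Since a CAM compares the whole tag-entry pair and tags are monotonically non-decreasing, the CAM fires exactly when \texttt{stack[i]} still equals \texttt{old} when it executes; hence the entire claim reduces to showing that the target pair is not altered between the read and the CAM except by a concurrent successful popBottom or popTop on that entry.

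First I would handle the job case, where control enters the branch at Line~\ref{line-steal-job-case}. By the transition table (Figure~\ref{ws-deque-tran-table}) a job entry can move only to local or to taken; the move to local is produced only by a successful popBottom CAM (Line~\ref{line-popBottom-CAM}) and the move to taken only by a successful popTop CAM (Line~\ref{line-steal-job}). Therefore, absent a concurrent successful popBottom or popTop on this entry, \texttt{old} is unchanged and the CAM at Line~\ref{line-steal-job} succeeds, as claimed.

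The harder case, and where I expect the main obstacle, is a local entry whose owner has hard faulted, reached at the branch at Line~\ref{line-steal-local-case}. Since the owner is dead, \texttt{isLive(ownerID)} is false and the guard reduces to the test \texttt{stack[i] == old}; if that test fails, it is because a concurrent popTop already turned the entry to taken, which is a successful concurrent popTop and so satisfies the lemma. Otherwise we reach the CAM at Line~\ref{line-steal-local}, and I must show a local entry on a dead process cannot change except via a concurrent popTop. The transition table permits local to move to empty, job, or taken, but the first two are effected only by clearBottom and pushBottom, which are bottom operations; by Lemma~\ref{lemma-deque-bottom-fn} these execute only on behalf of the owner, and a hard-faulted owner runs no further instructions. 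The delicate point is the exception in Lemma~\ref{lemma-deque-bottom-fn}: a thief can resume a bottom operation on a dead owner's deque via \texttt{getActiveCapsule}. I would argue that such a takeover can happen only after some popTop has already stolen the local entry (thereby setting it to taken), so it is never concurrent with a first successful steal, and any popTop racing with it has already been preempted by that earlier successful popTop. Excluding the bottom operations, the only remaining modifier of the local entry is another popTop's CAM (local to taken), so the CAM at Line~\ref{line-steal-local} succeeds unless a concurrent popTop succeeds, which finishes this case and the lemma.
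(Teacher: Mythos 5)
Your proposal is correct and follows essentially the same route as the paper's proof: dispose of the thief's own hard fault vacuously, handle the job case by noting that only a successful popBottom or popTop CAM can change a job entry, and handle the dead-owner local case by arguing that the bottom operations (pushBottom/clearBottom) cannot touch the entry because the owner is dead and any thief takeover of a bottom operation presupposes an earlier successful popTop on that same entry. The only cosmetic difference is that you reconstruct inline the content the paper delegates to Lemma~\ref{lemma-one-local-steal} and add an explicit appeal to capsule idempotency for soft faults, which the paper leaves implicit.
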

\begin{proof}
If at any point during the findWork function the process hard faults, then the lemma is vacuously true. This means that we can ignore hard faults for the sake of the proof. 

We first consider the case when the top pointer points to a job entry. In this case the CAM at Line~\ref{line-steal-job} will succeed unless the entry is changed before the CAM happens. Job entries are only modified by successful invocations of popBottom or popTop, so if neither of these functions concurrently succeed on the target entry, the CAM will succeed, and therefore the popTop will succeed. 

We next consider the case when the victim has hard faulted and the top pointer points to a local entry. In this case, the CAM at Line~\ref{line-steal-local} will succeed unless the entry is changed before the CAM happens. Local entries are only modified by successful invocations of popTop or invocations of pushBottom. We know from Lemma \ref{lemma-deque-bottom-fn} that pushBottom functions can only be run by the owner of the WS-Deque or a thief if the owner of the WS-Deque hard faulted. The owner has hard-faulted, so it cannot run the pushBottom function. Since pushBottom is a scheduler function, it can only be stolen from a local entry, rather than a job entry. By applying Lemma \ref{lemma-one-local-steal} we find that it is impossible for a popTop to target a local entry after the pushBottom function is stolen. By applying Lemma \ref{lemma-one-local-steal} we find that if a popTop invocation targeting a local entry on a process that hard faulted is running concurrently with the pushBottom function for that process' WS-Deque then the entry targeted by the popTop invocation was the target of another successful popTop invocation that ran concurrently with the original popTop invocation. This means that a popTop invocation targeting a local entry on a process that hard faulted will either succeed or be concurrent with another popTop invocation that succeeds at targeting the same entry.

Since we have proven the lemma for both possible cases, the proof is complete.
\end{proof}

When proving the correctness of pushBottom we consider how hard faults affect the implementation of the function and the interleavings that result. We also connect the user level interface function fork to the scheduler. 

\begin{lemma}\label{lemma-push-job}
Every continuation will be added to a WS-Deque as a job exactly the number of times fork was called on it. 
\end{lemma}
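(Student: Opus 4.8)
The plan is to prove the statement by two reductions and then an exactly-once argument for a single invocation. First I would connect the user-level \texttt{fork} to the scheduler's \texttt{pushBottom}: at Line~\ref{line-fork}, \texttt{fork(f)} calls \texttt{pushBottom(f)} on \texttt{states[getProcNum()]} with no intervening capsule boundary, so by the same reasoning as in Lemma~\ref{lemma-deque-bottom-fn} each \texttt{fork(f)} corresponds to exactly one \texttt{pushBottom(f)} invocation on the caller's own deque. Since distinct invocations advance the bottom pointer (Line~\ref{line-pushBottom-new-bot}) and therefore target distinct array slots, their effects are independent, and it suffices to show that a single \texttt{pushBottom(f)} invocation creates exactly one \texttt{job(f)} entry; summing over all invocations then yields the claimed count.

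For the per-invocation argument I would exploit the capsule structure induced by the \texttt{commit}: the first capsule only reads \texttt{b}, \texttt{t1}, \texttt{t2} into the closure, and the second capsule (Lines~\ref{line-pushBottom-if}--\ref{line-pushBottom-CAM}), which I call the \emph{push capsule}, does the work. The key step is to argue the push capsule is a CAM capsule, so that the theorem that every CAM capsule is atomically \idem{} applies. The CAM at Line~\ref{line-pushBottom-CAM} is non-reverting because it raises the slot's tag by one while tags are monotonically non-decreasing, so the slot can never again equal the pair the CAM tests for; hence the CAM succeeds at most once. The remaining instructions are the writes to \texttt{stack[b+1]} and \texttt{bot}: they write fixed values determined by the first capsule (so the capsule is \war{} conflict free), and they are race free because by Corollary~\ref{corollary-deque-bottom-pointer} only this process touches the bottom pointer during the invocation, and, by the structural invariants on the deque, the new bottom slot lies strictly below the top region that concurrent \texttt{popTop}s read. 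Atomic idempotence then gives the at-most-once half: across all soft-fault restarts, which reuse the same \texttt{b}, \texttt{t1}, \texttt{t2}, the net effect is a single \texttt{job(f)}.

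For the at-least-once half I would case on how the invocation terminates. If the owner runs the push capsule to completion, the successful CAM has produced the \texttt{job(f)}; I would rule out a spurious retry by tracing the branches: once the CAM has fired, a restart finds the bottom slot already converted to a \texttt{job} entry with the incremented tag, so the equality test at Line~\ref{line-pushBottom-if} fails, and it finds \texttt{stack[b+1]} set to \texttt{local} rather than \texttt{empty}, so the \texttt{else if} guard also fails and the invocation returns without a second push; conversely, if the CAM has not yet fired, the test at Line~\ref{line-pushBottom-if} still holds and the push is simply retried. If instead the owner hard faults inside the invocation, the local entry marking its in-progress work can be stolen via the local-steal path at Line~\ref{line-steal-local}; by Lemma~\ref{lemma-one-local-steal} exactly one thief succeeds, and that thief resumes the owner's active capsule, which is precisely the push capsule, so idempotence carries through the takeover and the single \texttt{job(f)} is still produced exactly once.

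I expect the main obstacle to be the hard-fault case: I must verify that the indices \texttt{b}, \texttt{t1}, \texttt{t2} captured by the faulted owner in its first capsule remain correct after a thief takes over, i.e., that the slot the resuming thief CAMs is still the intended bottom slot and no concurrent operation has repurposed it. This rests on the invariant that, across the owner's fault and its unique successor thief, the bottom region of this deque is manipulated by a single logical agent (Lemma~\ref{lemma-deque-bottom-fn} and Corollary~\ref{corollary-deque-bottom-pointer}), together with Lemma~\ref{lemma-one-local-steal} bounding the stolen local entries by one. Carefully enumerating the reachable interleavings of first-capsule re-read, partial push-capsule execution, and thief resumption to confirm that each yields exactly one \texttt{job(f)}---and in particular that the \texttt{else if} branch never fires spuriously after a hard fault---is the delicate part of the argument.
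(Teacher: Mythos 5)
Your fork-to-pushBottom reduction and your soft-fault analysis track the paper's proof closely: the tag makes the CAM succeed at most once, and after a successful CAM a restart finds \texttt{stack[b]} a \texttt{job} and \texttt{stack[b+1]} \texttt{local}, so both guards fail and no second push occurs. That half is fine.

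The genuine gap is in your hard-fault case, where you have the recovery mechanism backwards. You assert that the unique thief ``resumes the owner's active capsule, which is precisely the push capsule, so idempotence carries through the takeover and the single \texttt{job(f)} is still produced exactly once,'' and you list as an obligation checking ``that the \texttt{else if} branch never fires spuriously after a hard fault.'' In fact, when the owner hard faults \emph{before} the CAM at Line~\ref{line-pushBottom-CAM} succeeds, that CAM can never succeed afterwards: the steal itself overwrites \texttt{stack[b]} from \texttt{local} to \texttt{taken} (Line~\ref{line-steal-local}) and sets \texttt{stack[b+1]} to \texttt{empty} (Line~\ref{line-avoid-extra-steal}), so the resuming thief falls through the first guard at Line~\ref{line-pushBottom-if}, and the \texttt{else if} branch is \emph{supposed} to fire --- it is the designed recovery path, which re-invokes \texttt{pushBottom(f)} recursively on the \emph{thief's own} deque, not the owner's. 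The paper's proof is therefore structured as: each invocation either adds the job exactly once or hard-faults and hands off to exactly one recursive invocation on a live process's deque, with termination because not every process hard faults. Your framing (the push capsule is a CAM capsule, hence atomically idempotent, hence the takeover just completes the same push) also breaks formally here, because in the hard-fault scenario the ``other instructions'' of the capsule are not race free --- the concurrent \texttt{popTop} writes the very slots the capsule reads and writes. You would need to replace the single-capsule idempotence argument with the paper's case split on whether the CAM fired before the fault (if it did, \texttt{stack[b]} is \texttt{job} and \texttt{stack[b+1]} is \texttt{taken}, so neither branch fires and no duplicate is pushed; if it did not, the recursive call carries the obligation) and then close the recursion.
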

\begin{proof}
Job entries are only added to a WS-Deque via the pushBottom function. This function is only ever invoked by the fork function. Each call to fork directly calls pushBottom exactly once. It therefore suffices to show that each call to pushBottom other than recursive calls result in the passed argument being added to a WS-Deque exactly once. We show that each call to pushBottom will have exactly one of the following results: the argument is added to the associated WS-Deque exactly once or the owner hard faults and pushBottom is recursively called with the same argument on a different WS-Deque whose owner has not hard faulted. Since we know that not all processes can hard fault, this is sufficient. 

We begin by assuming that the process does not hard fault while running pushBottom. We know that the bottom entry of the WS-Deque is a local entry by Lemma \ref{lemma-user-work-implies-bottom-local} and that it cannot be concurrently modified by Lemma \ref{lemma-deque-bottom-fn}. This means that all statements inside the if block that begins at Line~\ref{line-pushBottom-if} are executed at least once and that the first execution of the CAM at Line~\ref{line-pushBottom-CAM} will succeed. This adds the continuation to the WS-Deque as a job entry. The tag before the entry prevents the CAM from succeeding more than once. We are then left to show that soft faults will not result in additional calls to pushBottom being made. Until the CAM succeeds, we know that the capsule will always enter the if block at Line~\ref{line-pushBottom-if}. In order for this CAM to succeed, Line~\ref{line-pushBottom-new-local} must be completed, setting the entry below the old bottom pointer to local. Local entries can only be modified by the pushBottom, clearBottom, or popTop functions. The current instance of pushBottom will not change this entry, and Lemma \ref{lemma-deque-bottom-fn} states that no other instance of pushBottom or clearBottom can be running concurrently. We observe that popTop will only modify a local entry on a process that hard faulted. This lets us conclude that the local entry will not be modified if the process does not hard fault, preventing the process from executing the recursive pushBottom call. This means that the continuation is added to the WS-Deque exactly once. 

We then consider the case when the process hard faults while running pushBottom. If the hard fault occurs prior to the CAM at Line~\ref{line-pushBottom-CAM} has succeeded, then the CAM will not succeed on this invocation of pushBottom and the thief that steals this thread will recursively call pushBottom on its own WS-Deque. In this case, the owner hard faulted, so the local entry at stack[b] will not be modified until it is stolen by a call to popTop. During this popTop, the top pointer will point to stack[b]. This means that the thief will set stack[b+1] to empty in Line~\ref{line-avoid-extra-steal} prior to completing the popTop. Furthermore, when the CAM at Line~\ref{line-steal-local} succeeds, it changes the entry at stack[b] from local to taken. When the thief begins runs the pushBottom capsule, it will bypass the if block starting at Line~\ref{line-pushBottom-if} in favor of the else block. Since the if block is not taken, the CAM will never be tried. The else block recursively calls pushBottom with the same argument on the thief's WS-Deque. 

If the hard fault occurs after the CAM succeeds, then the continuation has been added to the WS-Deque and it must not be added again. The tag before the entry prevents the CAM from succeeding more than once. Since the CAM was successful, the entry at stack[b] has been set to job. This means that in order for the pushBottom function to be restarted, a thief had to steal the local entry set at stack[b+1] during Line~\ref{line-pushBottom-new-local}. In order for the steal to occur, the CAM at Line~\ref{line-steal-local} had to succeed, which would change the entry from local to taken. Since taken entries are never modified, we know that stack[b+1] must be a taken entry for the pushBottom function to be resumed. This means that when the thief restarts the capsule, it can never reach the invocation of the pushBottom function. 

We have proven that each call to pushBottom will add the argument to the associated WS-Deque exactly once or recursively call pushBottom with the same argument on a different WS-Deque whose owner has not hard faulted. This proves that each call to fork results in the argument being added to a WS-Deque as a job exactly once, completing the proof. 
\end{proof}

Now that we have shown that user work is correctly added to the scheduler, we show that each process will try to perform the work that has been added. 

\begin{lemma}\label{lemma-work-found}
Every call to findWork results in a successful popTop or a successful popBottom unless the process hard faults or the computation ends. 
\end{lemma}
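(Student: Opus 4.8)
The plan is to show that a call to \texttt{findWork} has only three possible fates---it returns through one of its two \texttt{GOTO} statements, the process hard faults, or it spins in the steal loop forever---and to match each fate with a clause of the lemma. First I would observe that \texttt{findWork} has exactly two exit points: the \texttt{GOTO(f)} at Line~\ref{line-bottom-pop-GOTO}, taken when the opening call to \texttt{popBottom} returns a non-NULL continuation, and the \texttt{GOTO(g)} at Line~\ref{line-top-pop-GOTO}, taken when a \texttt{popTop} inside the \texttt{while} loop returns a non-NULL continuation. No other statement leaves the function, so every terminating, non-faulting execution exits through one of these two.

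Next I would argue that each exit is preceded by a \emph{successful} pop, which gives the ``results in'' direction. The call \texttt{popBottom} returns its continuation only inside the guard at Line~\ref{line-popBottom-check-CAM}, which requires \texttt{stack[b-1]} to hold the value written by the CAM at Line~\ref{line-popBottom-CAM}; since the tag makes that value attributable only to this CAM, the guard can pass only when the CAM succeeded, which is exactly a successful \texttt{popBottom} by definition. Likewise, \texttt{popTop} returns a non-NULL value only at Line~\ref{line-poptop-return-job} or Line~\ref{line-poptop-return-local}, each guarded by a check that \texttt{stack[i]} equals the value installed by the CAM at Line~\ref{line-steal-job} or Line~\ref{line-steal-local}; again the guard passes only on a successful CAM, i.e.\ a successful \texttt{popTop}. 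Hence whenever \texttt{findWork} returns it does so immediately after a successful \texttt{popBottom} or \texttt{popTop}.

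It then remains to rule out the remaining fate: the process stays alive yet executes the \texttt{while} loop forever. I would show that this can happen only once the computation has ended. Each iteration performs a \texttt{yield()} and then a fresh \texttt{popTop} on a random victim, and \texttt{popTop} is non-blocking (a constant number of instructions, using \texttt{helpPopTop} to finish any steal already in progress), so the process keeps issuing steal attempts. By Lemma~\ref{lemma-popTop-success}, a \texttt{popTop} aimed at a job entry, or at a local entry on a hard-faulted process, succeeds unless a concurrent \texttt{popBottom} or \texttt{popTop} succeeds on the same entry or the thief hard faults; in either success case the targeted work is removed from the system. Thus as long as any stealable work exists, steal attempts keep consuming it, and the only configuration consistent with an unbounded spin is one in which no deque holds a job, no dead process holds an unstolen local entry, and no live process is poised to fork more---which is precisely the condition that the computation has ended. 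The \texttt{yield()} is what makes this sound: by the yield-to-all semantics a live process holding real work is scheduled in preference to the spinning process, so it cannot be starved out of eventually forking the work that the spinner is waiting for.

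I expect this last, liveness step to be the main obstacle, as it is the only part that is not a direct reading of local code: it must knit together the random-victim selection, the non-blocking guarantee of \texttt{popTop}, Lemma~\ref{lemma-popTop-success}, and the yield-to-all property in order to certify that ``the loop spins forever'' and ``the computation has ended'' are the same event. By contrast, the safety direction that ties each return to a successful pop is immediate from the guard conditions at the two return sites.
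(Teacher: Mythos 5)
Your proposal is correct and follows essentially the same route as the paper's proof: dispose of the hard-fault case, tie each return of \texttt{findWork} to a successful CAM in \texttt{popBottom} or \texttt{popTop}, and then argue liveness of the steal loop by combining Lemma~\ref{lemma-popTop-success} (a failed steal attempt on a valid target implies a concurrent successful pop, hence progress) with the observation that while the computation is unfinished some deque must hold a job entry, a dead process's local entry, or a live process that will eventually fork or finish. The paper's version spells out that last case analysis a bit more explicitly via Lemmas~\ref{lemma-push-job} and~\ref{lemma-user-work-implies-bottom-local}, but the substance is the same.
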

\begin{proof}
If at any point during the findWork function the process hard faults, then the lemma is vacuously true. This means that we can ignore hard faults by the process calling findWork for the sake of the proof. 

The findWork function begins by calling the popBottom function. Lemma \ref{lemma-popBottom-success} shows that the popBottom call will be successful unless unless all job entries on the WS-Deque are stolen prior to the CAM at Line~\ref{line-popBottom-CAM}. 

If the popBottom function is not successful then the findWork function will proceed to the while loop that performs steal attempts. This loop selects a victim process at random, and the performs the popTop function on that victim. We know from Lemma \ref{lemma-popTop-success} that popTop will succeed if the top entry of the victim's WS-Deque is a job entry or if the victim has crashed and the top entry of its WS-Deque is a local entry unless a concurrent popTop targeting the same entry succeeds. 

In order for the computation to complete, each user thread must be run to completion. This means that if the computation is not complete there is a positive number of user threads that have not been run to completion. Since user threads can only be enabled by other user threads, a at least one of these threads must be enabled. Lemma \ref{lemma-push-job} states that this thread had a job entry created for it. Since the thread has not been completed, it must have a process working on it or its job entry must be in a WS-Deque. If the job entry is in a WS-Deque, then the top pointer of that WS-Deque must point to that entry, or another job entry above it. We know from Lemma \ref{lemma-popTop-success} that if the thief calls popTop on this WS-Deque, it will succeed unless a concurrent call to popBottom or popTop successfully targets that entry. If a process is working on the thread, Lemma \ref{lemma-user-work-implies-bottom-local} states that WS-Deque has a local entry pointed to by the bottom. If the process does not hard fault, it will eventually call fork with a new continuation or complete its user level thread. If fork is called, it will result in a new job entry which may be targeted by popBottom or popTop. If the user level thread is completed, either there exists another enabled user level thread that this analysis applies to, or the computation is complete. If the process hard faults at any time, then its top entry is a valid popTop target and Lemma \ref{lemma-popTop-success} applies. 

At any time, progress is being made towards the end of the computation or there exists a target that the process running findWork may call popBottom or popTop on successfully. Since the findWork function will make attempts to call popTop repeatedly until it succeeds, it will eventually either succeed, or the computation will finish. 
\end{proof}

We extend the proof of processor effort to show that between all of the available processes, all of the work that is added to the scheduler is found without inadvertently duplicating any of that work.

\begin{lemma}\label{lemma-pop-job}
Each job entry in a WS-Deque will be the target of a successful popBottom or popTop exactly once.
\end{lemma}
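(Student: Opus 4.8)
The plan is to establish the two halves of ``exactly once'' separately: that every job entry is targeted by a successful pop \emph{at most once}, and \emph{at least once}. Both halves rest on two facts that are already available. The first is the entry-state transition table (Figure~\ref{ws-deque-tran-table}), which shows that a \texttt{job} entry can only become \texttt{local} (via \texttt{popBottom}) or \texttt{taken} (via \texttt{popTop}). The second is the monotonicity of the per-slot tag: it is incremented on every state change and never reset, so a given tagged value $\langle j,\texttt{job}(f)\rangle$ can occur at a slot at most once over the whole execution. I would treat a ``job entry'' as this specific tagged value at its slot.

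For the at-most-once direction, I would observe that a job entry is successfully popped only through the CAM at Line~\ref{line-popBottom-CAM} (\texttt{popBottom}) or the CAM at Line~\ref{line-steal-job} (\texttt{popTop}), each of which uses $\langle j,\texttt{job}(f)\rangle$ as its compare argument and writes a value carrying an incremented tag and a non-\texttt{job} state. The first such CAM to linearize changes the slot's value, and by tag monotonicity the value $\langle j,\texttt{job}(f)\rangle$ never recurs, so every later CAM against that old value fails. Concurrency among a \texttt{popBottom} and one or more \texttt{popTop}s on the same slot is resolved by Lemmas~\ref{lemma-popBottom-success} and~\ref{lemma-popTop-success}: whichever CAM is first linearized succeeds and the rest fail. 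Restarts after soft faults merely re-issue the same (now failing) CAM, so they cannot produce a second success.

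For the at-least-once direction, I would first show that a job entry cannot leave the \texttt{job} state except through a successful pop. Inspecting every writer of deque slots---\texttt{pushBottom}, which only rewrites the bottom \texttt{local}/\texttt{empty} slots (Lines~\ref{line-pushBottom-new-local}--\ref{line-pushBottom-CAM}); \texttt{clearBottom}, which only clears the bottom \texttt{local} slot; the local-steal CAM at Line~\ref{line-steal-local}, which targets only \texttt{local} slots; and \texttt{helpPopTop}, which only touches the thief's slot (Line~\ref{line-help-set-local}) and the top pointer---none modifies a \texttt{job} slot, consistent with the transition table. Hence a created job (Lemma~\ref{lemma-push-job}) persists as a live target until popped. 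I would then invoke Lemma~\ref{lemma-work-found}: while the computation has not terminated, every \texttt{findWork} yields a successful pop, and by Lemmas~\ref{lemma-popBottom-success}/\ref{lemma-popTop-success} a persisting topmost or bottommost job is guaranteed poppable once some process attempts it. Since the number of jobs is finite (Lemma~\ref{lemma-push-job}) and each success consumes a distinct one (the at-most-once direction), a never-popped job would force the computation to run forever while the jobs above and below it are exhausted---a contradiction.

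The main obstacle is this at-least-once (liveness) argument, specifically ruling out that a particular job is perpetually passed over. The delicate cases are (i) concurrency between the owner's \texttt{popBottom} and thieves' \texttt{popTop} near the single remaining job, where I must use the structural ordering invariant (\texttt{taken}/\texttt{job}/\texttt{local}/\texttt{empty}) and the placement of the top and bottom pointers to show the job is exposed to exactly one side; and (ii) a hard fault of the owner while jobs remain, where responsibility for the bottom transfers (Corollary~\ref{corollary-deque-bottom-pointer}) and the job must still become reachable as a \texttt{popTop} target via the top pointer rather than being stranded beneath a dead owner's \texttt{local} slot. Tying the finite-descent/termination argument cleanly to Lemma~\ref{lemma-work-found} without circularity is the step I expect to require the most care.
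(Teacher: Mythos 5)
Your proposal is correct and follows essentially the same structure as the paper's proof: the at-most-once half via the state-changing CAM (with the tag preventing any re-success), and the at-least-once half via the deque's structural ordering invariant, random steal attempts eventually reaching every deque, and the observation that jobs on a hard-faulted owner's deque remain reachable from the top because the local entry sits below all job entries and so cannot be stolen until the jobs are exhausted. Your version is somewhat more explicit about tag monotonicity and frames liveness as a finite-descent contradiction where the paper argues progress directly, but these are presentational differences, not a different route.
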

\begin{proof}
To show that a job entry cannot be successfully popBottomed or popTopped more than once, we note that either successful function is caused by a successful CAM operation on the associated entry. Such a CAM changes the entry to either local or taken depending on which function was successful. 

In order for the computation to complete, each user thread must be run to completion. This means that while there are job entries in any WS-Deque, the scheduler will continue to run. Job entries are located in a WS-Deque above the bottom pointer of the WS-Deque and at or below the top pointer of the WS-Deque. The structure of a WS-Deque means that if a job entry is not directly above the bottom pointer, all entries between it and the bottom entry are job entries. Similarly, if a job entry is not at the top pointer, then all entries between it and the top entry are job entries. 

If a process that does not have user level work on its WS-Deque does not hard fault, it will perform one failed popBottom call, and then repeatedly make popTop attempts until one is successful and it becomes a process that has user level work. These popTop attempts are made on random processes, ensuring that each process will eventually e chosen as a victim. 

If a process that has user level work on its WS-Deque does not hard fault, it will repeatedly run any local work that it has, then call the popBottom function. Lemma \ref{lemma-popBottom-success} tells us that this function will succeed unless it is targeting a non-job entry or a concurrent popTop call targeting the same entry succeeds. If popBottom succeeds, the bottom pointer is set to the next lowest entry and the process is repeated. If popBottom is not targeting a job entry, the structure of a WS-Deque tells us that there are no job entries on that WS-Deque. If a concurrent popTop call targeting the same entry succeeds that entry becomes taken and the top pointer of the WS-Deque must point to that entry. This also means that the WS-Deque contains no job entries. Once the WS-Deque contains no job entries and the local entry (if any) finishes, the process becomes a process that does not have user level work. 

If a process that has user level work on its WS-Deque does hard fault, it will have some non-negative number of job entries above the bottom pointer of its WS-Deque. We rely on thief processes to pop these entries from the WS-Deque, and assume that they exist. Lemma \ref{lemma-popTop-success} tells us that every popTop attempt on the WS-Deque will be successful unless a concurrent popTop or popBottom is successful. Lemma \ref{lemma-deque-bottom-fn} states that popBottom cannot be run on a WS-Deque owned by a process that hard faulted unless it is stolen. Since popBottom is a scheduler function, it can only be stolen by a local entry. The structures of the WS-Deque means that local entries cannot be stolen until there are no job entries on the WS-Deque. This means that the top entry will be the target of a successful popTop call from a thief. Since a successful popTop call results in the top pointer being lowered, this process will repeat until all job entries have been targeted by successful popTop calls.

As long as there exists at least one process that does not hard fault, it will switch between having user level work that it completes to not having user level work and making popTop attempts until it finds some. The end result of this process is that no job entries will remain in any WS-Deque. Since job entries are only modified by successful calls to popBottom or popTop, each job entry must have been removed by a successful popBottom or popTop call. 
\end{proof}

Once the scheduler has assigned threads to various processes, the processes must complete the work. The following two lemmas show that each thread that is assigned has computation begun on it, which is sufficient to show completion in the face of soft faults. 

\begin{lemma}\label{lemma-run-popped-top}
Every continuation that is successfully popTopped is jumped to at least once.
\end{lemma}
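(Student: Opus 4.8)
The plan is to show that once a popTop succeeds---i.e., once one of the CAMs at Line~\ref{line-steal-job} or Line~\ref{line-steal-local} succeeds---the process running the enclosing \texttt{findWork} call (the thief) is guaranteed to reach the jump at Line~\ref{line-top-pop-GOTO} with the stolen continuation as its argument. First I would pin down exactly which continuation is ``successfully popTopped'': in the job case (Line~\ref{line-steal-job-case}) it is the value \texttt{f} extracted from the stolen \texttt{job} entry and returned at Line~\ref{line-poptop-return-job}, while in the dead-owner local case (Line~\ref{line-steal-local-case}) it is the value returned by \texttt{getActiveCapsule(ownerID)} at Line~\ref{line-poptop-return-local}. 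In both cases the \texttt{return} is guarded by the test \texttt{if (stack[i] != new) return NULL}, so the crux is to prove that after a successful CAM this test always evaluates to false and the non-NULL continuation is returned.

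For that I would invoke the immutability of \texttt{taken} entries (from the transition table in Figure~\ref{ws-deque-tran-table} together with the fact that the deque never deletes entries): once the CAM writes $\texttt{new} = \langle j{+}1,\texttt{taken}(e,c)\rangle$ into \texttt{stack[i]}, no process can change \texttt{stack[i]} again, so \texttt{stack[i] == new} holds from that point on and the guard is never taken. Hence the returning capsule yields the stolen continuation rather than \texttt{NULL}, and \texttt{findWork} executes the \texttt{GOTO} at Line~\ref{line-top-pop-GOTO}.

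The second ingredient is soft-fault tolerance of the returned value, which gives the ``at least once'' conclusion. Here I would argue that the continuation is stably recoverable across restarts. In the job case, \texttt{f} is a component of \texttt{old}, which is read and committed \emph{before} the CAM capsule and thereafter carried in the capsule closures via continuation passing, so every restart of the post-CAM capsules re-derives the same \texttt{f}; crucially, \texttt{f} is never re-read from \texttt{stack[i]} after the CAM has overwritten it with \texttt{taken}, so the stale-read hazard does not arise. In the local case, \texttt{getActiveCapsule(ownerID)} reads the restart pointer of a process that has already hard faulted and is therefore fixed, so repeated evaluations return the same continuation. Combined with the guard argument, every (re-)execution of the returning capsule produces the same continuation and jumps to it, so a thief that does not itself hard fault reaches the \texttt{GOTO} in finitely many steps.

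The main obstacle I anticipate is the boundary between soft and hard faults of the thief itself, and I expect to handle it by reduction rather than direct accounting. The two facts above establish the claim whenever the thief does not hard fault, which suffices for the soft-fault setting these lemmas target. If the thief hard faults after its successful CAM but before the jump, I would observe that the victim's \texttt{taken} entry points at the thief's own bottom slot, so the \texttt{helpPopTop} step at Line~\ref{line-help-set-local}---runnable by any process---installs a \texttt{local} entry at the bottom of the thief's deque (Lemma~\ref{lemma-user-work-implies-bottom-local}); the stolen thread is then recoverable by a subsequent local-entry steal that applies \texttt{getActiveCapsule} to the dead thief, which re-enters the same returning capsule and jumps to the continuation. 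This keeps ``jumped to at least once'' valid globally while deferring the detailed hard-fault takeover to the later lemmas, matching the paper's organization; I would flag the reduction explicitly rather than expand it here.
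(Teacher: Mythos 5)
Your proposal is correct and follows essentially the same route as the paper's proof: it keys on the immutability of \texttt{taken} entries to show the post-CAM guard always passes, notes that soft faults merely re-run instructions without changing the recovered continuation, and handles a thief that hard faults after its successful CAM by relying on \texttt{helpPopTop} installing a \texttt{local} entry at the thief's bottom slot so a later local-entry steal (via \texttt{getActiveCapsule}) resumes the same capsule and performs the jump. The only cosmetic difference is that you make the stability of \texttt{f} across capsule restarts (via the pre-CAM \texttt{commit} and closure-passing) explicit, which the paper leaves implicit.
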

\begin{proof}
We consider a continuation to successfully popTopped if the WS-Deque entry associated with that continuation is targeted by a successful CAM operation inside of the popTop function. We consider an entry and a continuation to be associated if the entry is a job containing the continuation or the entry is local while the continuation is being run by the process that owns the WS-Deque the entry resides in. After the successful CAM, the target entry has been set to a taken entry that contains a pointer to the bottom of the thief. Since taken entries are never changed, we know that the if statement will succeed if and only if the CAM succeeded during the current capsule. Therefore if the process does not hard fault then the continuation will be returned to findWork, which jumps to that continuation. Soft faults may cause some of the instructions to be re-run, but will not change the resulting memory state. If the process hard faults at any point between the successful CAM and the jump, it relies on other thieves calling the helpPopTop function to ensure that there is a local entry at the location pointed to in the taken entry, which is the bottom of its WS-Deque. This entry will eventually be stolen by some other thief. That thief will restart the capsule that the original thief hard faulted during. Since we know that not all processes hard fault, at some point a process will complete the popTop function and jump to the continuation inside the findWork function. 
\end{proof}

\begin{lemma}\label{lemma-run-popped-bottom}
Every continuation that is successfully popBottomed is jumped to at least once.
\end{lemma}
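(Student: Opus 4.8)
The plan is to mirror the proof of Lemma~\ref{lemma-run-popped-top}, separating the soft-fault and hard-fault behaviour of the owner. First I would fix notation: by our definition a \emph{successful} popBottom is a successful CAM at Line~\ref{line-popBottom-CAM}, which atomically rewrites the targeted entry from $\langle j, \texttt{job}(f)\rangle$ to $\langle j+1, \texttt{local}\rangle$. The continuation to be run is the $f$ carried in the owner's popBottom closure, and the jump itself is the GOTO at Line~\ref{line-bottom-pop-GOTO} of findWork once popBottom returns $f$. I would then show that, regardless of faults, some live process executes that GOTO with argument $f$ at least once.

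For the soft-fault case (the owner never hard faults) I would argue that the capsule containing the test at Line~\ref{line-popBottom-check-CAM} is idempotent and always returns $f$. The CAM is non-reverting because the tag is bumped to $j+1$, so no restart can make it succeed a second time, and on every restart the entry already reads $\langle j+1,\texttt{local}\rangle$, which makes the test at Line~\ref{line-popBottom-check-CAM} pass and return $f$. To close this I need that no other process rewrites \texttt{stack[b-1]} away from $\langle j+1,\texttt{local}\rangle$ while the owner is alive: by Lemma~\ref{lemma-deque-bottom-fn} only the owner may run pushBottom or clearBottom and it is busy inside popBottom, while popTop only overwrites a \texttt{local} entry when its owner has hard faulted. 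Hence the test survives arbitrarily many soft faults, $f$ is returned, and Line~\ref{line-bottom-pop-GOTO} jumps to it.

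For the hard-fault case I would split on whether the owner died before or after executing the GOTO at Line~\ref{line-bottom-pop-GOTO}. If after, then $f$ has already been jumped to and we are done. If before, the owner's deque still carries $\langle j+1,\texttt{local}\rangle$ at \texttt{stack[b-1]}, and this entry is the recovery handle exactly as in Lemma~\ref{lemma-run-popped-top}: once the top pointer advances to it (the jobs above it, if any, being stolen first and handled by Lemma~\ref{lemma-run-popped-top}), Lemma~\ref{lemma-popTop-success} together with Lemmas~\ref{lemma-work-found} and~\ref{lemma-pop-job} guarantee it is eventually the target of a successful popTop by some live thief, which then runs \texttt{getActiveCapsule(ownerID)} and thereby re-executes the owner's active capsule---the capsule whose completion returns $f$ and jumps to it. Since at least one process never hard faults, this recovery terminates with $f$ being jumped to at least once.

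I expect the last sub-case---the owner dying in the narrow window after the successful CAM but before the GOTO---to be the main obstacle. The difficulty is that the rescuing steal itself rewrites \texttt{stack[b-1]} from \texttt{local} to \texttt{taken}, so a naive re-run of the recovery capsule must still be shown to yield $f$; unlike popTop, whose return test at Line~\ref{line-poptop-return-job} checks for the \emph{immutable} \texttt{taken} state, popBottom's test at Line~\ref{line-popBottom-check-CAM} checks for the \emph{mutable} \texttt{local} state. Handling this cleanly requires reasoning precisely about what \texttt{getActiveCapsule(ownerID)} returns in this window and arguing, via the non-decreasing tags and the immutability of \texttt{taken} entries, that the continuation carried in the owner's closure is re-run exactly along the path that jumps to $f$, with no duplicate or dropped execution.
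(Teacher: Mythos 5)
Your overall route is the same as the paper's: define success as the CAM at Line~\ref{line-popBottom-CAM} succeeding, argue the no-hard-fault case by idempotence of the subsequent test at Line~\ref{line-popBottom-check-CAM} (using Lemma~\ref{lemma-deque-bottom-fn} and the fact that popTop only touches \texttt{local} entries of dead owners), and reduce the hard-fault case to the local entry at the bottom being stolen and the active capsule being re-run, invoking Lemma~\ref{lemma-run-popped-top}. Up to that point your argument matches the paper's proof essentially line for line.

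The difference is the final sub-case, where the owner dies between the successful CAM and the GOTO at Line~\ref{line-bottom-pop-GOTO}: you explicitly flag it as the main obstacle and leave it unresolved, so as written your proof has a genuine gap there. The paper closes this sub-case with a single sentence --- that jumping back into popBottom or findWork ``will maintain the local variables, including the continuation that will then be jumped to in findWork'' --- i.e., it relies on $f$ living in the closure of the rescued capsule so that the re-run still reaches the GOTO. It is worth saying that your specific worry is not frivolous: the rescue steal at Line~\ref{line-poptop-return-local} does rewrite \texttt{stack[b-1]} from $\langle j{+}1,\texttt{local}\rangle$ to a \texttt{taken} entry before the thief jumps to the owner's active capsule, and a literal re-execution of the test at Line~\ref{line-popBottom-check-CAM} would then see \texttt{taken} rather than \texttt{local}. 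The paper's proof does not engage with this; it simply asserts the local state carries through. So you have correctly isolated the one step the paper treats most casually, but to have a complete proof you would need to supply the argument the paper only gestures at (or argue that the active capsule in that window is the findWork capsule with $f$ already bound, so the test is never re-run); merely naming the obstacle does not discharge it.
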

\begin{proof}
We consider a continuation to successfully popBottomed if the WS-Deque job entry containing that continuation is targeted by a successful CAM operation inside of the popBottom function. After the CAM is successful, the target entry has been set to local. This local entry can only be changed by a call to clearBottom, or a call to popTop after the process hard faults. By Lemma \ref{lemma-deque-bottom-fn}, we know that clearBottom cannot run concurrently with popBottom. This means that the if statement will succeed if and only if the CAM succeeded during the current capsule. Therefore if the process does not hard fault then the continuation will be returned to findWork, which jumps to that continuation. Soft faults may cause some of the instructions to be re-run, but will not change the resulting memory state. If the process hard faults at any point between the successful CAM and the jump, then then a local entry will exist at the bottom of its WS-Deque until that entry is stolen. Lemma \ref{lemma-run-popped-top} shows that once the entry is stolen, it will be jumped to at least once. Jumping to either popBottom or findWork during the specified window will maintain the local variables, including the continuation that will then be jumped to in findWork. 
\end{proof}

We now show that hard faults do not prevent any computation from being completed. 

\begin{lemma}\label{lemma-restart-work}
Any user thread on a process that hard faulted will be be the result of a successful popTop and the capsule that was in process will be restarted. 
\end{lemma}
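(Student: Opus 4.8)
The plan is to follow the faulted process's active thread from the instant of the hard fault to the moment a live thief resumes it, and to pin down that the thief re-enters exactly the capsule that was active at the fault. First I would locate the thread: by Lemma~\ref{lemma-user-work-implies-bottom-local} a process working on a user thread has a \texttt{local} entry pointed to by the bottom of its WS-Deque, and that entry marks the active work (whose actual program state is the process's restart pointer). I then argue the entry survives until it is stolen. A \texttt{local} entry is altered only by \texttt{clearBottom}, \texttt{pushBottom}, or \texttt{popTop}; by Lemma~\ref{lemma-deque-bottom-fn} a dead owner runs none of the first two, so the only way it changes is a \texttt{popTop} that the guard \texttt{!isLive(ownerID)} at Line~\ref{line-steal-local-case} enables precisely because the owner has hard faulted.

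Second I would show the entry is actually claimed by a successful \texttt{popTop}. By the WS-Deque ordering invariant the \texttt{local} entry sits below all \texttt{job} entries, so the victim's top pointer reaches it only after every \texttt{job} above it has been taken, and Lemma~\ref{lemma-pop-job} guarantees those are all popped. Once the top pointer names the \texttt{local} entry, Lemma~\ref{lemma-popTop-success} (applied to a local entry of a hard-faulted process) says a thief's \texttt{popTop} on it succeeds unless a concurrent \texttt{popTop} does, and Lemma~\ref{lemma-one-local-steal} forces exactly one success. To exclude the degenerate case in which no thief ever attempts the steal, I reuse the progress argument of Lemma~\ref{lemma-work-found}: until the computation ends, live processes repeatedly run \texttt{findWork} on random victims, so the dead process's deque is eventually targeted, and since at least one process never hard faults, some live thief completes the steal.

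Third I would establish the \emph{which-capsule} claim. A successful steal of a \texttt{local} entry returns \texttt{getActiveCapsule(ownerID)} at Line~\ref{line-poptop-return-local}, which reads the victim's restart pointer; because the restart pointer advances only when a capsule is installed, it names the boundary of the capsule active when the fault struck, so no capsule that completed before the fault is re-run. By Lemma~\ref{lemma-run-popped-top} this returned continuation is jumped to at least once, hence the in-progress capsule is restarted from its boundary, exactly as claimed.

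The main obstacle I anticipate is the transient in which the process died \emph{inside} a \texttt{pushBottom}, after Line~\ref{line-pushBottom-new-local} but before the CAM at Line~\ref{line-pushBottom-CAM}, so that two \texttt{local} entries coexist. Here I must show the steal still targets a single entry and the resumed capsule behaves correctly: the thief in the local-steal branch clears the lower \texttt{local} to \texttt{empty} at Line~\ref{line-avoid-extra-steal} before its CAM at Line~\ref{line-steal-local}, so only the upper \texttt{local} is taken; the thief then recovers the still-active \texttt{pushBottom} capsule through \texttt{getActiveCapsule}, and on restart the now-\texttt{taken} entry forces the \texttt{else} branch of \texttt{pushBottom}, matching the exactly-once handoff already proved in Lemma~\ref{lemma-push-job}. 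Reconciling this interleaving with the idempotence of the CAM capsule, and confirming that \texttt{getActiveCapsule} reads a stable restart pointer once the owner is provably dead, is the delicate part of the argument.
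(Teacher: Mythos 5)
Your proposal is correct and follows essentially the same route as the paper: locate the dead process's in-progress work via Lemma~\ref{lemma-user-work-implies-bottom-local}, use Lemmas~\ref{lemma-work-found} and~\ref{lemma-pop-job} to argue a live thief's \texttt{popTop} eventually and successfully targets the surviving \texttt{local} entry, and conclude with Lemma~\ref{lemma-run-popped-top} plus the fact that the local-steal branch returns \texttt{getActiveCapsule(ownerID)}, so the thief resumes exactly the faulted capsule. The paper phrases the middle step as a counting/exclusion argument (job-targeted pops are finite, so remaining successful \texttt{popTop}s must hit dead locals) where you argue it directly, but the decomposition and supporting lemmas are the same.
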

\begin{proof}
In order for the computation to complete, each user thread must be run to completion. This means that while there are unfinished user threads, the scheduler will continue to run. Lemma \ref{lemma-user-work-implies-bottom-local} states that any process that is working on a user level thread has a local entry that is pointed to by its bottom pointer. Lemma \ref{lemma-work-found} states that processes that run out of work will eventually perform a successful popBottom or popTop unless they hard fault or the computation ends. Using Lemma \ref{lemma-pop-job}, we know that the number of successful calls that target a job entry is limited. Since successful popBottoms can only target job entries and successful popTops can only target job or local entries, all other successful popTop calls must occur on user threads on processes that have hard faulted. We combine Lemma \ref{lemma-run-popped-top} with the fact that popTop calls getActiveCapsule when stealing a local entry to finish the proof. 
\end{proof}

We have shown that all work created at the user level is completed and that no user level threads are created by the scheduler. We conclude the proof by showing that the scheduler does not over-execute user threads. 

\begin{lemma}\label{lemma-no-duplicate-work}
No capsule in user level code will be run to completion more times than the number of times it is invoked by user level code. 
\end{lemma}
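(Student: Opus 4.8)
The plan is to track each user-level thread as a strictly forward progression through its chain of capsules $C_1, C_2, \ldots$, and to argue that this progression advances past any given $C_i$ at most once per invocation, regardless of soft faults, hard faults, or stealing. First I would make precise that a capsule is \emph{run to completion} exactly when it installs its successor (writes the next restart pointer), and recall that installing is atomic, being a single-instruction swing of the restart pointer (Section~\ref{sec:closures}). Within a process running a thread, completing $C_i$ makes $C_{i+1}$ the active capsule, so the active capsule moves strictly forward along the chain and never returns to a capsule that has already completed; each completion corresponds to one such forward step.

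Next I would dispatch soft faults. A soft fault restarts the process from the active capsule, i.e.\ from the beginning of the capsule that has not yet completed, and does not advance the restart pointer, hence does not count as a completion. Thus between two successive installations the active capsule may be partially re-run many times but is completed at most once, and by atomic idempotence (Theorem~\ref{theorem:rwcf+rf}) these re-runs are indistinguishable in memory from a single execution. Consequently, absent hard faults, each $C_i$ is completed at most once per time it is installed.

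The substance of the argument is showing that hard faults and stealing cannot cause a second completion of an already-completed capsule, nor a concurrent second execution of a live thread. I would invoke Lemma~\ref{lemma-pop-job} and Lemma~\ref{lemma-one-local-steal} to conclude that each job entry, and each \texttt{local} entry of a hard-faulted owner, is claimed by at most one successful pop, and that a \texttt{local} entry is stolen only when \texttt{isLive(ownerID)} reports the owner dead. Since a dead process is not executing, no two processes ever run the same thread simultaneously; execution of a thread is a sequence of disjoint intervals handed off from a dead process to a unique thief. At each handoff of an already-running thread the thief resumes at \texttt{getActiveCapsule(ownerID)}, precisely the victim's active (not-yet-completed) capsule, so capsules the victim already completed lie strictly behind the active pointer and are never revisited. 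Combining the monotonicity of the active pointer across handoffs with the uniqueness of each handoff, the active pointer of a thread still passes each $C_i$ at most once per invocation.

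The main obstacle I anticipate is the boundary case of a hard fault landing in the narrow window around installing a capsule or around the steal itself: a victim that dies immediately after (or before) swinging its restart pointer from $C_i$ to $C_{i+1}$, and a thief that dies between the successful steal CAM (Line~\ref{line-steal-job} or Line~\ref{line-steal-local}) and the jump to the continuation. For the former, atomicity of installation guarantees \texttt{getActiveCapsule} reports either $C_i$ or $C_{i+1}$ but nothing in between, so the thief neither re-completes $C_i$ after it finished nor skips it before it finished. For the latter I would lean on Lemma~\ref{lemma-run-popped-top} and Lemma~\ref{lemma-run-popped-bottom}, which already guarantee the stolen continuation is jumped to at least once via the \texttt{helpPopTop} protocol; together with Lemma~\ref{lemma-one-local-steal}, the surviving \texttt{local} entry is re-stolen by at most one further thief, so the handoff chain stays single-threaded and the ``at most once'' count is preserved. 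Pinning down these windows, using the atomic restart-pointer swing and the \texttt{taken}-tag argument, is where the real work lies.
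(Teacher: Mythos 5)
Your overall architecture --- monotone forward motion of the restart pointer within a thread, at most one successful pop per deque entry (Lemmas~\ref{lemma-pop-job} and~\ref{lemma-one-local-steal}), and resumption at \texttt{getActiveCapsule} so that already-completed capsules lie strictly behind the handoff point --- matches the paper's proof and is sound as far as it goes. But there is a genuine gap: you bound the number of times each \emph{deque entry} is claimed, yet never bound the number of deque entries created per user-level \texttt{fork}. If a single call to \texttt{fork} could result in the same continuation being installed as two job entries, both would be popped (each ``at most once''), two processes would run the thread's capsule chain independently, and every capsule of that thread would be completed twice per invocation --- violating the lemma while satisfying everything you prove. Closing this requires the content of Lemma~\ref{lemma-push-job}: every continuation is added as a job exactly as many times as \texttt{fork} is called on it. This is not a formality, because the dangerous window is a hard fault in the middle of \texttt{pushBottom}: the victim's active capsule is then a \emph{scheduler} capsule, not a user capsule, and the thief that steals the \texttt{local} entry re-executes \texttt{pushBottom} with the same argument. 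The paper's proof spends most of its effort here, splitting on whether the hard fault occurs before or after the CAM at Line~\ref{line-pushBottom-CAM} first succeeds, and using the tag plus the fact that the thief finds \texttt{taken} at \texttt{stack[b]} and \texttt{empty} at \texttt{stack[b+1]} (set at Line~\ref{line-avoid-extra-steal}) to show the re-run either takes the else-branch and pushes onto the thief's own deque exactly once, or bypasses both branches entirely.

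Relatedly, your handoff argument implicitly assumes \texttt{getActiveCapsule} always points into user code, but a victim can die while its active capsule is \texttt{pushBottom} or the \texttt{scheduler}/\texttt{clearBottom} code; you need to check that re-running those scheduler capsules on the thief neither duplicates a job entry (the case above) nor resurrects a thread that has already finished. Once Lemma~\ref{lemma-push-job} and these scheduler-capsule re-runs are incorporated, the rest of your argument goes through essentially as in the paper.
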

\begin{proof}
A capsule is considered run to completion when all of its instructions have been completed and the restart pointer for the subsequent capsule has been installed. We assume that capsules are handled as discussed in Section {sec-single-proc-robust}, which describes how to ensure that soft faults during direct runs will not cause a capsule to run to completion multiple times. We are then left to show that the scheduler does not result in extra invocations of user level code. This might happen in two ways: threads might be added to WS-Deques more times than they were enabled or entries on WS-Deques may be run multiple times. 

We first show that threads are not added to WS-Deques more times than they are enabled. Threads are only enabled as job entries through calls to the fork function. We show in Lemma \ref{lemma-push-job} that the number of job entries added for a thread is exactly the number of times fork is invoked on that thread. 

We next show that although WS-Deque entries can be run multiple times, this will not result in any capsule being run to completion multiple times. Lemma \ref{lemma-pop-job} states that each job entry will be the result of a successful popBottom or popTop exactly once. In Lemmas \ref{lemma-run-popped-bottom} and \ref{lemma-run-popped-top}, we prove that in either of these cases we will jump to the beginning of the thread. 

If the thread is run to completion without hard faulting, it will complete the user level work normally, possibly make calls to fork, and then call the scheduler function. We know that local work is not stolen from a process unless that process hard faulted, so we do not have to consider steal attempts at this time. The user level work does not interact with the scheduler, and therefore cannot affect the entries on the WS-Deque. If pushBottom is run to completion without any hard faults, then the original entry that corresponded to the user thread will be replaced with a job entry containing the newly enabled thread continuation. A new local entry corresponding to the thread will be created below the original entry. The scheduler function calls clearBottom, sets the local entry at the bottom of the WS-Deque to empty. Once this has been completed there is no longer an entry corresponding to the thread, so it cannot be jumped to again unless it is later re-enabled. Since the process has not hard faulted, no thief will ever steal the local entry associated with the thread. 

We then consider what happens if the process running the thread hard faults. If the process hard faults during the user level code, then it will be stolen regularly. Since the popTop function returns the active capsule when stealing a local entry (Line~\ref{line-poptop-return-local}) rather than the entire thread, the thief will start on the first capsule that has not been run to completion rather the beginning of the thread. The thief will also set the local entry that corresponded to the thread to empty during Line~\ref{line-steal-local}, preventing it from being stolen by any other thief. These facts are true for any steal on a process that has hard faulted. We consider two cases for a hard fault during pushBottom: before the CAM at Line~\ref{line-pushBottom-CAM} is run at all, and after it has been run at least once. If the hard fault occurs before the CAM is run then the entry at stack[b] will remain local until it is stolen. During the popTop call when this entry is stolen, the thief will set stack[b+1] to empty during Line~\ref{line-avoid-extra-steal}. Since this occurs before the CAM at Line~\ref{line-steal-local}, it will occur before the top pointer can be changed to stack[b+1]. Since this entry will be set to empty before any popTop calls can see it, it will never be stolen. When the thief restarts the active capsule in pushBottom, the entry at stack[b] will have been set to taken and the entry at stack[b+1] will have been set to empty, so the thief will call pushBottom on its own WS-Deque. That call can be analyzed in the same manner as the original call. If the hard fault occurs after the CAM at Line~\ref{line-pushBottom-CAM} has been run then the entry at stack[b] has been set to job. In this case, the current thread will not be stolen until the top pointer is set to point to stack[b+1]. This entry was set to local by Line~\ref{line-pushBottom-new-local}. When the thief restarts the active capsule, the state of the WS-Deque will cause it to bypass both if clauses and immediately return to the user thread without further modifying the WS-Deque. If the process hard faults during the scheduler function, the hard fault will either occur before clearBottom finishes, in which case it will be stolen and restarted normally, or it will occur after the clearBottom finishes, in which case the entry will be set to empty and therefore never stolen. 

We have now proven that threads are not added to WS-Deques more times than they are enabled and that WS-Deque entries being run multiple times will not cause a user level capsule inside threads associated with those entries to be run to completion multiple times. This completes the proof. 
\end{proof}

Combining the various lemmas gives the following theorem.

\begin{theorem}
The implementation of work stealing provided in Figure \ref{WS-deque-fig} correctly schedules work according to the specification in Section \ref{sec:worksteal}.
\end{theorem}
\begin{proof}
We know from Lemma \ref{lemma-push-job} and Lemma \ref{lemma-pop-job} that every enabled user thread will be scheduled on to an active process. Lemma \ref{lemma-run-popped-top}, Lemma \ref{lemma-run-popped-bottom}, and Lemma \ref{lemma-restart-work} combine to prove that every scheduled thread is run to completion. Lemma \ref{lemma-pop-job} and Lemma \ref{lemma-no-duplicate-work} show that no work is duplicated or re-executed. Since all work is scheduled and run to completion following the computation dependencies, the implementation is correct. 
\end{proof}

\end{appendix}

\end{document}